  \pgfplotsset{compat=newest}
\newtheorem{theorem}{Theorem}
\newtheorem{lemma}[theorem]{Lemma}
\theoremstyle{definition}
\newtheorem{convention}{Convention}
\newtheorem{definition}{Definition}
\newtheorem{mathconjecture}{Math Conjecture}
\newtheorem{physicsconjecture}{Physics Conjecture}
\newtheorem{question}{Question}
\DeclarePairedDelimiter\abs{\lvert}{\rvert} 
\DeclarePairedDelimiter\norm{\lVert}{\rVert} 
\newcommand\tensor{\otimes} 
\newcommand\cc{\widebar} 
\newcommand\R{\mathbb{R}} 
\newcommand\C{\mathbb{C}} 
\renewcommand\d{\partial} 
\DeclareMathOperator\Ric{Ric} 
\DeclareMathOperator\divergence{div} 
\DeclareMathOperator\trace{tr} 
\DeclareSIUnit\solarmass{\ensuremath{M_\Sun}} 
\DeclareSIUnit\year{yr} 
\DeclareSIUnit\lightyear{ly} 
\newcommand\Vtil{\tilde{V}}
\newcommand\RDM{R_{\text{DM}}}
\newcommand\laplacian\Delta
\DeclareMathOperator\Ai{Ai}
\DeclareMathOperator\Bi{Bi}
\newcommand\SF{\mathcal{F}}
\newcommand\hodge{\mathord{\star}}
\DeclarePairedDelimiterX\ip[2]{\langle}{\rangle}{#1,#2} 
\newcommand\omegatrue{\omega_{\text{true}}}
\author{Andrew S. Goetz}
\title{The Einstein-Klein-Gordon Equations, Wave Dark Matter, and the Tully-Fisher Relation}
\begin{document}

\maketitle

\abstract

We examine the Einstein equation coupled to the Klein-Gordon equation for a complex-valued scalar field. These two equations together are known as the Einstein-Klein-Gordon system. In the low-field, non-relativistic limit, the Einstein-Klein-Gordon system reduces to the Poisson-Schr\"odinger system. We describe the simplest solutions of these systems in spherical symmetry, the spherically symmetric static states, and some scaling properties they obey. We also describe some approximate analytic solutions for these states.

The EKG system underlies a theory of wave dark matter, also known as scalar field dark matter (SFDM), boson star dark matter, and Bose-Einstein condensate (BEC) dark matter. We discuss a possible connection between the theory of wave dark matter and the baryonic Tully-Fisher relation, which is a scaling relation observed to hold for disk galaxies in the universe across many decades in mass. We show how fixing boundary conditions at the edge of the spherically symmetric static states implies Tully-Fisher-like relations for the states. We also catalog other ``scaling conditions'' one can impose on the static states and show that they do not lead to Tully-Fisher-like relations---barring one exception which is already known and which has nothing to do with the specifics of wave dark matter.

\dedication{To my parents, Stewart and Carolyn.}

\tableofcontents 
\listoftables	
\listoffigures	
\abbreviations


\section*{Symbols}

\begin{symbollist}
\item[$\C$] Complex number field
\item[$G$] Einstein curvature tensor
\item[$\Lambda$] Cosmological constant
\item[$\R$] Real number field
\item[$R$] Scalar curvature
\item[$\Ric$] Ricci curvature tensor
\item[$T$] Energy-momentum tensor
\end{symbollist}

\section*{Abbreviations}

\begin{symbollist}
\item[BEC] Bose-Einstein condensate
\item[BTFR] Baryonic Tully-Fisher relation
\item[EKG] Einstein-Klein-Gordon
\item[PS] Poisson-Schr\"odinger
\item[SFDM] Scalar field dark matter
\item[TFR] Tully-Fisher relation
\end{symbollist}

\acknowledgements

I thank my advisor, Hubert Bray, for introducing me to general relativity, the mystery of large-scale unexplained curvature, and the Tully-Fisher relation. I also thank him for his guidance in matters mathematical, career-related, and personal, and for his patience with me as I struggled to do research.

I thank the professors in the Duke math department, particularly the other members of my dissertation committee, for their teaching and mentorship. I thank my graduate student friends, in particular Hangjun Xu, Henri Roesch, Tatsunari Watanabe, Ben Gaines, Kevin Kordek, Ioannis Sgouralis, Chris O'Neill, and Anil Venkatesh.

I thank Duke University for supporting me with a stipend for teaching and research, with a James B. Duke fellowship from 2009--2012, and with a graduate student summer research fellowship in 2012.

I thank the other programs and institutions which have supported me during my time at Duke: the University of Tennessee, which funded my trips to the Barrett Memorial Lectures in 2011 and 2013; the Mathematical Science Research Institute (MSRI), which funded my trip to the summer graduate workshop on mathematical relativity in 2012; the Park City Mathematics Institute (PCMI), which funded my trip to the summer program on geometric analysis; and the Erwin Schr\"odinger Institute (ESI), the European Mathematical Society (EMS), the International Association of Mathematical Physics (IAMP), and the National Science Foundation (NSF), which funded my trip to the summer school on mathematical relativity at ESI in 2014.

I thank my friends in Durham, too numerous to name, whose friendship and encouragement played a major role in the completion of my degree.

Finally I thank my parents, Stewart and Carolyn, and my sister, Kathryn, for their love, advice, and support.

%
%
%
\chapter{Introduction} \label{chap:intro}

\section{General Relativity} \label{sec:gr}

\subsection{Introduction to General Relativity} \label{ssec:gr_intro}
General relativity is a physical theory written in the mathematical language of semi-Riemannian geometry. In this dissertation we assume a basic knowledge of semi-Riemannian geometry---see \cite{oneill83}. General relativity models the universe as a $4$-dimensional manifold called a \emph{spacetime} which is equipped with a Lorentzian metric of signature $\mathord{-}\mathord{+}\mathord{+}\mathord{+}$ or $\mathord{+}\mathord{-}\mathord{-}\mathord{-}$. In this dissertation we will use the $\mathord{-}\mathord{+}\mathord{+}\mathord{+}$ convention. The fundamental equation of general relativity is the Einstein equation, which in geometrized units (see \cref{ssec:gr_units}) is
\begin{equation} \label{eq:einstein}
G + \Lambda g = 8\pi T.
\end{equation}
Here $G$ is the Einstein curvature tensor, a $(0,2)$-tensor defined in terms of the metric tensor $g$, the Ricci curvature tensor $\Ric$, and the scalar curvature $R$ as
\begin{equation} \label{eq:einsteincurvature}
G = \Ric - \frac12 Rg.
\end{equation}
The parameter $\Lambda$ is the cosmological constant, whose effect on the evolution of the universe is only seen at very large scales of distance and time. In this dissertation we will only be studying the universe at galactic scales and take $\Lambda=0$ to simplify matters. On the right side of \cref{eq:einstein} we have the energy-momentum tensor $T$, another $(0,2)$-tensor which encapsulates the information about the matter and energy content of the universe. Both $G$ and $T$ are symmetric tensors.

At the highest descriptive level, \cref{eq:einstein} says that the curvature of spacetime is correlated with the matter and energy content of the universe. The presence of matter and energy warps spacetime; conversely, the curvature of spacetime affects the motion of matter and energy. Light rays (i.e., photons) and so-called ``test particles'' (objects with masses small enough that their effect on the curvature tensor is negligible) follow geodesics, which are purely geometrically defined curves. Thus gravity is explained using geometry.

\subsection{Geometrized Units} \label{ssec:gr_units}
It is common in mathematical physics to choose units so that fundamental physical constants have magnitude $1$. For example, by measuring distance in light-years and time in years, the speed of light is
\begin{equation*}
c = \frac{1\text{ light-year}}{1\text{ year}} = \SI[per-mode=symbol]{1}{\lightyear\per\year}.
\end{equation*}
Often once this choice is made the units are omitted and it is said that ``$c=1$''.

In general relativity it is customary to measure mass, time, and distance in units such that $G=c=1$. Here $G$ is the universal gravitational constant, not the Einstein tensor from \cref{eq:einstein}. Units such that $G=c=1$ are called \emph{geometrized units}. Geometrized units are useful because they simplify formulas, but they also lead to strange locutions. For example one can say ``the Sun's radius is $\num{2.3}$ seconds'' or ``the Sun's radius is $\num{9.4e35}$ kilograms'' and not be talking nonsense. The first statement means that light takes $\num{2.3}$ seconds to cross a distance equal to the Sun's radius. An easy way to understand the second statement is to use the formula for the Schwarzschild radius of a black hole of mass $M$:
\begin{equation} \label{eq:schwarzschild_radius}
R_{\text{sch}} = \frac{2GM}{c^2}.
\end{equation}
In geometrized units, \cref{eq:schwarzschild_radius} becomes
\begin{equation}
R_{\text{sch}} = 2M.
\end{equation}
Thus, an easy way to understand ``the Sun's radius is $\num{9.4e35}$ kilograms'' is to restate it as ``the radius of a black hole with a mass of $\num{9.4e35}$ kilograms would be half the Sun's radius''.

\begin{table}[p]
\centering
\caption{Common masses in geometrical units of time or distance.} \label{table:masses}
\begin{tabular}{|l|c|c|c|c|}
\hline
\textbf{Unit} & \textbf{Seconds} & \textbf{Years} & \textbf{Meters} & \textbf{AUs} \\
\hline
\hline
kilogram & $\num{2.48e-36}$ & $\num{7.85e-44}$ & $\num{7.43e-28}$ & $\num{4.96e-39}$ \\
\hline
\hline
\textbf{Astronomical Body} & \textbf{Seconds} & \textbf{Years} & \textbf{Meters} & \textbf{AUs} \\
\hline
\hline
Sun & $\num{4.93e-6}$ & $\num{1.56e-13}$ & $\num{1480}$ & $\num{9.87e-9}$ \\
\hline
Earth & $\num{1.48e-11}$ & $\num{4.69e-19}$ & $\num{.00443}$ & $\num{2.96e-14}$ \\
\hline
Moon & $\num{1.82e-13}$ & $\num{5.77e-21}$ & $\num{5.45e-5}$ & $\num{3.65e-16}$ \\
\hline
Jupiter & $\num{4.70e-9}$ & $\num{1.49e-16}$ & $\num{1.41}$ & $\num{9.42e-12}$ \\
\hline
Cygnus X-1 black hole & $\num{7.4e-5}$ & $\num{2.3e-12}$ & $\num{22000}$ & $\num{1.5e-7}$ \\
\hline
Sag A* black hole & $\num{20}$ & $\num{6e-7}$ & $\num{6e9}$ & $\num{.04}$ \\
\hline
Milky Way & $\num{e7}$ & $\num{e-1}$ & $\num{e15}$ & $\num{e4}$ \\
\hline
\end{tabular}
\end{table}

\begin{table}[p]
\centering
\caption{Common times in geometrical units of distance or mass.} \label{table:times}
\begin{tabular}{|l|c|c|c|c|}
\hline
\textbf{Unit} & \textbf{Meters} & \textbf{AUs} & \textbf{Kilograms} & \textbf{Solar Masses} \\
\hline
\hline
second & $\num{3.00e8}$ & $\num{.00200}$ & $\num{4.04e35}$ & $\num{203000}$ \\
\hline
day & $\num{2.59e13}$ & $\num{173}$ & $\num{3.49e40}$ & $\num{1.75e10}$ \\
\hline
year & $\num{9.46e15}$ & $\num{63200}$ & $\num{1.27e43}$ & $\num{6.41e12}$ \\
\hline
\hline
\textbf{Astronomical Time} & \textbf{Meters} & \textbf{AUs} & \textbf{Kilograms} & \textbf{Solar Masses} \\
\hline
\hline
age of universe & $\num{1.31e26}$ & $\num{8.73e14}$ & $\num{1.76e53}$ & $\num{8.84e22}$ \\
\hline
age of solar system & $\num{4.3e25}$ & $\num{2.9e14}$ & $\num{5.8e52}$ & $\num{2.9e22}$ \\
\hline
\end{tabular}
\end{table}

\begin{table}[p]
\centering
\caption{Common distances in geometrical units of mass or time.} \label{table:distances}
\begin{tabular}{|l|c|c|c|c|}
\hline
\textbf{Unit} & \textbf{Kilograms} & \textbf{Solar Mass} & \textbf{Seconds} & \textbf{Years} \\
\hline
\hline
meter & $\num{1.35e27}$ & $\num{.000677}$ & $\num{3.34e-9}$ & $\num{1.06e-16}$ \\
\hline
AU & $\num{2.01e38}$ & $\num{1.01e8}$ & $\num{499}$ & $\num{1.58e-5}$ \\
\hline
light-year & $\num{1.27e43}$ & $\num{6.41e12}$ & $\num{3.16e7}$ & $\num{1}$ \\
\hline
parsec & $\num{4.16e43}$ & $\num{2.09e13}$ & $\num{1.03e8}$ & $\num{3.26}$ \\
\hline
\hline
\textbf{Astronomical Body} & \textbf{Kilograms} & \textbf{Solar Mass} & \textbf{Seconds} & \textbf{Years} \\
\hline
\hline
Sun (mean radius) & $\num{9.37e35}$ & $\num{471000}$ & $\num{2.32}$ & $\num{7.36e-8}$ \\
\hline
Earth (mean radius) & $\num{8.58e33}$ & $\num{4310}$ & $\num{.0213}$ & $\num{6.73e-10}$ \\
\hline
Moon (mean radius) & $\num{2.34e33}$ & $\num{1180}$ & $\num{.00579}$ & $\num{1.84e-10}$ \\
\hline
Jupiter (mean radius) & $\num{9.41e34}$ & $\num{47300}$ & $\num{.233}$ & $\num{7.39e-9}$ \\
\hline
\end{tabular}
\end{table}

In \cref{table:masses,table:times,table:distances} we have collected some common astronomical masses, times, and distances and list their values in geometrized units. In this dissertation we will adhere to the following convention:
\begin{convention} \label{conv:geometrized_units}
Unless otherwise noted, all formulas will be given in geometrized units and all physical quantities will be given in (light-)years.
\end{convention}

\subsection{General Relativity: A Success Story} \label{ssec:gr_success}
The study of \cref{eq:einstein} has yielded amazing physical and mathematical insights. In 1916 Einstein used general relativity to explain the anomalous precession of Mercury and predicted the correct angle at which light would be deflected by the sun. He also predicted the gravitational redshift of light, which was definitively observed in 1959, the existence of gravitational waves, for which indirect evidence was discovered in 1993, and the possibility of gravitational lensing, which is currently being used extensively to map the distribution of matter in the universe. The Schwarzschild solution, discovered in 1916, suggested the possibility of black holes, whose existence is now accepted.

In addition to these successes in solar system dynamics and astrophysics, general relativity is indispensible in cosmology. The Friedmann-Lema\^itre-Robertson-Walker metric provides the standard model of the universe and includes the Big Bang. Also, the most popular way to explain the accelerating expansion of the universe observed in 1998 is with the cosmological constant $\Lambda$. This often goes under the name of ``dark energy''.

Fundamental theoretical work in geometry contributing to our understanding of the universe includes the positive mass theorem first proved by Schoen and Yau \cite{schoen79,schoen81b} and later Witten \cite{witten81}, the singularity theorems of Hawking and Penrose \cite{penrose65,hawking75}, and the proofs of the Riemannian Penrose inequality by Huisken and Ilmanen \cite{huisken01} and Bray \cite{bray01}.

\section{Large-Scale Unexplained Curvature (Dark Matter)}

\subsection{Evidence for Dark Matter} \label{ssec:evidence}
In 1933, the astronomer Fritz Zwicky measured the velocites (via gravitational redshift) of galaxies belonging to the Coma cluster (see \cref{fig:coma}). The galaxies seemed to be moving too fast; that is, given the total mass of the cluster implied by the amount of visible matter, the gravitational pull on most galaxies should have been nowhere near strong enough to keep the galaxies from flying off into deep space, given their velocities. In fact, a quantitative analysis showed that for the galaxies to remain gravitationally bound, the total mass of the Coma cluster must be at least ten times the mass of the visible matter. Zwicky hypothesized that ``Dunkle Materie''---dark matter---was present in the Coma cluster, but invisible because it did not interact with light---see \cite{zwicky37}.

\begin{figure}[hbt]
\includegraphics[width=\linewidth]{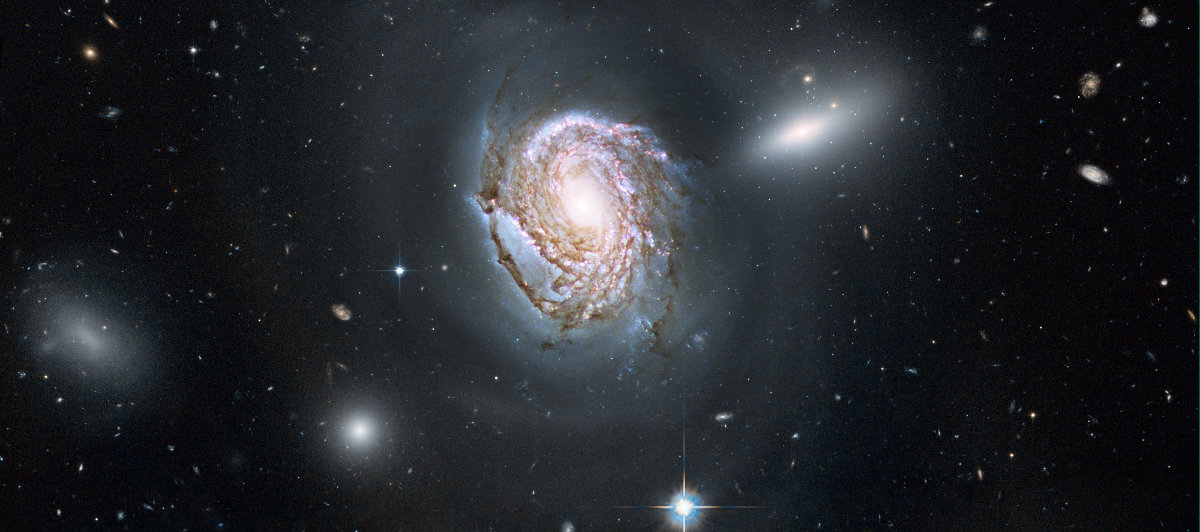}
\caption{Spiral galaxy NGC 4911 in the Coma Cluster. Credit: NASA, ESA, and the Hubble Heritage Team (STScI/AURA).}
\label{fig:coma}
\end{figure}

In the decades since Zwicky's proposal, other evidence has arisen that supports the hypothesis of dark matter. Constraints from Big-Bang nucleosynthesis point to a baryon density only 5\% of the critical density, whereas detailed studies of the anisotropies in the cosmic microwave background (CMB) indicate that the universe is flat (i.e. at critical density). Furthermore the anisotropies in the CMB are so small that without dark matter to ``seed'' large-scale structure growth, the universe could not be as far along in its evolution as it is today. These arguments suggest that most of the matter/energy content of the universe is non-baryonic.

Gravitational lensing studies have also produced evidence for dark matter. Clusters of galaxies can act as giant gravitational lenses and produce multiple images of background galaxies. This is an example of ``strong gravitational lensing''. Strong gravitional lensing can provide an estimate of the mass of the lens, and the evidence consistently points to the mass of clusters being much greater than the visible mass. Studying ``weak gravitation lensing'' is also very popular. One famous study \cite{markevitch04} of the ``bullet cluster'' describes observations of the collision of two clusters of galaxies (see \cref{fig:bullet_cluster}). In clusters, most of the baryons do not lie in stars but in the intracluster medium (ICM) in the form of hot gas. In a collision of clusters, it is only the gas that really collides and is slowed by friction. The stars in the galaxies are much too far apart to collide and pass through each other relatively unaffected. Thus, a collision between clusters separates the ICM from the stars. \Cref{fig:bullet_cluster} shows this separation. In pink is the ICM, and on the right is a very obvious bow shock from which the epithet ``bullet cluster'' derives. In blue are the galaxies with their stars. Remarkably, gravitational lensing reveals that the blue regions contain much more matter than the pink regions, exactly the opposite of what we would expect if all the matter were baryonic. The bullet cluster is usually cited as one of the best pieces of evidence for the existence of large quantites of non-baryonic dark matter which can be separated from baryonic matter.

\begin{figure}[hbt]
\includegraphics[width=\linewidth]{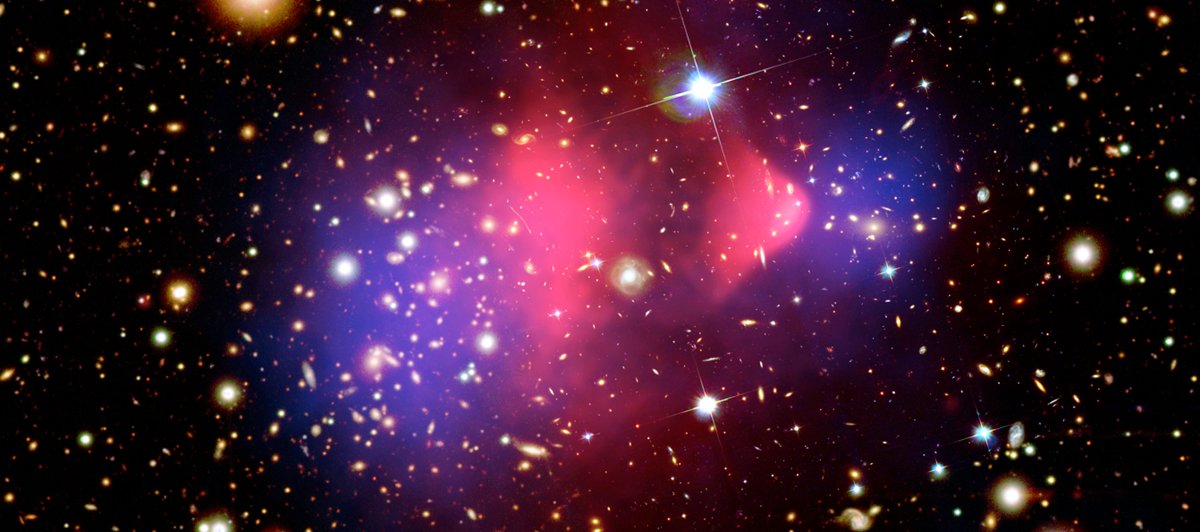}
\caption{The ``bullet cluster''. In the pink regions lies the gas, which contains most of the baryons. In the blue regions lie the stars. Gravitational lensing shows that most of the mass lies in the blue regions. This suggests that most of the mass of clusters is non-baryonic dark matter. Credit: X-ray: NASA/CXC/CfA/M.Markevitch et al.; Optical: NASA/STScI; Magellan/U.Arizona/D.Clowe et al.; Lensing Map: NASA/STScI; ESO WFI; Magellan/U.Arizona/D.Clowe et al.}
\label{fig:bullet_cluster}
\end{figure}

The evidence for dark matter that is most relevant for this dissertation comes from studying disk galaxies. In the 1970s, Vera Rubin and Kent Ford began publishing ``rotation curves'' for disk galaxies which plotted the rotational velocity of stars, gas, and dust in a galaxy as a function of radius---see \cite{rubin70,rubin80}. If all matter were baryonic, rotation curves would decline significantly with increasing radius, and stars at the outer edges of a galaxy would orbit much more slowly than stars close to the center. Instead, most rotation curves are roughly flat, indicating that most baryonic matter in a disk galaxy orbits at the same speed. The standard explanation for this phenomenon (but see \cref{ssec:mond} below) is that each disk galaxy is embedded in a large spherical ``halo'' of dark matter which dominates the baryonic matter by a factor of 5 or more and which flattens out the rotation curve.

\subsection{Interpreting the Evidence} \label{ssec:interpretation}
We have described many of the types of observations that lead most experts to believe in the existence of dark matter. It is important to understand that dark matter has never been directly observed. All astronomical observations ultimately originate with the act of gathering electromagnetic radiation with telescopes. Because dark matter is, almost by definition, matter which does not interact with light, it could never be observed astronomically, only inferred. For this reason, arguments about dark matter which rely mainly on astronomical observations are especially ``theory-laden''. For example, at the most basic level all the arguments from the previous section assume that the theory of general relativity and its low-field, non-relativistic limit (Newtonian gravity) are basically correct. If this assumption is not justified, then the reason our observations don't make sense is that we just don't understand gravity, and the arguments for dark matter fall flat. Some physicists have pursued this line of thought and proposed various modifications of general relativity and Newtonian gravity. The most popular is Modified Newtonian Dynamics, or MOND. We briefly discuss MOND in \cref{ssec:mond}.

Even if general relativity and Newtonian gravity are basically correct, there is still a problem of interpretation of the data. In general relativity, gravity is a manifestation of the curvature of spacetime. The mysteries which the dark matter hypothesis is supposed to solve are gravitational mysteries, or, we could just as easily say, \emph{curvature} mysteries: we observe curvature of spacetime on large scales for which we are unable to account. Whether this anomalous curvature is due to some kind of particle which we could detect with sufficiently advanced technology or is a purely geometrical phenomenon is an open question.

As an analogy, consider the cosmological constant term $\Lambda g$ which appears in Einstein's equation \eqref{eq:einstein}. We have the freedom to place it on the left or right side of the equals sign without changing the mathematics. On the one hand if it is on the right, then it contributes to the energy-momentum tensor $T$ which contains the information about the matter and energy content of the universe. Thus we are tempted talk of ``dark energy'' which permeates spacetime. On the other hand if it is on the left, then we can incorporate it in the Einstein curvature tensor and we are tempted to talk of a cosmological constant, a purely geometrical phenomenon which affects how matter and energy curve spacetime.

Similarly, the large-scale unexplained curvature we observe might be due to large quantities of dark matter particles which are curving spacetime. Or, the curvature might be more embedded in the laws of physics than physical stuff embedded in the universe.

\subsection{Particle Theories of Dark Matter}
The most popular theories of dark matter today postulate a dark matter particle in the $\Lambda$CDM cosmological paradigm. $\Lambda$ is the cosmological constant and the acronym CDM stands for ``cold dark matter''. Cold dark matter is dark matter which travels at non-relativistic speeds, as opposed to hot dark matter which travels at relativistic speeds. It is generally agreed that hot dark matter is not viable because it does not lead to the type of structure formation necessary to create our universe. Cold dark matter is by far the most popular type of dark matter under consideration. The most popular particle candidates which are cold are WIMPs.

Weakly Interacting Massive Particles (WIMPs) interact with themselves and other particles via the gravitational force and the weak force. The inspiration for WIMPs comes from particle physics. Some simple extensions of the Standard Model involving supersymmetry predict WIMP-like particles, and it is a natural idea to identify these hypothetical particles with the mysterious dark matter. However, the theory of WIMPs does have its problems. The odds against supersymmetry are increasing as the Large Hadron Collider continues to operate without discovering evidence for it, so the main theoretical structure supporting WIMPs is not looking strong. Also, numerical simulations of WIMPs in galaxies continue to support the notion that there should be ``cusps'' of WIMPs at the centers of galaxies---that is, most every numerical simulation produces a distribution of WIMPs in which the density at the galactic center is unbounded. Astronomical observations which attempt to map the distribution of dark matter in actual galaxies do not observe cusps but ``cores''---the density of dark matter levels off at the center. This is known as the cusp/core problem. Finally, direct detection experiments for WIMPs have been operating on Earth for decades and have never reported any unambiguously positive results.

There are many, many hypothetical particles which theorists have put forward as candidates for dark matter. For a good introduction to the possibilities we recommend \cite{krauss00,nicolson07}.

\section{Wave Dark Matter} \label{sec:wdm}
\subsection{Dark Matter from Geometry}
The theories of dark matter described in the previous section draw on particle physics for their inspiration. But since the dark matter mystery is fundamentally a gravitational mystery, and given the numerous successes of general relativity as a gravitational theory (\cref{ssec:gr_success}), perhaps we should turn to general relativity for inspiration. Moreover, if we take seriously the key insight of general relativity that gravity is geometry, then perhaps we should turn to geometry for inspiration. In 2013 Hubert Bray published a paper \cite{bray13b} in which he proposed to examine what happens if we do not make the usual assumption that the connection of the universe is the Levi-Civita connection. He found that the deviation of a spacetime connection from the Levi-Civita connection could be described using a scalar function and that this function would obey the Klein-Gordon wave equation. The full equations of the theory are the coupled Einstein-Klein-Gordon equations
\begin{subequations} \label{eq:ekg}
\begin{align}
G + \Lambda g &= 8\pi\left( \frac{df\tensor d\bar{f}+d\bar{f}\tensor df}{\Upsilon^2} - \left(\frac{\abs{df}^2}{\Upsilon^2}+\abs{f}^2\right)g\right) \label{eq:ekg1} \\
\Box f &= \Upsilon^2 f. \label{eq:ekg2}
\end{align}
\end{subequations}
The reader will note that \cref{eq:ekg1} is just the Einstein equation \eqref{eq:einstein} with a particular form for the energy-momentum tensor. \Cref{eq:ekg2} is the Klein-Gordon equation and $f$ is the scalar function which describes the deviation of the connection from the Levi-Civita connection. The parameter $\Upsilon$ is a fundamental constant of the theory. \Cref{chap:ekgps} of this dissertation disusses some results concerning the Einstein-Klein-Gordon system in spherical symmetry.

Bray proposed that the scalar function $f$ should be identified with dark matter and sought to model dark matter in spiral galaxies. He  obtained intriguing results with respect to the growth of spiral structure. In a subsequent paper \cite{bray12} he obtained results indicating that the wave dark matter is consistent with the existence of ``shells'' in elliptical galaxies. In \cite{bray13a,parrythesis}, Alan Parry examined wave dark matter in the context of dwarf spheroidal galaxies. In this dissertation we propose to investigate wave dark matter in the context of disk galaxies and its prospects for accounting for an astronomical phenomenon called the baryonic Tully-Fisher relation. We introduce the baryonic Tully-Fisher relation in \cref{sec:btfr}.

\Cref{eq:ekg1,eq:ekg2} (along with Maxwell's equations for electromagnetism) can be derived using a variational principle from the following action:
\begin{equation} \label{eq:ekgmaction}
\SF(g,f,A) = \int \bigg[R-2\Lambda -16\pi\bigg(\frac{\abs{df}^2}{\Upsilon^2}+\abs{f}^2+\frac14\abs{dA}^2\bigg)\bigg]\,dV.
\end{equation}
Here $A$ is a one-form, the electromagnetic potential. In \cref{app:action} we use a variational principle to derive the Einstein-Klein-Gordon-Maxwell equations. The action \eqref{eq:ekgmaction} is remarkably simple, and, if the theory of wave dark matter is correct, describes most of the matter and energy content of the universe for most of its history. For in its earliest times the universe was dominated by electromagnetic radiation, at later times by dark matter, and today is dominated by dark energy.

\subsection{Wave Dark Matter as a Particle Theory}
Although Bray motivated the theory of wave dark matter using a geometrical perspective from general relativity, the theory has been investigated before under other names such as scalar field dark matter \cite{guzman00,guzman04,magana12,suarez14,sahni00,khlopov85} and boson star dark matter or Bose-Einstein condensate (BEC) dark matter \cite{seidel90,sin94,ji94,lai07,matos07,sharma08,urenalopez10}. Results concerning axionic dark matter are also relevant as the axion is also a scalar field particle. Under all these other names the motivation is a more conventional one from particle physics. But the underlying equations---the Einstein-Klein-Gordon equations \eqref{eq:ekg1} and \eqref{eq:ekg2}---are the same, so which motivation we prefer is irrelevant for the theory's predictions. In a particle physics approach, the relationship between the fundamental constant $\Upsilon$ and the particle mass $m$ is
\begin{equation}
m = \frac{\hbar\Upsilon}{c} = \SI{2.09e-23}{eV}\left(\frac{\Upsilon}{\SI{1}{\lightyear^{-1}}}\right).
\end{equation}
If wave dark matter turns out to be the correct theory of dark matter, the question of which approach corresponds more closely to reality might remain ambiguous. The only way we foresee this question being resolved is the direct detection of the dark matter particle.

\section{The Baryonic Tully-Fisher Relation} \label{sec:btfr}
\Cref{chap:tullyfisher} of this dissertation discusses the relevance of wave dark matter for an empirical astrophysical relation concerning disk galaxies known as the baryonic Tully-Fisher relation. In this section we introduce this relation.

\subsection{Observational Astronomy}
We first need to introduce a few basic concepts from observational astronomy.

The \emph{luminosity} of a galaxy is a measure of the amount of energy emitted by the galaxy per unit time. It is correlated with brightness, but brightness usually refers to the galaxy's output in the visible wavelengths, whereas luminosity accounts for the galaxy's output at all wavelengths. It is a distance-independent quantity. The \emph{baryonic mass} of a galaxy is the total mass of the galaxy's baryons, i.e. the ordinary matter of the periodic table. Measuring/estimating luminosity and baryonic mass is a very complicated business which we will not attempt to describe. Note that both concepts suffer from being rather ill-defined because galaxies do not have sharp edges. Different conventions for how to define where a galaxy ends can lead to wildly different values for these parameters in the literature.

\emph{Emission lines} are sharp peaks in the galaxy's electromagnetic spectrum which correspond to various transitions between electron energy levels in atoms. Their presence or absence can reveal a lot about the galaxy's composition. For example, in neutral hydrogen atoms there is a ``hyperfine'' transition which occurs very rarely. When it occurs, a photon is emitted with a wavelength of $\mathord{\sim}\SI{21}{\centi\meter}$ at a frequency of $\mathord{\sim}\SI{1420}{\mega\hertz}$. A source whose spectrum has a $\SI{21}{\centi\meter}$ line contains large amounts of neutral hydrogen (``large amounts'' because the transition is so rare). The spectra of galaxies always contain the neutral hydrogen line because galaxies contain huge ``H~I'' regions of neutral hydrogen gas. This line is particularly useful in observational astronomy because the microwave wavelength penetrates cosmic dust whereas visible wavelengths do not.

Emission lines are subject to the Doppler effect. If a source is moving towards us its spectrum is shifted towards shorter wavelengths/higher frequencies and if it is moving away from us its spectrum is shifted towards longer wavelengths/lower frequencies. These shifts are most commonly called ``blueshift'' and ``redshift'', respectively, because in the visual spectrum the color blue lies on the low wavelength end and the color red on the high wavelength end. A disk galaxy's spectrum is the composite of the spectra of its stars, gas, and dust. If we are observing it at an angle, then because it is rotating the components will have slightly different Doppler shifts. This causes \emph{Doppler line broadening} of the emission lines. The width of the emission lines then provides some measure of the average rotational velocity of the galaxy. Galaxies which rotate faster have broader emission lines.

The \emph{rotation curve} of a disk galaxy gives the rotational velocity of the galaxy as a function of radius. We have already introduced this concept in the discussion of the evidence for dark matter in \cref{ssec:evidence}. Most disk galaxies have rotation curves which are roughly flat. Obtaining a galaxy's rotation curve involves analyzing the Doppler shifts in the galaxy's spectrum at different radii.

\subsection{The Baryonic Tully-Fisher Relation}
In a seminal paper \cite{tully77}, Richard Tully and James Fisher reported that for disk galaxies luminosity $L$ seemed to be related to the width of the \SI{21}{\centi\meter} line $w$ by a power law:
\begin{equation*}
L\varpropto w^x.
\end{equation*}
Because the width of the \SI{21}{\centi\meter} line is a proxy for a galaxy's rotational velocity, qualitatively the Tully-Fisher relation says that brighter galaxies rotate faster. This makes sense because a galaxy's brightness is correlated with its mass---more mass means more stars and more light, and also stronger gravity and faster rotation.

McGaugh proposed in \cite{mcgaugh00} an update to the Tully-Fisher relation. He noted that many smaller galaxies (low surface brightness galaxies, or LSBs) fell below the Tully-Fisher line in luminosity vs. line width log-log space because much of their mass was in the form of dim gas and dust instead of luminous stars. The Tully-Fisher relation, while linear for large galaxies, broke down for these LSBs. See \cref{fig:btfr}. The problem was that the luminosity measure was effectively counting only the stellar mass and ignoring the mass of gas and dust---and rotational velocity depends on mass irrespective of its form. McGaugh noted that when the mass of gas and dust was included and total baryonic mass was plotted vs. rotational velocity, a linear relationship was restored over the entire range of galactic masses. Again, see \cref{fig:btfr}. The baryonic Tully-Fisher relation has the form
\begin{equation} \label{eq:btfr}
M_b\varpropto v^x
\end{equation}
where $M_b$ is baryonic mass and $v$ is rotational velocity. For galaxies whose rotation curves can be resolved, $v$ can be the maximum observed velocity, or the average velocity, or the velocity at a fixed radius. Since the rotation curves are close to flat all of these choices are effectively equivalent. In studies where rotation curves are not available $v$ is obtained via measuring the neutral hydrogen line width.

\begin{figure}[p]
\centering
\includegraphics[width=\textwidth]{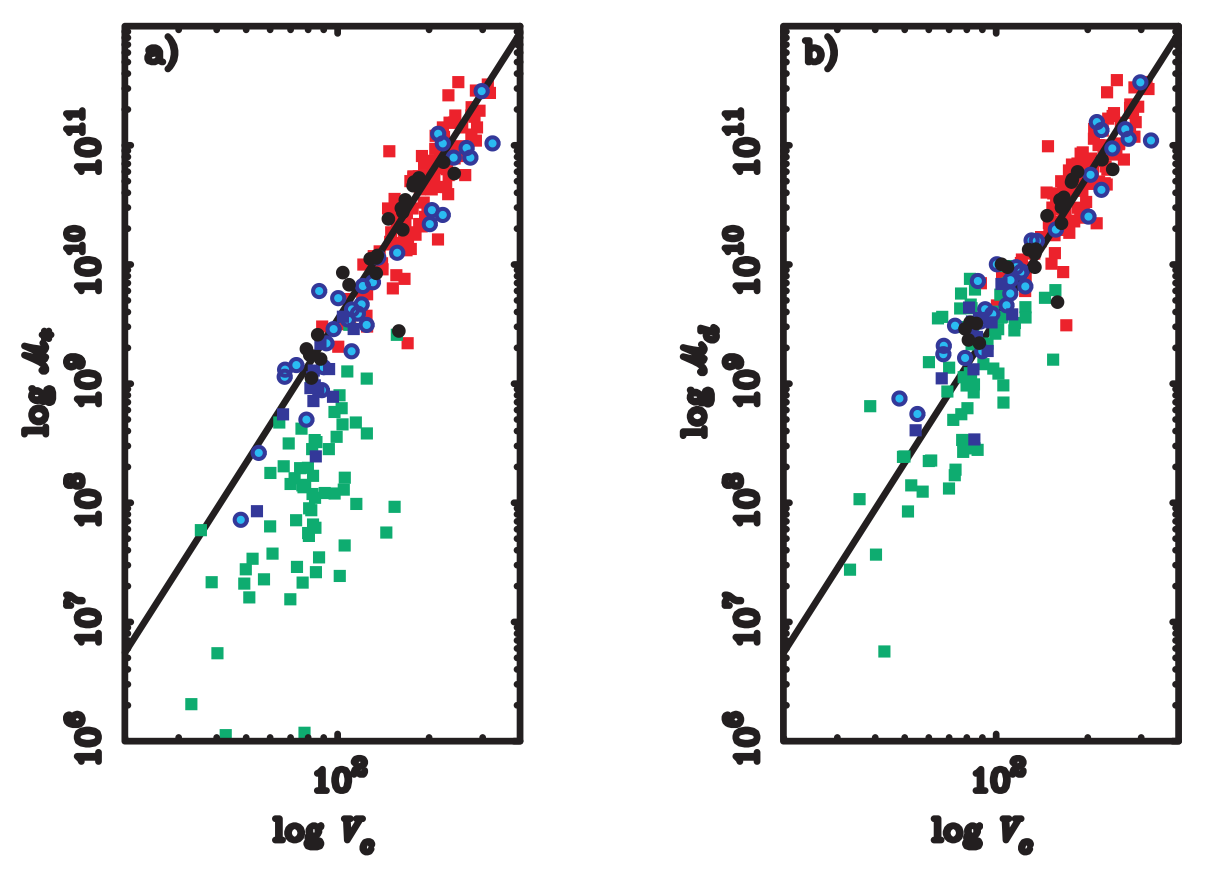}
\caption{This graphic---see \cite{mcgaugh00}---illustrates (a) the original Tully-Fisher relation and (b) the baryonic Tully-Fisher relation. Each point represents a galaxy. Squares represent galaxies whose rotational velocities have been estimated using the width of the \SI{21}{\centi\meter} line and circles represent galaxies whose rotational velocities have been estimated from resolved rotation curves. The left graph (a) plots stellar mass (which tracks luminosity) vs. rotational velocity, whereas the right graph (b) plots total baryonic mass (stellar, gas, and dust) vs. rotational velocity. Galaxies whose masses are gas-dominated are shown in green; they do not follow the original Tully-Fisher relation, but do follow the baryonic Tully-Fisher relation. The black line has slope~4. Reproduced from Figure 1 in \emph{McGaugh, S. S., Schombert, J. M., Bothun, G. D., \& de Blok, W. J. G. 2000, ApJ, 533, L99}. \textcopyright AAS. Reproduced with permission.}
\label{fig:btfr}
\end{figure}

The slope $x$ of the baryonic Tully-Fisher relation has been the subject of many investigations (see \cite{bell01,verheijen01,gurovich04,mcgaugh05,pfenniger05,geha06,begum08,stark09,trachternach09,gurovich10}). The literature seems to agree that $3\leq x\leq 4$ with numbers closer to $4$ being favored, but beyond that there is little agreement. Most of the uncertainty seems to stem from disagreement about how to estimate $M_b$. In addition, there is a danger of bias because a slope of 3 is thought to be more consonant with the standard $\Lambda$CDM cosmology whereas a slope of 4 is consonant with Modified Newtonian Dynamics, the most popular alternative to the dark matter hypothesis (see \cref{ssec:mond}).

\subsection{Modified Newtonian Dynamics (MOND)} \label{ssec:mond}
Recall the discussion in \cref{ssec:interpretation} in which we pointed out that in trying to account for the confusing observations described in \cref{ssec:evidence}, we can either seek a solution within the paradigm of general relativity and Newtonian gravity or look for a new theory of gravity. The most popular proposed solution of the latter type is known as Modified Newtonian Dynamics or MOND for short \cite{milgrom83,famaey12}. MOND, while it has other issues, can claim to explain the flat rotation curves of spiral galaxies and the baryonic Tully-Fisher relation. Indeed, it was designed for this purpose. In essence, whereas the combination of Newton's second law and law of gravity gives an acceleration due to gravity
\begin{equation*}
a = \frac{GM}{r^2},
\end{equation*}
MOND postulates
\begin{equation*}
a = \frac{\sqrt{GMa_0}}{r},
\end{equation*}
for an acceleration $a$ much less than a threshold acceleration $a_0$. The inclusion of the threshold acceleration is to leave solar system dynamics virtually unchanged. One immediately sees that for circular motion where $a=v^2/r$, we get $v=(GMa_0)^{1/4}$ (velocity independent of radius) and $M/v^4=(Ga_0)^{-1}$ (a Tully-Fisher relation), which seems promising. However, the theory has its own conflicts with data. One of the most problematic is that although MOND was created to get rid of the missing mass problem in galaxies, it has a missing mass problem at the level of clusters \cite{gerbal92,sanders99}. Even more problematic is the ``bullet cluster'' described in \cref{ssec:evidence}, whose existence seems to demonstrate that dark matter exists in large quantities and can be separated from baryonic matter. MOND remains a minority viewpoint among astrophysicists. Even so, there is still the important question of why it works so well for disk galaxies.

\subsection{Dark Matter and the Baryonic Tully-Fisher Relation} \label{sec:dmbtfr}
We have introduced the baryonic Tully-Fisher relation because it may have relevance for discriminating between theories of dark matter. Dark matter comprises the bulk of a galaxy's mass, is responsible for the flatness of the rotation curves, and is the dominant factor determining the $v$ in the relation \eqref{eq:btfr}. However the mass that appears in the relation is the baryonic mass, not the total mass or the dark mass. Thus the baryonic Tully-Fisher relation gives a quantitative link between the dark and the baryonic mass. A successful theory of dark matter needs to be consistent with the existence of the relation. We will have more to say on this subject in \cref{chap:tullyfisher} after we have described our results.

\chapter{The Einstein-Klein-Gordon and Poisson-Schr\"odinger Systems in Spherical Symmetry} \label{chap:ekgps}

In this chapter we study the Einstein-Klein-Gordon system in spherical symmetry from a mathematical perspective. In the next chapter we will consider the relevance of our results for theoretical astrophysics and the problem of dark matter.

\section{Introduction}
The Einstein-Klein-Gordon system couples the Einstein equation \eqref{eq:einstein} together with the Klein-Gordon equation for a complex-valued function. The full system is
\begin{subequations} \label{eqr:ekg}
\begin{align}
G+\Lambda g &= 8\pi\left( \frac{df\tensor d\bar{f}+d\bar{f}\tensor df}{\Upsilon^2} - \left(\frac{\abs{df}^2}{\Upsilon^2}+\abs{f}^2\right)g\right) \label{eqr:ekg1} \\
\Box f &= \Upsilon^2 f. \label{eqr:ekg2}
\end{align}
\end{subequations}
Here $\Upsilon$ is a positive constant. This system can be obtained via a variational principle from the action
\begin{equation*}
\SF(g,f) = \int R-2\Lambda-16\pi\left(\frac{\abs{df}^2}{\Upsilon^2}+\abs{f}^2\right)\,dV.
\end{equation*}
Requiring the metric $g$ and the scalar function $f$ to be critical points of the action $\SF$ leads to \cref{eqr:ekg1,eqr:ekg2}. See \cref{app:action}.

A full solution to the system \eqref{eqr:ekg} consists of a spacetime $N$ with a Lorentzian metric $g$ and a scalar function function $f:N\to\C$ such that \cref{eqr:ekg1,eqr:ekg2} are satisfied. We are interested in studying spherically symmetric static solutions to the Einstein-Klein-Gordon system where $N$ has the topology of $\R^4$. As an ansatz for the metric $g$ we take
\begin{equation} \label{eq:metric}
g = -e^{2V(r)}\,dt^2 + \left(1-\frac{2M(r)}{r}\right)^{-1}\,dr^2 + r^2\,d\theta^2 + r^2\sin^2\theta\,d\phi^2.
\end{equation}
Here $M(r)$ and $V(r)$ are functions of the coordinate $r$ only and $r^2\,d\theta^2 + r^2\sin^2\theta\,d\phi^2$ is the standard area form on the coordinate sphere of radius $r$. The reason for choosing this particular ansatz, with the functions $M(r)$ and $V(r)$, is that these two functions have natural physical interpretations---in the Newtonian limit, $M(r)$ will turn out to be the total mass contained inside the sphere of radius $r$ and $V(r)$ will turn out to be the Newtonian gravitational potential. For clarity we define
\begin{equation} \label{eq:Phi}
\Phi(r) = 1-\frac{2M(r)}{r}
\end{equation}
so that the metric \eqref{eq:metric} can be written as
\begin{equation} \label{eq:metric_Phi}
g = -e^{2V(r)}\,dt^2 + \Phi(r)^{-1}\,dr^2 + r^2\,d\theta^2 + r^2\sin^2\theta\,d\phi^2.
\end{equation}

\section{The Spherically Symmetric Static States}
The scalar function $f$ must be spherically symmetric, a function of $t$ and $r$ only. We make the additional ansatz that it can be written in the form
\begin{equation} \label{eq:fstaticstate}
f(t,r) = F(r)e^{i\omega t}
\end{equation}
for some real function $F(r)$ and real number $\omega$. A solution in which $f$ has the form \eqref{eq:fstaticstate} is called a \emph{static state}. In \cref{app:odederivation} we show that with these ans\"atze, solving the Einstein-Klein-Gordon system reduces to solving the following system of coupled ordinary differential equations:
\begin{subequations} \label{eq:ekgode}
\begin{gather}
M_r = 4\pi r^2\cdot\frac{1}{\Upsilon^2}\left[ \left(\Upsilon^2+\omega^2e^{-2V}\right)F^2 + \Phi F_r^2\right]\label{eq:ekgodeM} \\
\Phi V_r = \frac{M}{r^2} - 4\pi r\cdot\frac{1}{\Upsilon^2}\left[ \left(\Upsilon^2-\omega^2 e^{-2V}\right)F^2 - \Phi F_r^2 \right] \label{eq:ekgodeV} \\
F_{rr} + \frac{2}{r} F_r + V_rF_r + \frac12\frac{\Phi_r}{\Phi}F_r = \Phi^{-1}\left(\Upsilon^2-\omega^2e^{-2V}\right)F. \label{eq:ekgodeF}
\end{gather}
\end{subequations}
Note that the dependence on $t$ has disappeared, which is why solutions to these ODEs are called static states. Our notation for a solution will be
\begin{equation*}
(\omega;M,V,F)
\end{equation*}
because to specify a solution we need to specify the frequency $\omega$ appearing in \cref{eq:fstaticstate} and the three functions $M$, $V$, and $F$. Solutions can be found by numerical integration using a computer.

For initial conditions we take
\begin{align}
M(0)&=0 \label{eq:Minit}\\
V(0)&=V_0 \label{eq:Vinit}\\
F(0)&=F_0>0 \label{eq:Finit}\\
F_r(0)&=0. \label{eq:Frinit}
\end{align}
\Cref{eq:Minit} must hold for smoothness of the solution at $r=0$; the function $\Phi(r)$ cannot become infinite as $r\to 0$. We do not place any restriction on $V_0$ in \cref{eq:Vinit}---but see \cref{conv:Vinf} below. In \cref{eq:Finit} we require that $F_0$ be positive; this is because if $(\omega;M,V,F)$ is a solution, so is $(\omega;M,V,-F)$, and thus we might as well take $F(0)\geq 0$. We exclude $F(0)=0$ because this leads to the trivial solution. Finally, \cref{eq:Frinit} is another requirement for smoothness at $r=0$ because of the spherical symmetry.

Note that we have the freedom to add an arbitrary constant $\Vtil$ to the potential function $V(r)$. Looking at the form of the ODEs \eqref{eq:ekgode}, we see that if $(\omega;M,V,F)$ is a solution, then so is $(\omega e^{\Vtil};M,V+\Vtil,F)$. This corresponds to just a rescaling of the $t$ coordinate by a factor of $e^{\Vtil}$ in the spacetime metric \eqref{eq:metric_Phi}. Thus, adding a constant to $V(r)$ amounts to a change of coordinates which does not affect the solution.

For brevity we make the following definitions:
\begin{definition}
\begin{align}
M_\infty = \lim_{r\to\infty} M(r) \\
V_\infty = \lim_{r\to\infty} V(r).
\end{align}
\end{definition}

Numerical experimentation with values for $\omega$, $V_0$, $F_0$ reveals that almost all solutions blow up as $r\to\infty$; that is, $M_\infty=V_\infty=+\infty$ and $\lim_{r\to\infty} F(r)=\pm\infty$. We are not interested in these solutions that blow up because they cannot represent anything physical. We are interested in the rare solutions where $M_\infty$ and $V_\infty$ are finite and $\lim_{r\to\infty} F(r)=0$. In \cref{chap:tullyfisher} we will use these special solutions to model dark matter in disk galaxies. In a moment we will begin to describe the characteristics of these solutions, but first we introduce the following convention:

\begin{convention} \label{conv:Vinf}
For a solution $(\omega;M,V,F)$ with $V_\infty$ finite, we will assume that
\begin{equation} \label{eq:Vinf}
V_\infty = 0.
\end{equation}
\end{convention}
Any solution with $V_\infty$ finite can always be made to satisfy \cref{eq:Vinf} by performing the transformation mentioned above regarding the function $V(r)$; this will involve changing the value of $\omega$, but as we said above, this is really just a change of coordinates. There are at least two good reasons for using this convention: (1) The metric \eqref{eq:metric_Phi} is then asymptotic to Minkowski spacetime at infinity. (2) Since the physical interpretation of $V(r)$ is as the Newtonian potential, this convention agrees with the common convention in physics to have the potential equal zero at infinity.

With \cref{conv:Vinf} in place we now describe the special solutions just mentioned. (From now on when we will use ``static state'' to refer to one of these special solutions.) The static states come in the form of \emph{ground states} and \emph{excited states}. In \cref{fig:n0123}
\begin{figure}[p]
\centering
\hrule
$n=0$\includegraphics[width=0.9\textwidth]{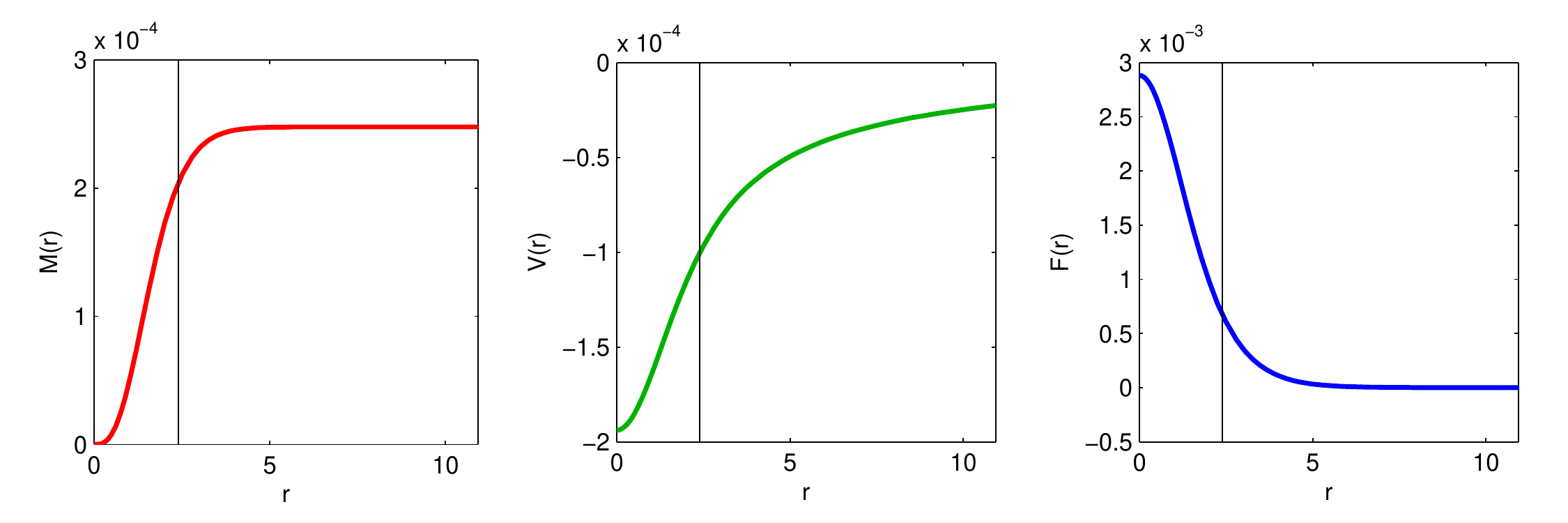}
\vspace{1pt}
\hrule
$n=1$\includegraphics[width=0.9\textwidth]{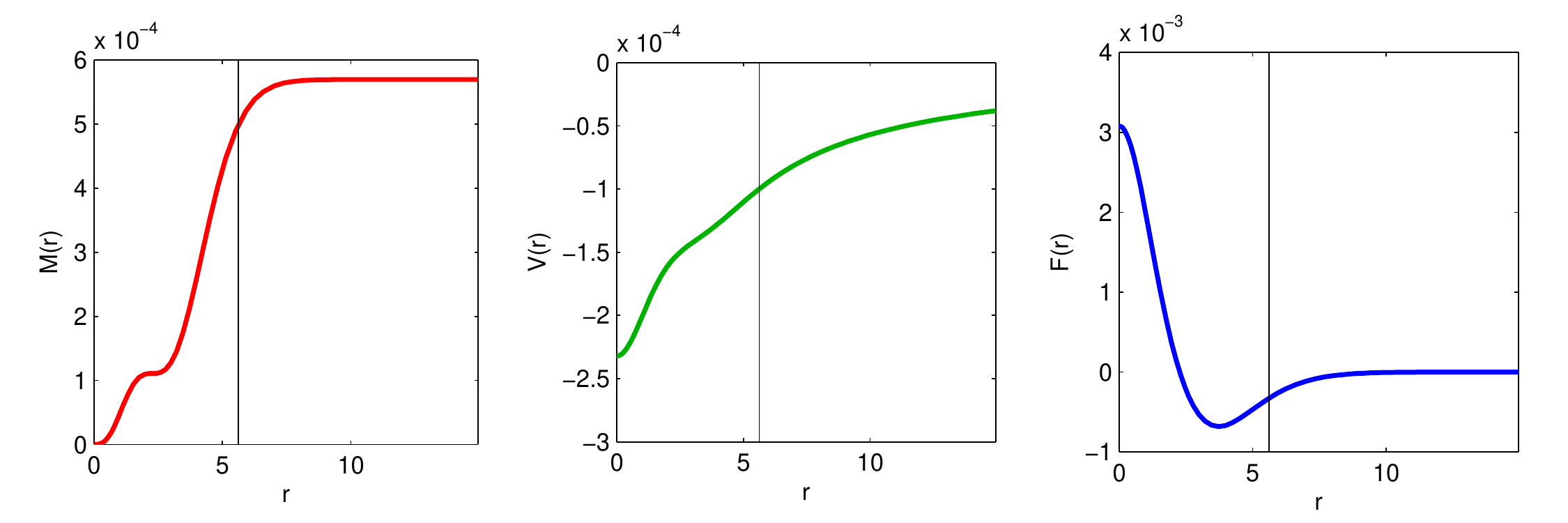}
\vspace{1pt}
\hrule
$n=2$\includegraphics[width=0.9\textwidth]{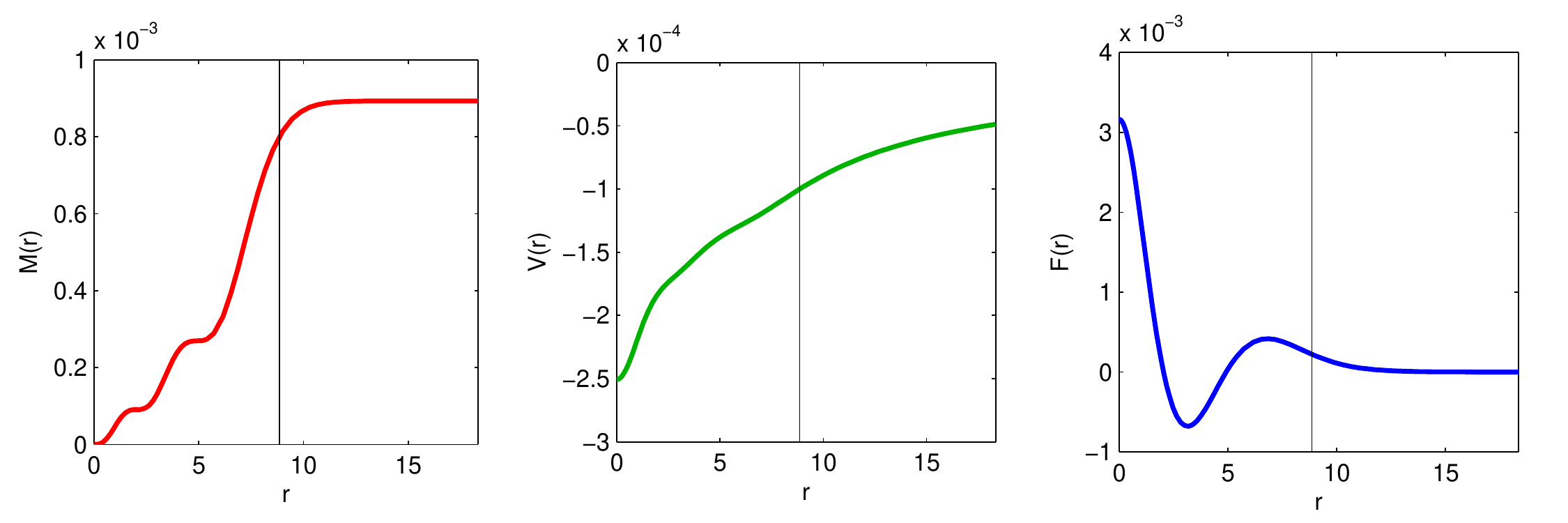}
\vspace{1pt}
\hrule
$n=3$\includegraphics[width=0.9\textwidth]{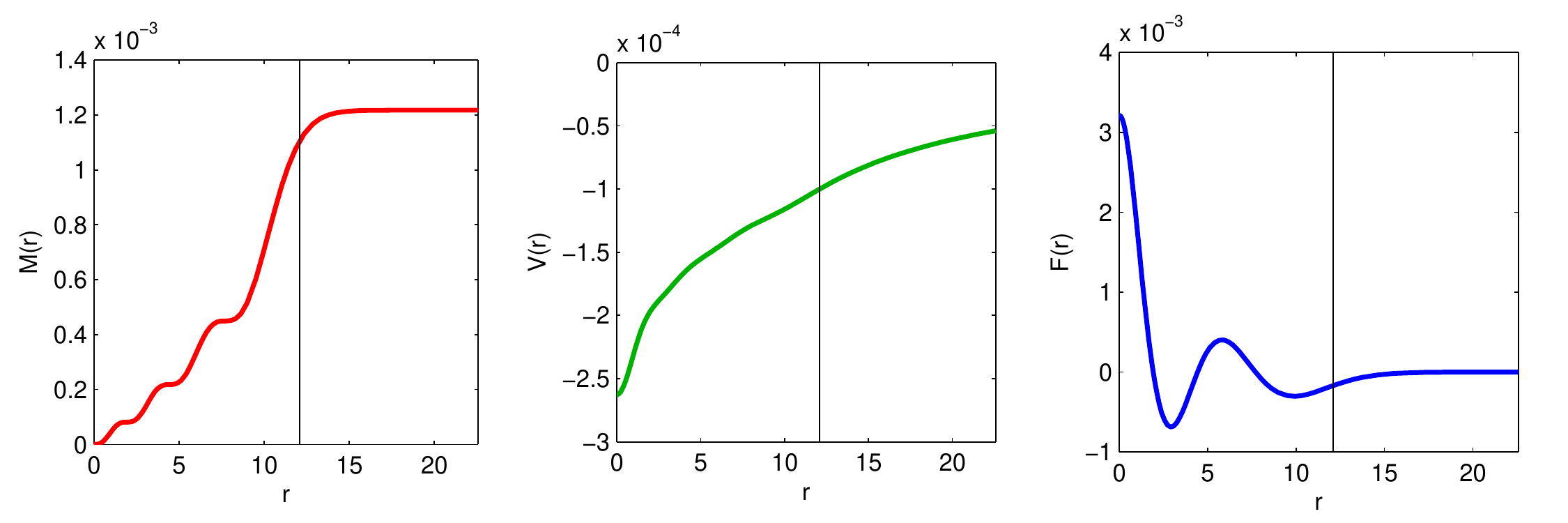}
\vspace{1pt}
\hrule
\caption{The ground state (top row) and first three excited states (second, third, and fourth rows) with $\Upsilon=100$ and $\omega=99.99$. The vertical black line in each plot marks the location of $\RDM$.}
\label{fig:n0123}
\end{figure}
we have graphed a ground state and first three excited states with $\Upsilon=100$ and $\omega=99.9$. For fixed $\Upsilon$ and $\omega$, there are countably many solutions corresponding to $n=0,1,2\ldots$ where $n$ is the number of zeros (or ``nodes'') of $F$. The static states live in a 3-parameter space, where two of the parameters are continuous and one is discrete (for example, $\Upsilon$, $\omega$, and $n$---but there are other ways of parametrizing the static states.)

\section{The Poisson-Schr\"odinger System} \label{sec:ps}
To gain more insight into the static states, we consider the system \eqref{eq:ekgode} and make the following approximations:
\begin{gather}
e^{2V}\approx 1, \qquad \Phi\approx 1, \qquad \frac{V_r}{\Upsilon\norm{V}_\infty}\approx 0, \qquad \frac{\Phi_r}{\Upsilon\norm{V}_\infty}\approx 0 \label{eq:minkowskiapprox} \\
\frac{\omega}{\Upsilon} \approx 1, \qquad \frac{F_r}{\Upsilon\norm{F}_\infty}\approx 0. \label{eq:lowspeedapprox}
\end{gather}
The approximations in \eqref{eq:minkowskiapprox} may be interpreted as saying that the metric \eqref{eq:metric_Phi} is close to the Minkowski metric (the low-field limit), and the approximations in \eqref{eq:lowspeedapprox} may be interpreted as saying that the group velocities of wave dark matter are much less than the speed of light (the non-relativistic limit). Applying all these approximations to the system \eqref{eq:ekgode} leads to the system
\begin{subequations} \label{eq:psode}
\begin{align}
M_r &= 4\pi r^2\cdot 2F^2 \label{eq:psodeM} \\
V_r &= \frac{M}{r^2} \label{eq:psodeV} \\
\frac{1}{2\Upsilon}\left(F_{rr} + \frac{2}{r} F_r\right) &= (\Upsilon-\omega+\Upsilon V)F. \label{eq:psodeF}
\end{align}
\end{subequations}
This system is the Poisson-Schr\"odinger system written in spherical symmetry. Here we have a special case of the well-known fact that the Poisson-Schr\"odinger system is the non-relativistic limit of the Einstein-Klein-Gordon system---see \cite{giulini12}. Also see \cref{fig:lowfieldlimit}. Thus, to understand the system \eqref{eq:ekgode} in the low-field, non-relativistic limit it suffices to understand the system \eqref{eq:psode}.

\Cref{eq:psodeM,eq:psodeV} immediately lead to the physical interpretations of $M(r)$ as the total mass contained inside a sphere of radius $r$ and $V(r)$ as the Newtonian gravitational potential. The quantity $2F^2$ represents the mass-energy density $\rho$ of the static state. Combining \cref{eq:psodeM,eq:psodeV} gives the Poisson equation $\laplacian V=4\pi\rho$ in spherical symmetry. For in spherical coordinates, the Laplacian of $V$ is given by $V_{rr}+(2/r)V_r$; thus
\begin{equation*}
\laplacian_r V = \left(\frac{M}{r^2}\right)_r + \frac{2}{r}\cdot\frac{M}{r^2} = \frac{M_r}{r^2} = 4\pi\cdot 2F^2 = 4\pi\rho.
\end{equation*}
The third equation \eqref{eq:psodeF} is the most interesting. It is the Schr\"odinger half of the Poisson-Schr\"odinger system. On the left side we recognize the Laplacian of $F$. Let us consider for a moment the differential equation
\begin{equation} \label{eq:ekgodeosc}
F_{rr} + \frac{2}{r}F_r = k F.
\end{equation}
If we let $h(r)=rF(r)$, then the corresponding differential equation for $h$ is $h_{rr}=kh$. Thus the general solution to \cref{eq:ekgodeosc} is
\begin{equation} \label{eq:ekgFsol}
F(r) = \begin{cases} A\frac{e^{\sqrt{k}r}}{r} + B\frac{e^{-\sqrt{k}r}}{r} & k>0 \\ A + \frac{B}{r} & k=0 \\ A\frac{\sin(\sqrt{-k}r)}{r} + B\frac{\cos(\sqrt{-k}r)}{r}& k<0 \end{cases}
\end{equation}
Thus, on any interval where the expression $k(r)=2\Upsilon(\Upsilon-\omega+\Upsilon V(r))$ is roughly constant, the solution to \cref{eq:psodeF} will look like one of the three solutions in \cref{eq:ekgFsol}. More explicitly, when $k(r)$ is negative, $F$ will exhibit oscillatory behavior, and when $k(r)$ is positive, $F$ will exhibit exponential behavior. Since $V$ is an increasing function, $k(r)$ increases with $r$. Remembering that one of our requirements on a static state solution $(\omega;M,V,F)$ is that $M_\infty<\infty$, looking at \cref{eq:psodeM} we see that $F$ must exponentially decay after a certain point. Since we are also requiring $V_\infty=0$, we must take $\omega<\Upsilon$ so that $\lim_{r\to\infty} k(r) = 2\Upsilon(\Upsilon-\omega)>0$. On the other hand, $k(r)$ cannot be positive (i.e., $F$ cannot have exponential behavior) for all $r$; looking at \cref{eq:ekgFsol}, we see that this would be in contradiction with the requirements from spherical symmetry that $M_r(0)=F_r(0)=0$. Thus, the only situation consistent with our requirements is when $0<\omega<\Upsilon$, such that $k(r)$ begins negative and limits to $2\Upsilon(\Upsilon-\omega)$. The corresponding behavior of $F$ is to start out oscillating and then to switch over to exponential behavior. The switch occurs at the point where $k(r)=0$; we label this point $\RDM$ (see \cref{fig:RDM}) in anticipation of the role it will play in \cref{chap:tullyfisher} as the effective radius of a dark matter halo.

\begin{definition} \label{def:rdm}
Given a static state $(\omega;M,V,F)$, we define $\RDM$ to be the radius at which the function $F$ switches from oscillatory to exponential behavior. See \cref{fig:RDM}.
\end{definition}

The previous discussion concerned the Poisson-Schr\"odinger system, but a similar discussion applies for the Einstein-Klein-Gordon system, in which case the expression which controls the oscillatory/exponential behavior of $F$ is $k(r)=\Upsilon^2-\omega^2e^{-2V}$. Thus for the Einstein-Klein-Gordon system, $\RDM$ satisfies $\Upsilon^2-\omega^2 e^{-2V(\RDM)} = 0$, which implies
\begin{equation} \label{eq:ekgrdmdef}
\omega e^{-V(\RDM)}=\Upsilon. \qquad\text{(EKG)}
\end{equation}
For the Poisson-Schr\"odinger system, $\RDM$ satisfies $\Upsilon-\omega+V(\RDM)=0$, which implies
\begin{equation} \label{eq:psrdmdef}
V(\RDM)=\omega-\Upsilon. \qquad\text{(PS)}
\end{equation}

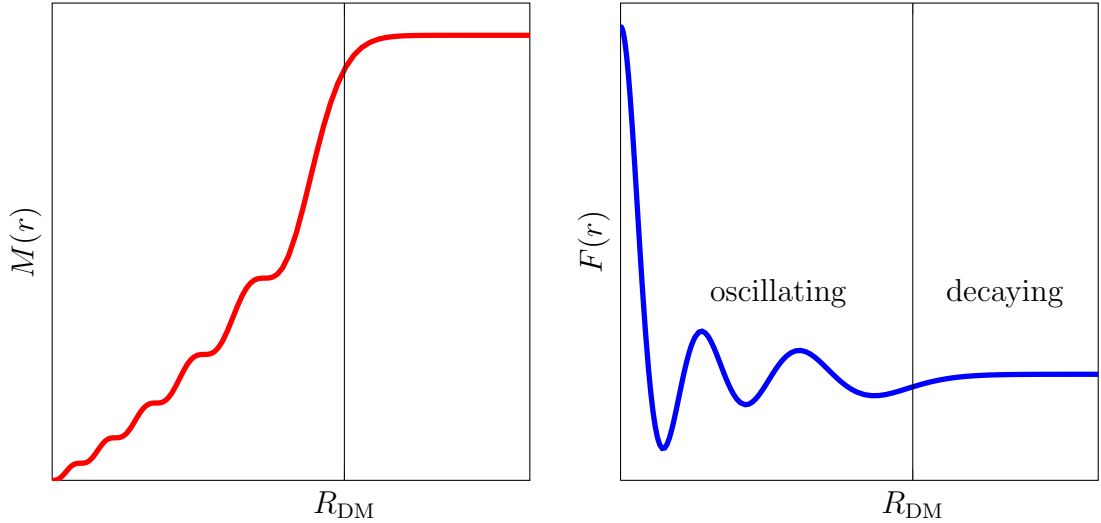
\begin{figure}[tb]
\begin{minipage}{0.5\textwidth}
\centering
%
%
\begin{tikzpicture}

\begin{axis}[%
width=2.5in,
height=2.5in,
scale only axis,
xmin=0,
xmax=30.3671705142261,
xtick={18.5777082618857},
xticklabels={{$\RDM$}},
ymin=0,
ymax=0.002,
ytick={\empty},
ylabel={$M(r)$}
]
\addplot [color=red,solid,line width=2.0pt,forget plot]
  table[row sep=crcr]{%
0	0\\
0.0332356029814985	3.30393565761331e-09\\
0.0663421225062712	2.62159629016282e-08\\
0.10895010289847	1.15495094428698e-07\\
0.162720510647507	3.80784511390699e-07\\
0.227456179238181	1.02147402656692e-06\\
0.303635119169971	2.36100618381545e-06\\
0.392444218060836	4.87929229198223e-06\\
0.496509417141059	9.25769395723457e-06\\
0.61385252606422	1.59721894745456e-05\\
0.745613582304799	2.52892154203276e-05\\
0.899943700679824	3.74080152110055e-05\\
1.06715083195142	5.01033272625052e-05\\
1.23394732163963	6.04437057683893e-05\\
1.40679660101579	6.75241658663642e-05\\
1.58779317224012	7.09746464473203e-05\\
1.77698477022245	7.16977277663693e-05\\
1.93263515320482	7.18414577944492e-05\\
2.05855625240157	7.27987194258511e-05\\
2.19705178572113	7.56201243794328e-05\\
2.34489060807404	8.13453893264509e-05\\
2.49748483291217	9.0436781456343e-05\\
2.65124315064379	0.00010250950703349\\
2.80395156120509	0.000116489068122393\\
2.95481121804899	0.000131015029432632\\
3.10475265069429	0.000144849554418745\\
3.25369441013513	0.000156848014718459\\
3.40160877229764	0.000166237351433433\\
3.54833422671333	0.00017270766451209\\
3.69337957138205	0.000176432932759527\\
3.83566103092266	0.000178017136128541\\
3.97298214009595	0.000178355422067279\\
4.10048272783418	0.000178398752716061\\
4.2151311967687	0.000178839243171304\\
4.35837208841213	0.000180707454745647\\
4.52543204319736	0.000185701517968503\\
4.72199769153315	0.000196372701529952\\
4.97929366236527	0.000218059715639265\\
5.18207053143531	0.000239564777279029\\
5.38484740050535	0.00026233804426298\\
5.57244461508394	0.000282180720327505\\
5.7525659440315	0.000298341747158436\\
5.92764994265099	0.000310285963519315\\
6.09841743904801	0.000318033087302064\\
6.26465113237884	0.00032217713130623\\
6.42517397557998	0.000323754499652774\\
6.57730262068593	0.000324013994018763\\
6.71473645990409	0.00032409928348552\\
6.84589888915889	0.000324798378288252\\
7.00342056069896	0.000327335438981547\\
7.1829085752079	0.00033353632032721\\
7.38584553116686	0.000345737080811963\\
7.61980367396072	0.000366813902711448\\
7.92153565891401	0.000402721005451255\\
8.15964669564255	0.000434250113150965\\
8.39775773237109	0.000464581372801489\\
8.61845219343352	0.000488704533935465\\
8.82978686583184	0.000506462248678552\\
9.03453542231297	0.000518079897431494\\
9.23324206291973	0.000524437495892202\\
9.4250027683414	0.000526988559089664\\
9.60710571603504	0.000527483089842252\\
9.77301490517159	0.000527562849038937\\
9.92836471369826	0.000528374790315957\\
10.1115273063333	0.000531477711167144\\
10.3207857380744	0.000539322868169832\\
10.5565194885201	0.000555061969829905\\
10.8244021987855	0.000582430906586004\\
11.1455729293125	0.00062681108839906\\
11.4671092558459	0.000678876893624165\\
11.7886455823793	0.000731527503146344\\
12.0705025206859	0.00077268170178589\\
12.338241105806	0.000804081305770549\\
12.5989469574928	0.000825969791145792\\
12.854635630129	0.000839168433823304\\
13.1052722855428	0.000845432627698092\\
13.348739191266	0.000847247874936332\\
13.5791433493141	0.000847374515086316\\
13.7950699688249	0.000848166278891185\\
14.0336093147311	0.000852051525788519\\
14.3161821124093	0.000863611630140755\\
14.643636001985	0.000889757859568889\\
15.0279392334862	0.000940553664913736\\
15.5165374339546	0.00103538475504352\\
15.9676658662819	0.00114578079081227\\
16.4187942986091	0.00126709788542637\\
16.8367280257115	0.00137941838143683\\
17.2538906475726	0.00148380072632171\\
17.6849671573728	0.00157822287086016\\
18.146893201513	0.00166127222215547\\
18.6785447941452	0.00173351486559996\\
19.1851829139022	0.00178186046122615\\
19.6918210336593	0.00181434294761119\\
20.2112583885931	0.00183558089556507\\
20.755594772506	0.0018489306179238\\
21.2158585153598	0.00185560565227938\\
21.6761222582137	0.001859620133222\\
22.109941561054	0.0018618689839798\\
22.5252820119604	0.00186315496909781\\
22.9341137685184	0.00186390880140458\\
23.342398600116	0.00186435165916902\\
23.7541779439895	0.00186460979386899\\
24.1726540956997	0.00186475810062806\\
24.6006628908652	0.00186484164785438\\
25.0409308878338	0.00186488757667135\\
25.4962468002845	0.00186491209773101\\
25.9695999236272	0.00186492474660279\\
26.4643124905998	0.00186493101358671\\
26.9841833346068	0.00186493397452749\\
27.5336562722658	0.00186493529572617\\
28.1180218348194	0.00186493584411651\\
28.7436423216585	0.00186493604951054\\
29.4181105183013	0.00186493611332514\\
30.1499232862591	0.00186493612496156\\
30.3671705142261	0.001864936125092\\
};
\addplot [color=black,solid,forget plot]
  table[row sep=crcr]{%
18.5777082618857	0\\
18.5777082618857	0.002\\
};
\end{axis}
\end{tikzpicture}%
\end{minipage}%
\begin{minipage}{0.5\textwidth}
\centering
%
%
\begin{tikzpicture}

\begin{axis}[%
width=2.5in,
height=2.5in,
scale only axis,
scaled y ticks = false,
xmin=0,
xmax=30.3671705142261,
xtick={18.5777082618857},
xticklabels={{$\RDM$}},
ymin=-0.001,
ymax=0.0035,
ytick={\empty},
ylabel={$F(r)$}
]
\node at (axis cs:5,.001) [anchor=north west] {oscillating};
\node at (axis cs:20,.001) [anchor=north west] {decaying};
\addplot [color=blue,solid,line width=2.0pt,forget plot]
  table[row sep=crcr]{%
0	0.00327856487224789\\
0.0332356029814985	0.00327640824431574\\
0.0663421225062712	0.00326997910264732\\
0.10895010289847	0.00325545319208777\\
0.162720510647507	0.00322720156146554\\
0.227456179238181	0.003178842126406\\
0.303635119169971	0.00310265385826968\\
0.392444218060836	0.00298918456916862\\
0.496509417141059	0.00282575753991805\\
0.61385252606422	0.00260779857541694\\
0.745613582304799	0.00232964863364224\\
0.899943700679824	0.00197349213154801\\
1.06715083195142	0.0015699729164322\\
1.23394732163963	0.00116859788730362\\
1.40679660101579	0.00077167684364812\\
1.58779317224012	0.000392351548955964\\
1.77698477022245	4.83517400467923e-05\\
1.93263515320482	-0.000188164775833143\\
2.05855625240157	-0.000346666068655715\\
2.19705178572113	-0.000486324935310273\\
2.34489060807404	-0.000595627695978931\\
2.49748483291217	-0.00066690365728107\\
2.65124315064379	-0.0006988490546968\\
2.80395156120509	-0.000694804001126813\\
2.95481121804899	-0.000660337441606461\\
3.10475265069429	-0.000601033917502111\\
3.25369441013513	-0.00052254953609107\\
3.40160877229764	-0.000430405690492823\\
3.54833422671333	-0.000329896622374977\\
3.69337957138205	-0.000226096991086832\\
3.83566103092266	-0.000123900739513789\\
3.97298214009595	-2.81306427406701e-05\\
4.10048272783418	5.5910416356983e-05\\
4.2151311967687	0.00012592521826009\\
4.35837208841213	0.000204298424187268\\
4.52543204319736	0.000280894837513526\\
4.72199769153315	0.000348366423204584\\
4.97929366236527	0.000397805061246115\\
5.18207053143531	0.00040628428004793\\
5.38484740050535	0.000390116652237687\\
5.57244461508394	0.000356050379776085\\
5.7525659440315	0.000309127421290063\\
5.92764994265099	0.000253228785382282\\
6.09841743904801	0.000191872827812056\\
6.26465113237884	0.000128357555193946\\
6.42517397557998	6.58275466677929e-05\\
6.57730262068593	7.38902516045775e-06\\
6.71473645990409	-4.33581500137987e-05\\
6.84589888915889	-8.89810706836057e-05\\
7.00342056069896	-0.000139027052873765\\
7.1829085752079	-0.000188398356059791\\
7.38584553116686	-0.000232915361417326\\
7.61980367396072	-0.000267979322995201\\
7.92153565891401	-0.000286745639360501\\
8.15964669564255	-0.000281370177235227\\
8.39775773237109	-0.000260078630576165\\
8.61845219343352	-0.000228322000035125\\
8.82978686583184	-0.000189324135974881\\
9.03453542231297	-0.000145687965011081\\
9.23324206291973	-9.98271436410379e-05\\
9.4250027683414	-5.40469546929078e-05\\
9.60710571603504	-1.06349855942986e-05\\
9.77301490517159	2.78046279609392e-05\\
9.92836471369826	6.20732019312365e-05\\
10.1115273063333	9.95082541712363e-05\\
10.3207857380744	0.000137355486904156\\
10.5565194885201	0.000172585777976788\\
10.8244021987855	0.000202016659915702\\
11.1455729293125	0.000221672408790372\\
11.4671092558459	0.000224565381654044\\
11.7886455823793	0.000212116157420825\\
12.0705025206859	0.000190421728865263\\
12.338241105806	0.000162332586457953\\
12.5989469574928	0.000129727187139395\\
12.854635630129	9.43524497458497e-05\\
13.1052722855428	5.78894137146767e-05\\
13.348739191266	2.20361702356143e-05\\
13.5791433493141	-1.13021412233971e-05\\
13.7950699688249	-4.13081578973614e-05\\
14.0336093147311	-7.23963825933512e-05\\
14.3161821124093	-0.000105650258721645\\
14.643636001985	-0.000138466675150176\\
15.0279392334862	-0.000168337916890567\\
15.5165374339546	-0.000192492714086825\\
15.9676658662819	-0.000201856265052708\\
16.4187942986091	-0.000200543761929997\\
16.8367280257115	-0.0001917688082581\\
17.2538906475726	-0.000177758074850968\\
17.6849671573728	-0.000159762567152237\\
18.146893201513	-0.000138560416556115\\
18.6785447941452	-0.000113968813037258\\
19.1851829139022	-9.20251408925479e-05\\
19.6918210336593	-7.25482676913559e-05\\
20.2112583885931	-5.5600600841758e-05\\
20.755594772506	-4.11684399107916e-05\\
21.2158585153598	-3.14326031793821e-05\\
21.6761222582137	-2.36849342881517e-05\\
22.109941561054	-1.79381304568496e-05\\
22.5252820119604	-1.36181737803495e-05\\
22.9341137685184	-1.02958798407498e-05\\
23.342398600116	-7.72594176034108e-06\\
23.7541779439895	-5.73995521297925e-06\\
24.1726540956997	-4.21302349880298e-06\\
24.6006628908652	-3.04850490246409e-06\\
25.0409308878338	-2.1697955094716e-06\\
25.4962468002845	-1.51536398983384e-06\\
25.9695999236272	-1.03548313496879e-06\\
26.4643124905998	-6.89925688412711e-07\\
26.9841833346068	-4.46243317496398e-07\\
27.5336562722658	-2.78411956794146e-07\\
28.1180218348194	-1.65704856690997e-07\\
28.7436423216585	-9.16808061506311e-08\\
29.4181105183013	-4.31454735408111e-08\\
30.1499232862591	-8.80357845674182e-09\\
30.3671705142261	-1.64549371683279e-10\\
};
\addplot [color=black,solid,forget plot]
  table[row sep=crcr]{%
18.5777082618857	-0.001\\
18.5777082618857	0.0035\\
};
\end{axis}
\end{tikzpicture}%
\end{minipage}
\caption{A typical fifth excited state ($n=5$) demonstrating the location of $\RDM$ (see \cref{def:rdm}). We have omitted the plot of the potential $V(r)$. To the left of $\RDM$, $F(r)$ exhibits oscillatory behavior. To the right of $\RDM$, $F(r)$ exhibits exponentially decaying behavior.}
\label{fig:RDM}
\end{figure}

Solving for one of the Einstein-Klein-Gordon spherically symmetric static states on the computer is conceptually easy but computationally intensive. If we fix $\Upsilon$ and $\omega$, then to solve for the static state of order $n$ we proceed as follows: Fix a value for $V_0$ consistent with $k(0)<0$ and then begin varying $F_0$. Increasing $F_0$ means that we get more oscillations before reaching $\RDM$ and decreasing $F_0$ means fewer oscillations. Thus to get a ground state or a particular excited state we can only consider values for $F_0$ in a particular range. In that range, there is only one value of $F_0$ such that $F(r)$ exponentially decays for all $r>\RDM$. For all other values of $F_0$, the solution includes the exponential growth term from \cref{eq:ekgFsol} which dominates as $r\to\infty$, and we have agreed to ignore these solutions. Once we have found (to within the desired accuracy) the correct value of $F_0$, we then consider the fact that we probably do not have $V_\infty=0$. We then vary $V_0$ and begin again. By varying $V_0$ in an outer loop and $F_0$ in an inner loop, we can find a solution $(\omega;M,V,F)$ of order $n$.

Next we consider the effect of changing $\omega$. Recall that physically, $\Upsilon$ is to be regarded as a fundamental constant of nature. On the other hand, $\omega$ is a parameter we can vary. Once we have chosen $\omega$, there is then a unique ground state, first excited state, second excited state, etc. We find experimentally using the exact system \eqref{eq:ekgode} that for values of $\omega$ close to $\Upsilon$ we are in the low-field, non-relativistic limit, but as $\omega$ decreases, the mass of the state increases, $\RDM$ decreases, and we leave the low-field, non-relativistic limit. This is illustrated in \cref{fig:lowfieldlimit} for the ground state. We repeat again that the static states live in a 3-parameter space, where two of the parameters (e.g. $\Upsilon,\omega$) are continuous and one (e.g. $n$) is discrete.

\begin{figure}[p]
\centering
\hrule
$\omega=99$\includegraphics[width=0.89\textwidth]{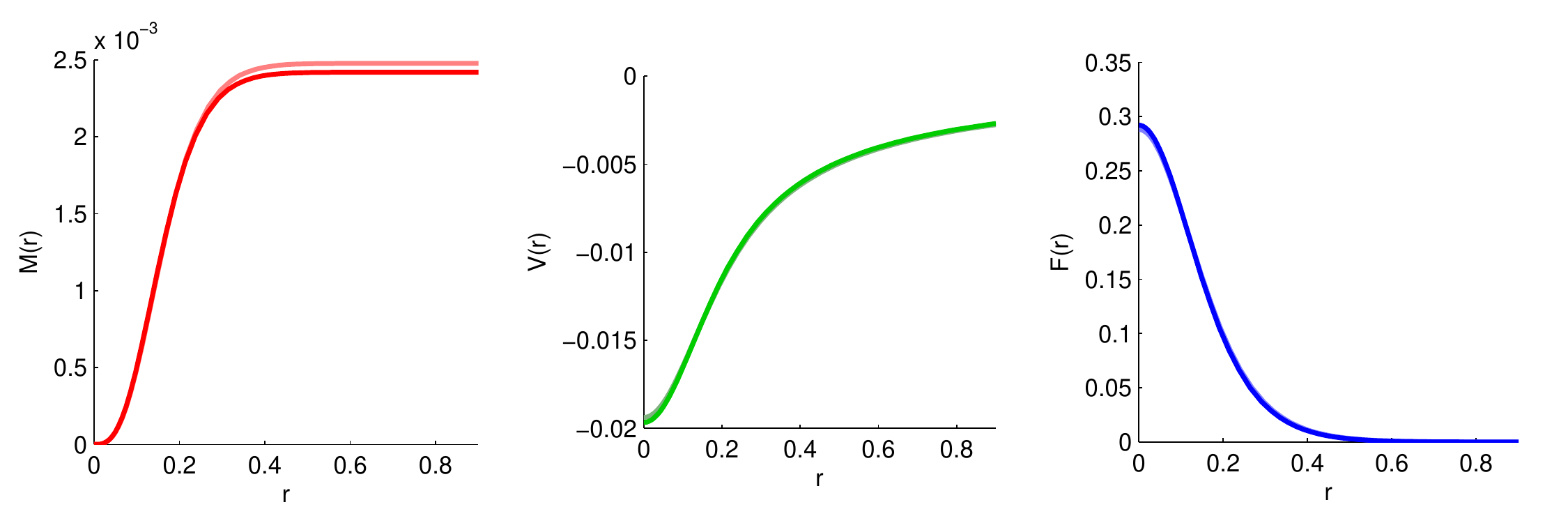}
\vspace{1pt}
\hrule
$\omega=95$\includegraphics[width=0.89\textwidth]{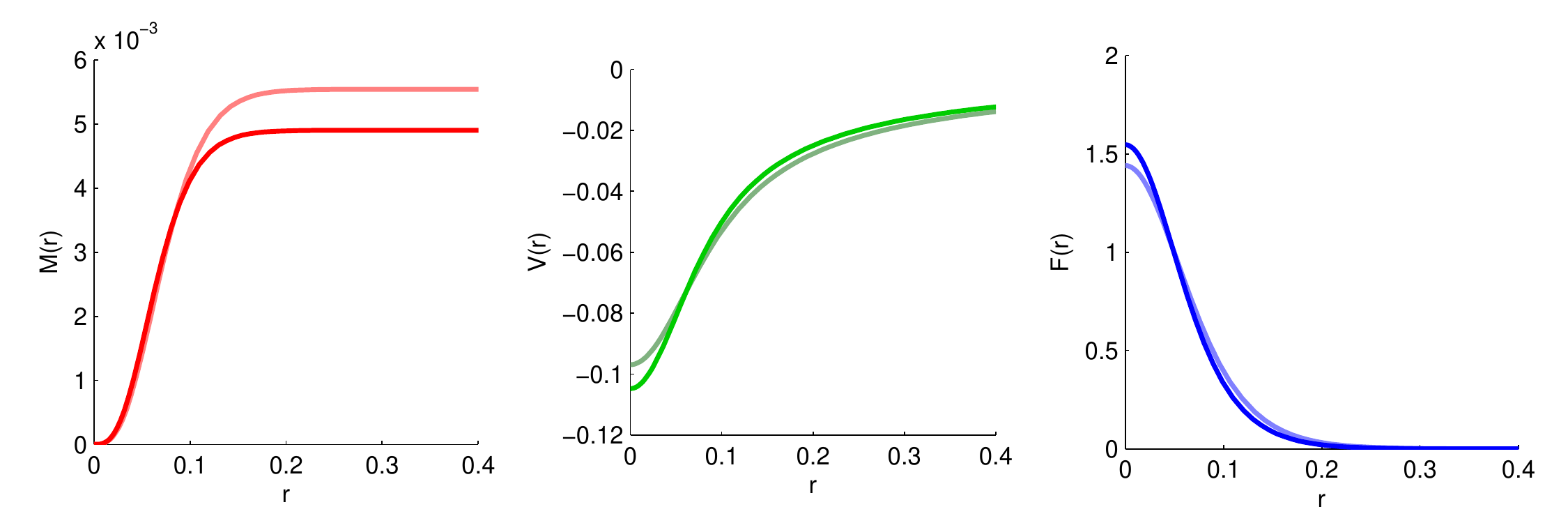}
\vspace{1pt}
\hrule
$\omega=90$\includegraphics[width=0.89\textwidth]{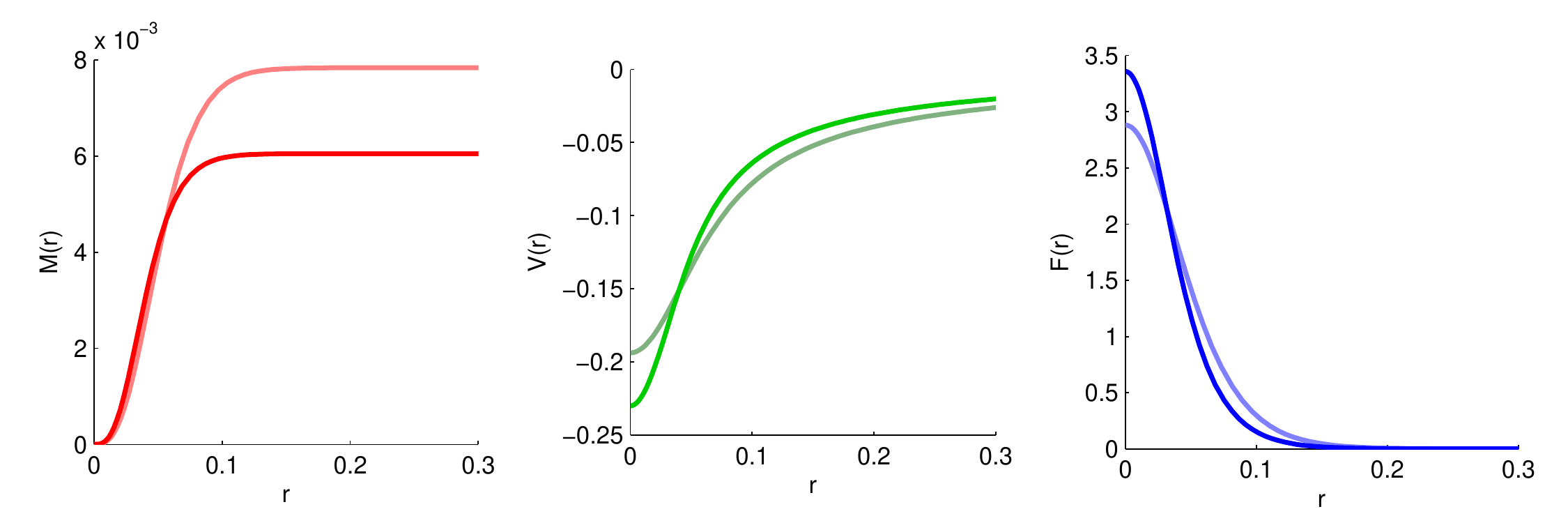}
\vspace{1pt}
\hrule
$\omega=85$\includegraphics[width=0.89\textwidth]{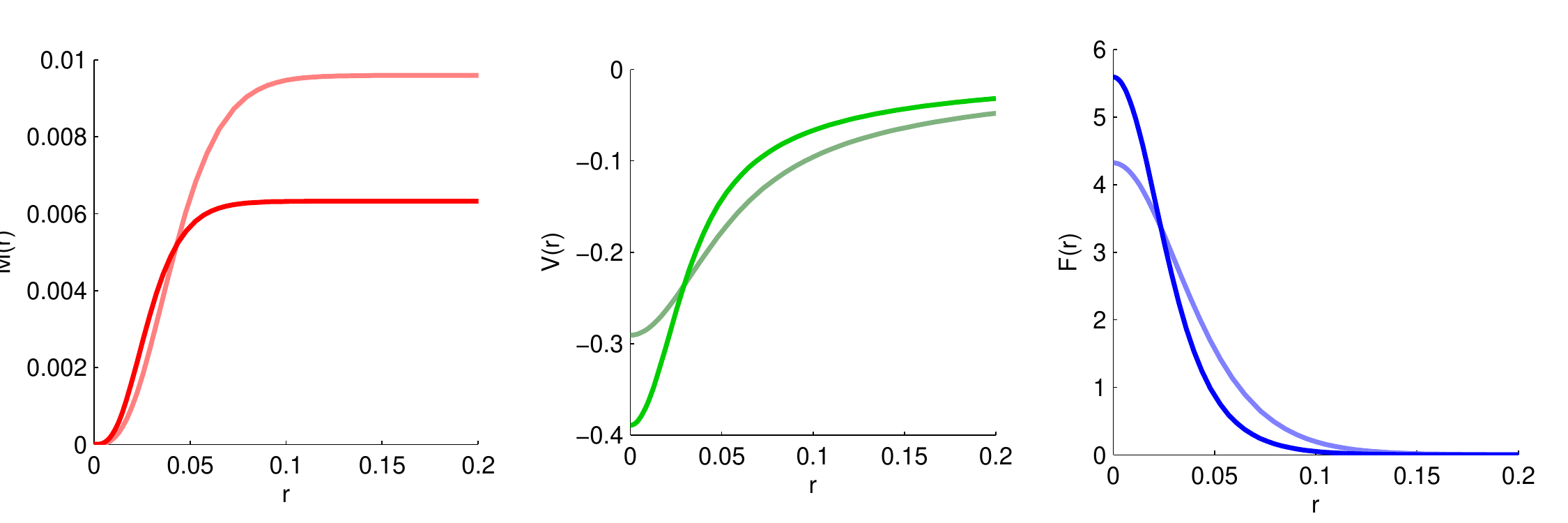}
\vspace{1pt}
\hrule
\caption{Solving for the ground state with $\Upsilon=100$ and $\omega=99,95,90,85$. As $\omega$ decreases we leave the low-field, non-relativistic limit. The more solid graphs are the solutions obtained using the exact Einstein-Klein-Gordon system \eqref{eq:ekgode}, and the fainter graphs are the solutions obtained using the low-field, non-relativistic limit Poisson-Schr\"odinger system \eqref{eq:psode}.}
\label{fig:lowfieldlimit}
\end{figure}

In the next two sections we discuss properties of the Poisson-Schr\"odinger static states. The reader should keep in mind that all these properties are valid for the Einstein-Klein-Gordon static states provided they are in the low-field, non-relativistic regime.

\section{Scalings of the Poisson-Schr\"odinger Static States} \label{sec:scalings}
For the Poisson-Schr\"odinger system, it is a remarkable fact that up to scalings, there is a unique state for each $n\geq 0$. This statement is made precise in the following theorem.
\begin{theorem} \label{thm:psscalings}
Suppose $(\omega;M,V,F)$ solves the Poisson-Schr\"odinger system \eqref{eq:psode} for a particular value of $\Upsilon$. Let $\alpha,\beta>0$. Then $(\bar{\omega};\bar{M},\bar{V},\bar{F})$ defined by
\begin{subequations} \label{eq:psscalings}
\begin{align}
\bar{r} &= \alpha^{-1}\beta^{-1} r \\
\bar{M} &= \alpha\beta^{-3} M \\
\bar{V} &= \alpha^2\beta^{-2} V \\
\bar{F} &= \alpha^2 F \\
\bar{\Upsilon} &= \beta^2\Upsilon \label{eq:mu} \\
(\bar{\Upsilon}-\bar{\omega}) &= \alpha^2(\Upsilon-\omega)
\end{align}
\end{subequations}
is also a solution of \eqref{eq:psode} with $\Upsilon$ replaced by $\bar{\Upsilon}$.
\end{theorem}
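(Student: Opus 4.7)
The plan is to verify the three ODEs \eqref{eq:psodeM}, \eqref{eq:psodeV}, \eqref{eq:psodeF} for the barred quantities by direct substitution, using the chain rule repeatedly along with the fact that $d\bar{r}/dr = \alpha^{-1}\beta^{-1}$ (a constant, which is what makes the computation clean). The statement has the flavor of a dimensional analysis / symmetry result: essentially the system \eqref{eq:psode} is invariant under a two-parameter family of scaling transformations, and the theorem just records which rescalings of $(r, M, V, F, \Upsilon, \omega)$ are compatible. So there is no real obstacle to overcome; the only work is being careful with powers of $\alpha$ and $\beta$.

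First I would handle \eqref{eq:psodeM}. Since $d\bar{r}/dr$ is constant, $\bar{M}_{\bar{r}} = (d\bar{M}/dr)/(d\bar{r}/dr) = \alpha^2\beta^{-2} M_r$, and on the other side $4\pi \bar{r}^2 \cdot 2\bar{F}^2 = 4\pi(\alpha^{-1}\beta^{-1})^2 r^2 \cdot 2(\alpha^2 F)^2 = \alpha^2\beta^{-2} \cdot 4\pi r^2 \cdot 2F^2$. These agree by hypothesis. Next I would handle \eqref{eq:psodeV} similarly: $\bar{V}_{\bar{r}} = \alpha^3\beta^{-1} V_r$, and $\bar{M}/\bar{r}^2 = (\alpha\beta^{-3})(\alpha^{-1}\beta^{-1})^{-2} \cdot M/r^2 = \alpha^3\beta^{-1} \cdot M/r^2$, so the scaling factors match and \eqref{eq:psodeV} transfers over.

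The third equation \eqref{eq:psodeF} is the only one that actually uses both constraints \eqref{eq:mu} and $\bar{\Upsilon}-\bar{\omega} = \alpha^2(\Upsilon-\omega)$, so I would save it for last. Computing carefully: $\bar{F}_{\bar{r}} = \alpha^3\beta F_r$ and $\bar{F}_{\bar{r}\bar{r}} = \alpha^4\beta^2 F_{rr}$, and likewise $(2/\bar{r})\bar{F}_{\bar{r}} = 2\alpha^4\beta^2 F_r/r$. Therefore
\begin{equation*}
\frac{1}{2\bar{\Upsilon}}\left(\bar{F}_{\bar{r}\bar{r}} + \frac{2}{\bar{r}}\bar{F}_{\bar{r}}\right) = \frac{\alpha^4\beta^2}{2\beta^2\Upsilon}\left(F_{rr}+\frac{2}{r}F_r\right) = \alpha^4(\Upsilon-\omega+\Upsilon V)F,
\end{equation*}
using \eqref{eq:mu} and the original equation \eqref{eq:psodeF}. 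Meanwhile, the right-hand side of the scaled equation factors as
\begin{equation*}
\bigl((\bar{\Upsilon}-\bar{\omega}) + \bar{\Upsilon}\bar{V}\bigr)\bar{F} = \bigl(\alpha^2(\Upsilon-\omega) + \beta^2\Upsilon\cdot\alpha^2\beta^{-2}V\bigr)\alpha^2 F = \alpha^4(\Upsilon-\omega+\Upsilon V)F,
\end{equation*}
which matches. This finishes the verification of all three equations, completing the proof. The key subtlety, and the reason the specific combination of exponents in \eqref{eq:psscalings} is what it is, is precisely the requirement that the $\alpha$ powers balance in \eqref{eq:psodeF} (forcing $\bar{\Upsilon}-\bar{\omega} = \alpha^2(\Upsilon-\omega)$) while simultaneously the $\beta$ powers absorb into the rescaling of $\Upsilon$ itself.
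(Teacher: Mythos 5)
Your proposal is correct and follows essentially the same route as the paper: direct verification of the three ODEs \eqref{eq:psodeM}--\eqref{eq:psodeF} by substitution and the chain rule, with the scaling exponents checked equation by equation and the constraints \eqref{eq:mu} and $\bar{\Upsilon}-\bar{\omega}=\alpha^2(\Upsilon-\omega)$ used only in the Schr\"odinger equation. All of your exponent bookkeeping (e.g. $\bar{M}_{\bar{r}}=\alpha^2\beta^{-2}M_r$, $\bar{V}_{\bar{r}}=\alpha^3\beta^{-1}V_r$, $\bar{F}_{\bar{r}\bar{r}}=\alpha^4\beta^2 F_{rr}$) matches the paper's computation.
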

\begin{proof}
With the above definitions,
\begin{align*}
\bar{M}(\bar{r}) &= \alpha\beta^{-3} M(\alpha\beta\bar{r}) \\
\bar{V}(\bar{r}) &= \alpha^2\beta^{-2}V(\alpha\beta\bar{r}) \\
\bar{F}(\bar{r}) &= \alpha^2 F(\alpha\beta\bar{r}).
\end{align*}
Therefore
\begin{align*}
\bar{M}_{\bar{r}} &= \alpha\beta^{-3}M_r(\alpha\beta\bar{r})\cdot\alpha\beta \\
&= \alpha^2\beta^{-2}M_r(\alpha\beta\bar{r}) \\
&= \alpha^2\beta^{-2}\cdot 8\pi(\alpha\beta\bar{r})^2F(\alpha\beta\bar{r})^2 \\
&= 8\pi\bar{r}^2(\alpha^2F(\alpha\beta\bar{r})^2) \\
&= 8\pi\bar{r}^2\bar{F}^2
\end{align*}
and we see that $\bar{M}(\bar{r})$ satisfies \cref{eq:psodeM}. Similarly
\begin{align*}
\bar{V}_{\bar{r}} &= \alpha^2\beta^{-2}V_r(\alpha\beta\bar{r})\cdot \alpha\beta \\
&= \alpha^3\beta^{-1}V_r(\alpha\beta\bar{r}) \\
&= \alpha^3\beta^{-1}\cdot\frac{M(\alpha\beta\bar{r})}{(\alpha\beta\bar{r})^2} \\
&= \frac{\alpha\beta^{-3}M(\alpha\beta\bar{r})}{\bar{r}^2} \\
&= \frac{\bar{M}}{\bar{r}^2}
\end{align*}
and we see that $\bar{V}(\bar{r})$ satisfies \cref{eq:psodeV}. Finally
\begin{align*}
\frac{1}{2\bar{\Upsilon}}\left(\bar{F}_{\bar{r}\bar{r}} + \frac{2}{\bar{r}}\bar{F}_{\bar{r}}\right) &= \frac{1}{2\beta^2\Upsilon}\left(\alpha^2F_{rr}(\alpha\beta\bar{r})\cdot (\alpha\beta)^2 + \frac{2}{\bar{r}}\alpha^2F_r(\alpha\beta\bar{r})\cdot(\alpha\beta)\right) \\
&= \alpha^4\cdot\frac{1}{2\Upsilon}\left(F_{rr}(\alpha\beta\bar{r}) + \frac{2}{\alpha\beta\bar{r}}F_r(\alpha\beta\bar{r})\right) \\
&= \alpha^4(\Upsilon-\omega+\Upsilon V(\alpha\beta\bar{r}))F(\alpha\beta\bar{r}) \\
&= (\alpha^2(\Upsilon-\omega)+\beta^2\Upsilon\alpha^2\beta^{-2}V(\alpha\beta\bar{r}))\cdot\alpha^2F(\alpha\beta\bar{r}) \\
&= (\bar{\Upsilon}-\bar{\omega}+\bar{\Upsilon}\bar{V})\bar{F}
\end{align*}
and we see that $\bar{F}(\bar{r})$ satisfies \cref{eq:psodeF}.
\end{proof}

Note that $\beta$ effectively controls the fundamental constant of nature $\Upsilon$ and $\alpha$ can be used to control the ``peakedness'' of the solution. States which are more compact have higher mass; states which are more spread out have lower mass. In \cref{fig:scalings} we show the effect of scaling a ground state using values of $\alpha$ near $1$. It follows from these scalings that having fixed $\Upsilon$, there is a one-parameter family of static states of order $n$. Once we fix a further characteristic of the solution, a static state is uniquely determined for each $n$, giving us a sequence of static states. We call this ``fixing a scaling'' or ``imposing a scaling condition''.

\begin{figure}[hbt]
\centering

\begin{tikzpicture}
\begin{axis}[
	width=4.9in, height=2in,
	scale only axis,
	xlabel=$r$, ylabel=$F(r)$,
	y label style={rotate=-90},
	xtick pos=left, ytick pos=left,
	xmin=0,xmax=8000,
	ymin=-3e-8, ymax=1.2e-7,
]
\addplot[magenta] table[x expr=\thisrow{r}/1.1,y expr=(1.1)^2*\thisrow{F}]{1.data};
\addplot[magenta] table[x expr=\thisrow{r}/1.2,y expr=(1.2)^2*\thisrow{F}]{1.data};
\addplot[magenta] table[x expr=\thisrow{r}/1.3,y expr=(1.3)^2*\thisrow{F}]{1.data};
\addplot[magenta] table[x expr=\thisrow{r}/1.4,y expr=(1.4)^2*\thisrow{F}]{1.data};
\addplot[magenta] table[x expr=\thisrow{r}/1.5,y expr=(1.5)^2*\thisrow{F}]{1.data};
\addplot[blue,very thick] table[x=r,y=F]{1.data};
\end{axis}
\end{tikzpicture}

\caption{Five scalings of the $n=1$ excited state from \cref{fig:1} (shown in blue) using $\alpha=1.1,1.2,1.3,1.4,1.5$ in the scaling formulas \eqref{eq:psscalings}. Here we only show the dark matter profile function $F(r)$.}
\label{fig:scalings}

\end{figure}
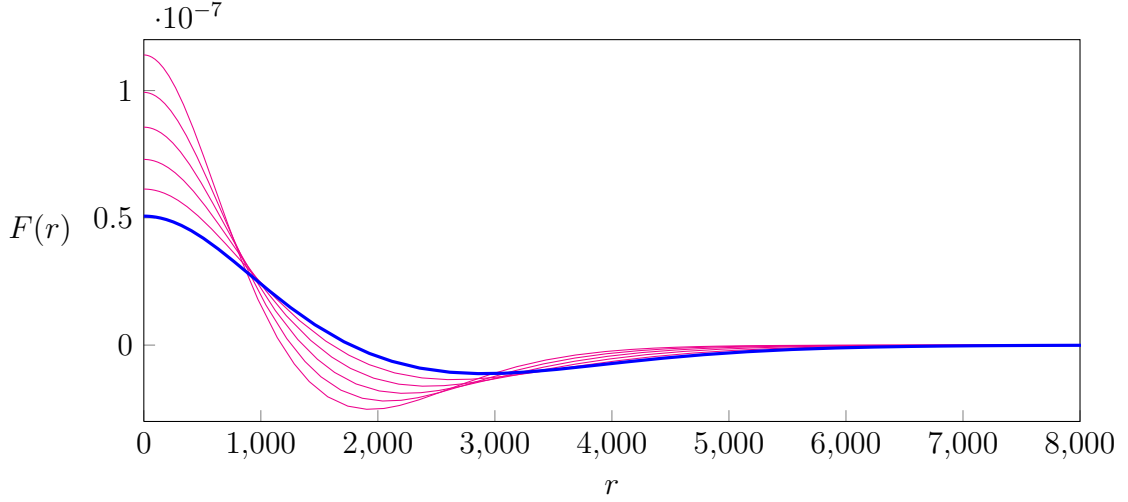

\section{Approximations of the Poisson-Schr\"odinger Static States} \label{sec:approximations}
We can approximate a Poisson-Schr\"odinger static state around $\RDM$ by using a linear approximation for $V(r)$:
\begin{equation} \label{eq:Vapprox}
\begin{split}
V(r) &\approx V(\RDM) + V'(\RDM)(r-\RDM) \\
&= V(\RDM) + \frac{M(\RDM)}{\RDM^2}(r-\RDM)
\end{split}
\end{equation}
for $r\approx\RDM$. Substituting this approximation for $V$ into \cref{eq:psodeF} and using \cref{eq:psrdmdef}, we find that for $r\approx\RDM$, $F(r)$ approximately solves the differential equation
\begin{equation*}
F_{rr} + \frac{2}{r}F_r = \frac{2\Upsilon^2M(\RDM)}{\RDM^2}(r-\RDM)F.
\end{equation*}
Multiplying through by $r$, we can write this differential equation as
\begin{equation} \label{eq:airyforrF}
(rF)_{rr} = \frac{2\Upsilon^2M(\RDM)}{\RDM^2}(r-\RDM)(rF).
\end{equation}
Compare this to the Airy differential equation $y''=xy$ whose general solution is a linear combination of the two standard functions $\Ai(x)$ and $\Bi(x)$. Both $\Ai(x)$ and $\Bi(x)$ oscillate for negative $x$ and have exponential behavior for positive $x$. The function $\Ai(x)$ is shown in \cref{fig:airy}. It exponentially decreases to zero for $x>0$, whereas the function $\Bi(x)$ (not shown) exponentially increases for $x>0$.

\begin{figure}[hbt]
\centering
\begin{tikzpicture}
\begin{axis}[
	width=5in, height=1.5in,
	scale only axis,
	xmin=-15, xmax=5,
	ymin=-0.8, ymax=0.8,
	xtick={-12,-8,-4,0,4},
	axis lines=middle,
]
\addplot[color=black,solid,very thick,samples=500] gnuplot[id=airy,domain=-15:5] {airy(x)};
\addplot[color=brown,dashed,very thick,samples=500] gnuplot[id=airyapprox,domain=-15:-.000001] {sin(.666666666*(-x)**1.5+(pi/4))/(-x)**.25/sqrt(pi)};
\end{axis}
\end{tikzpicture}
\caption{In black, the standard solution $\Ai(x)$ to the Airy differential equation $y''=xy$. In brown and dashed, the approximation \eqref{eq:airyapprox} to $\Ai(x)$ for $x<0$.}
\label{fig:airy}
\end{figure}
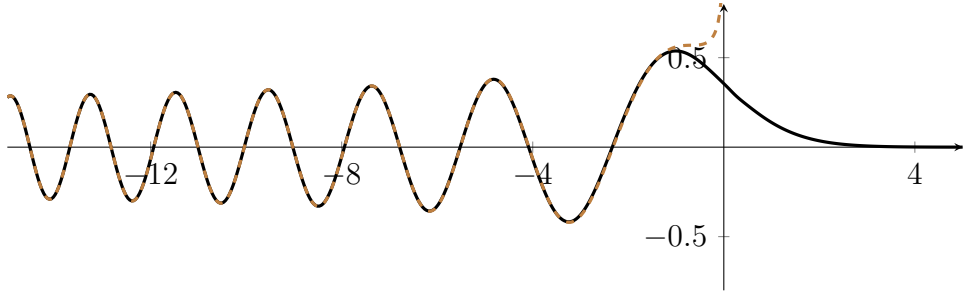

Looking at \cref{eq:airyforrF}, we see that for $r\approx\RDM$, $rF$ approximately solves an equation of the form $y''=a(x-x_0)y$. The general solution to this differential equation is $A\cdot\Ai(a^{1/3}(x-x_0))+B\cdot\Bi(a^{1/3}(x-x_0))$. Since $F$ exponentially decreases for $r>\RDM$, we must have, therefore,
\begin{equation} \label{eq:Fapproxprop}
F(r) \varpropto \frac{1}{r} \Ai\left[\left(\frac{2\Upsilon^2M(\RDM)}{\RDM^2}\right)^{1/3}(r-\RDM)\right]
\end{equation}
for $r\approx\RDM$. If we choose the proportionality constant in \cref{eq:Fapproxprop} so that the approximation has the correct value at $r=\RDM$, then we get
\begin{equation} \label{eq:Fapprox}
F(r) \approx F(\RDM)\frac{\RDM}{r} \frac{1}{\Ai(0)} \Ai\left[\left(\frac{2\Upsilon^2M(\RDM)}{\RDM^2}\right)^{1/3}(r-\RDM)\right]
\end{equation}
for $r\approx\RDM$.

This expression contains the four numbers $\Upsilon$, $\RDM$, $M(\RDM)$, and $F(\RDM)$. These cannot all be independently chosen, for the static states effectively live in a 3-parameter space. Thus, we should be able to find some relation between these four numbers. In fact, we will argue that
\begin{equation} \label{eq:Fscaling}
\abs{F(\RDM)} \approx 2^{-17/12}\Ai(0)\cdot\Upsilon^{1/6}M(\RDM)^{7/12}\RDM^{-17/12} \qquad\text{for large $n$}
\end{equation}
where
\begin{equation}
2^{-17/12}\Ai(0) \approx \num{0.133}
\end{equation}

Our goal is to derive \cref{eq:Fscaling} beginning with the approximation \eqref{eq:Fapprox}. We will utilize a well-known approximation for the Airy function $\Ai(x)$ for negative inputs:
\begin{equation} \label{eq:airyapprox}
\Ai(-x) \approx \frac{\sin\left(\frac23 x^{3/2}+\frac{\pi}{4}\right)}{\sqrt{\pi}x^{1/4}} \qquad\text{for $x>0$.}
\end{equation}
This approximation can be seen in \cref{fig:airy}. It is extremely good for $x>1$. We will also use the following definite integral, which is easily obtained:
\begin{equation} \label{eq:defint}
\int_0^L \frac{\sin^2\left(\frac23x^{3/2}+\frac{\pi}{4}\right)}{x^{1/2}}\,dx = L^{1/2} + O(1).
\end{equation}

Using \cref{eq:psodeM,eq:Fapprox}, we have
\begin{equation*}
\begin{split}
M(\RDM) &= \int_0^{\RDM} 8\pi r^2F^2\,dr \\
&\approx \int_0^{\RDM} 8\pi\frac{F(\RDM)^2\RDM^2}{\Ai(0)^2}\Ai\left[-\left(\frac{2\Upsilon^2 M(\RDM)}{\RDM^2}\right)^{1/3}(\RDM-r)\right]^2\,dr \\
&= \int_0^{\RDM} 8\pi\frac{F(\RDM)^2\RDM^2}{\Ai(0)^2}\Ai\left[-\left(\frac{2\Upsilon^2 M(\RDM)}{\RDM^2}\right)^{1/3}r\right]^2\,dr. \\
\end{split}
\end{equation*}
We will temporarily abbreviate $\RDM=R$, $M(\RDM)=M$, $F(\RDM)=F$, $\Ai(0)=A$. Making the subsitution $x=\left(\frac{2\Upsilon^2 M}{R^2}\right)^{1/3}r$ and using the approximation \eqref{eq:airyapprox}, we obtain
\begin{equation*}
\begin{split}
M &\approx 8\pi\frac{F^2R^2}{A^2}\cdot\left(\frac{2\Upsilon^2 M}{R^2}\right)^{-1/3} \int_0^{(2\Upsilon^2MR)^{1/3}} \Ai(-x)^2\,dx \\
&\approx \frac{2^{8/3}}{A^2}\Upsilon^{-2/3}F^2R^{8/3}M^{-1/3} \int_0^{[2\Upsilon^2MR]^{1/3}} \frac{\sin^2\left(\frac23 x^{3/2}+\frac{\pi}{4}\right)}{x^{1/2}}\,dx. \\
\end{split}
\end{equation*}
Using the definite integral \eqref{eq:defint}, we obtain
\begin{equation*}
\begin{split}
M &\approx \frac{2^{8/3}}{A^2}\Upsilon^{-2/3}F^2R^{8/3}M^{-1/3}\left[(2\Upsilon^2MR)^{1/6} + O(1)\right] \\
&\approx \frac{2^{17/6}}{A^2} \Upsilon^{-1/3}F^2R^{17/6}M^{-1/6}.
\end{split}
\end{equation*}
(We dropped the $O(1)$ term because the other term dominates for large $n$.) Rearranging, we get
\begin{equation*}
F^2 \approx 2^{-17/6}A^2\Upsilon^{1/3}M^{7/6}R^{-17/6}.
\end{equation*}
Taking square roots gives us \cref{eq:Fscaling}, which completes the argument.

\begin{figure}[hbt]
\centering

\begin{tikzpicture}
\begin{axis}[
	width=5in, height=1.4in,
	scale only axis,
	xmin=0, xmax=800,
	ymin=0.998, ymax=1.2,
	xtick pos=left, ytick pos=left,
	xlabel
]
\addplot[only marks,blue,mark size=1pt] table[x=n,y=ratio]{Fscaling.data};
\end{axis}
\end{tikzpicture}

\caption{The ratio given in \cref{eq:Fscalingratio} plotted as a function of $n$ ($n=0,1,\ldots,800$). This graph shows that the approximation \eqref{eq:Fscaling} is very good for large $n$ and not too bad for small $n$.}
\label{fig:Fscaling}
\end{figure}
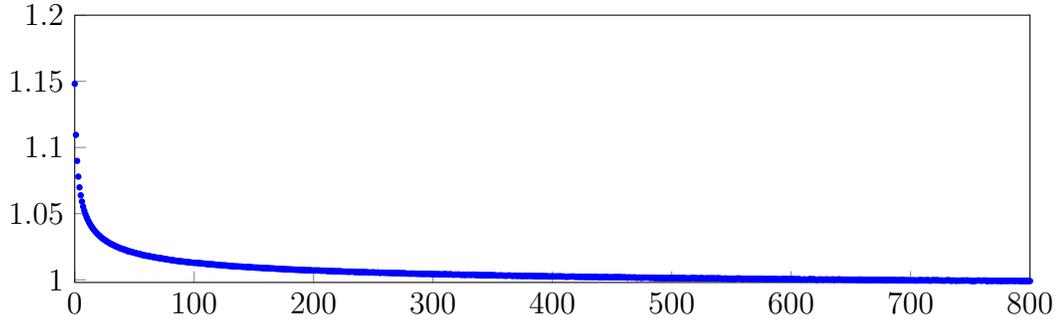

One might wonder how good the approximation \eqref{eq:Fscaling} is for small $n$. In \cref{fig:Fscaling} we have graphed the ratio
\begin{equation} \label{eq:Fscalingratio}
\frac{\abs{F(\RDM)}}{2^{-17/12}\Ai(0)\cdot\Upsilon^{1/6}M(\RDM)^{7/12}\RDM^{-17/12}}
\end{equation}
as a function of $n$. For the ground state ($n=0$), the ratio is approximately $\num{1.15}$, which is reasonably close to $1$, and for large $n$ the ratio is extremely close to $1$.

Combining \cref{eq:Fapprox,eq:Fscaling} gives the following approximation for the $F(r)$ function of a static state of order $n$. (By convention $F(0)$ is positive so the factor $(-1)^n$ is necessary.)
\begin{equation} \label{eq:Fapprox2}
F(r) \approx (-1)^n2^{-17/12}\Upsilon^{2/12}M(\RDM)^{7/12}\RDM^{-5/12}\cdot\frac{1}{r}\Ai\left[\left(\frac{2\Upsilon^2M(\RDM)}{\RDM^2}\right)^{1/3}(r-\RDM)\right]
\end{equation}
This approximation is shown in \cref{fig:0,fig:1,fig:2,fig:10,fig:100}.

\pgfplotsset{
	axisoptions/.style={width=5in,
		height=1.5in,
		scale only axis,
		y label style={rotate=-90},
		ytick pos=left,
		extra x tick labels={},
		extra x tick style={grid=major,tickwidth=0pt},
	},
	plotoptions/.style={thick},
	axisoptionsM/.style={ylabel=$M(r)$,
		xtick={\empty},
		scaled x ticks=false,
		scaled y ticks=true,
		},
	axisoptionsV/.style={ylabel=$V(r)$,
		xtick={\empty},
		scaled x ticks=false,
		scaled y ticks=true,
		},
	axisoptionsF/.style={ylabel=$F(r)$,
		xtick={\empty},
		scaled x ticks=false,
		scaled y ticks=true,
	},
	axisoptionsv/.style={ylabel=$v(r)$,
		xlabel={$r$},
		xtick pos=left,
		scaled y ticks=true,
		},
	plotoptionsM/.style={color=red},
	plotoptionsV/.style={color=black!50!green},
	plotoptionsF/.style={color=blue},
	plotoptionsv/.style={color=orange}
}
\pgfplotsset{
	axisoptions0/.style={
		xmin=0,xmax=8800,
		extra x ticks=2452,
		height=1.3in,
	},
}

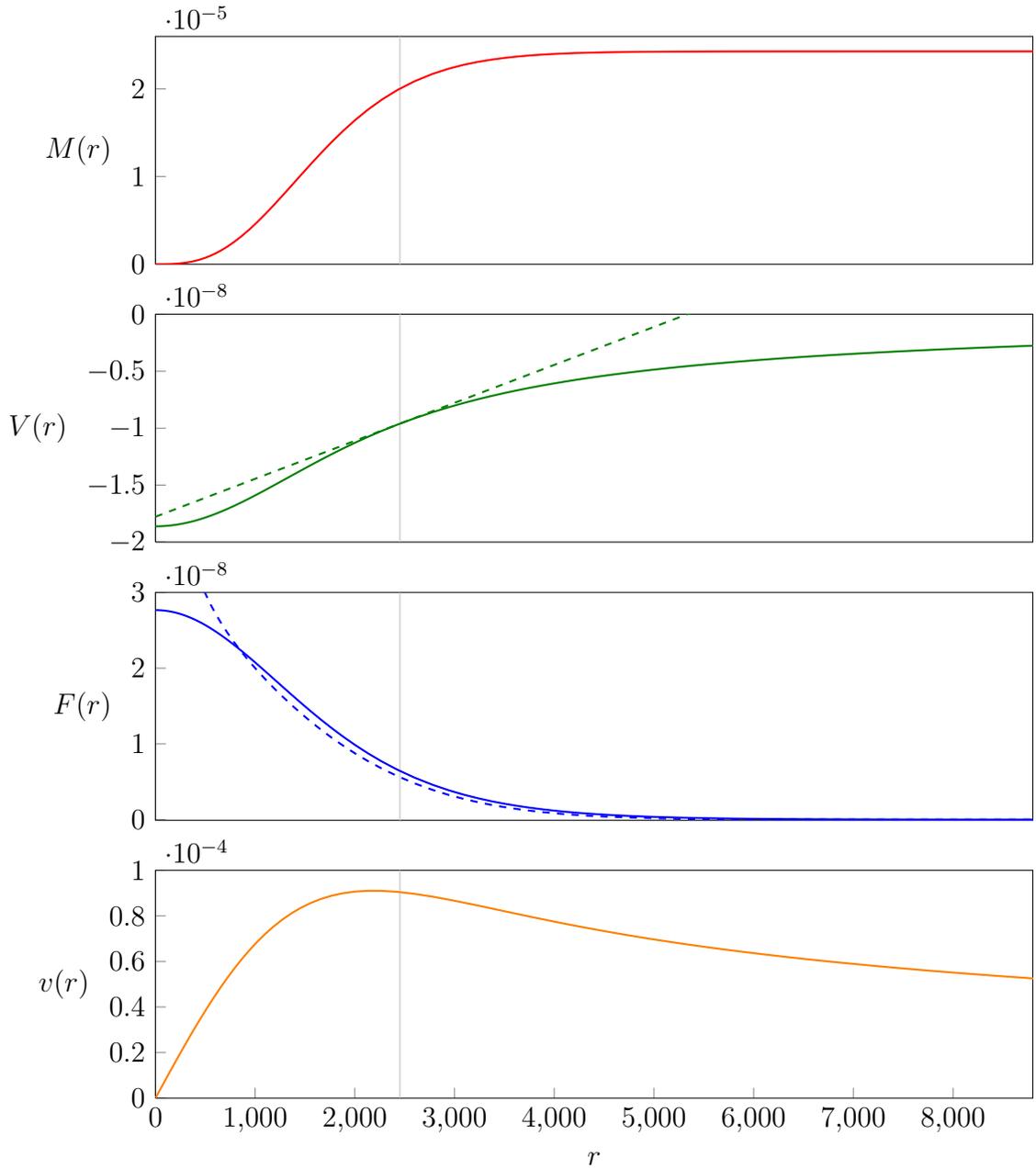
\begin{figure}[p]
\centering

\begin{tikzpicture}
\begin{axis}[
	name=M,
	axisoptions,
	axisoptionsM,
	axisoptions0,
	ymin=-1e-8, ymax=2.6e-5,
]
\addplot[plotoptions,plotoptionsM] table[x=r,y=M]{0.data};
\end{axis}

\begin{axis}[
	name=V,
	at=(M.below south west),
	anchor=above north west,
	axisoptions,
	axisoptionsV,
	axisoptions0,
	ymin=-2e-8, ymax=1e-11,
]
\addplot[plotoptions,plotoptionsV] table[x=r,y=V]{0.data};
\addplot[plotoptions,plotoptionsV,dashed] table[x=r,y=Va]{0.data};
\end{axis}

\begin{axis}[
	name=F,
	at=(V.below south west),
	anchor=above north west,
	axisoptions,
	axisoptionsF,
	axisoptions0,
	ymin=-1e-10, ymax=3e-8,
]
\addplot[plotoptions,plotoptionsF] table[x=r,y=F]{0.data};
\addplot[plotoptions,plotoptionsF,dashed] table[x=r,y=Fa]{0.data};
\end{axis}

\begin{axis}[
	name=v,
	at=(F.below south west),
	anchor=above north west,
	axisoptions,
	axisoptionsv,
	axisoptions0,
	ymin=-1e-8, ymax=1e-4,
]
\addplot[plotoptions,plotoptionsv] table[x=r,y=v]{0.data};
\end{axis}
\end{tikzpicture}

\caption{A ground state ($n=0$) with $\Upsilon=\SI{10}{\lightyear^{-1}}$. Units are geometrized and given in light-years. The first three functions graphed are the solutions $M(r)$, $V(r)$, $F(r)$ to \cref{eq:psode}. They are, respectively, the total mass profile, the Newtonian potential, and the dark matter profile. The fourth function is a rotation curve $v(r)$ calculated in accordance with \cref{eq:rotationcurvecalc}. The vertical gray line marks the location of $\RDM$. The dashed curves show the approximations described in \cref{sec:approximations} in \cref{eq:Vapprox,eq:Fapprox2}.}
\label{fig:0}

\end{figure}

\pgfplotsset{
	axisoptions0/.style={
		xmin=0,xmax=12000,
		extra x ticks=4384,
	},
}

\begin{figure}[p]
\centering

\begin{tikzpicture}
\begin{axis}[
	name=M,
	axisoptions,
	axisoptionsM,
	axisoptions0,
	ymin=-1e-9, ymax=7.7e-5,
]
\addplot[plotoptions,plotoptionsM] table[x=r,y=M]{1.data};
\end{axis}

\begin{axis}[
	name=V,
	at=(M.below south west),
	anchor=above north west,
	axisoptions,
	axisoptionsV,
	axisoptions0,
	ymin=-4.5e-8, ymax=1e-12,
]
\addplot[plotoptions,plotoptionsV] table[x=r,y=V]{1.data};
\addplot[plotoptions,plotoptionsV,dashed] table[x=r,y=Va]{1.data};
\end{axis}

\begin{axis}[
	name=F,
	at=(V.below south west),
	anchor=above north west,
	axisoptions,
	axisoptionsF,
	axisoptions0,
	ymin=-2e-8, ymax=6e-8,
]
\addplot[plotoptions,plotoptionsF] table[x=r,y=F]{1.data};
\addplot[plotoptions,plotoptionsF,dashed] table[x=r,y=Fa]{1.data};
\end{axis}

\begin{axis}[
	name=v,
	at=(F.below south west),
	anchor=above north west,
	axisoptions,
	axisoptionsv,
	axisoptions0,
	ymin=-1e-8, ymax=1.5e-4,
]
\addplot[plotoptions,plotoptionsv] table[x=r,y=v]{1.data};
\end{axis}
\end{tikzpicture}

\caption{An excited state ($n=1$). See the caption to \cref{fig:0} for a fuller description.}
\label{fig:1}

\end{figure}
\pgfplotsset{
	axisoptions0/.style={
		xmin=0,xmax=11900,
		extra x ticks=5973,
	},
}

\begin{figure}[p]
\centering

\begin{tikzpicture}
\begin{axis}[
	name=M,
	axisoptions,
	axisoptionsM,
	axisoptions0,
	ymin=-1e-9, ymax=1.4e-4,
]
\addplot[plotoptions,plotoptionsM] table[x=r,y=M]{2.data};
\end{axis}

\begin{axis}[
	name=V,
	at=(M.below south west),
	anchor=above north west,
	axisoptions,
	axisoptionsV,
	axisoptions0,
	ymin=-6e-8, ymax=1e-12,
]
\addplot[plotoptions,plotoptionsV] table[x=r,y=V]{2.data};
\addplot[plotoptions,plotoptionsV,dashed] table[x=r,y=Va]{2.data};
\end{axis}

\begin{axis}[
	name=F,
	at=(V.below south west),
	anchor=above north west,
	axisoptions,
	axisoptionsF,
	axisoptions0,
	ymin=-2e-8, ymax=8e-8,
]
\addplot[plotoptions,plotoptionsF] table[x=r,y=F]{2.data};
\addplot[plotoptions,plotoptionsF,dashed] table[x=r,y=Fa]{2.data};
\end{axis}

\begin{axis}[
	name=v,
	at=(F.below south west),
	anchor=above north west,
	axisoptions,
	axisoptionsv,
	axisoptions0,
	ymin=-1e-8, ymax=1.5e-4,
]
\addplot[plotoptions,plotoptionsv] table[x=r,y=v]{2.data};
\end{axis}
\end{tikzpicture}

\caption{An excited state ($n=2$). See the caption to \cref{fig:0} for a fuller description.}
\label{fig:2}

\end{figure}
\pgfplotsset{
	axisoptions0/.style={
		xmin=0,xmax=22000,
		extra x ticks=15061,
	},
}

\begin{figure}[p]
\centering

\begin{tikzpicture}
\begin{axis}[
	name=M,
	axisoptions,
	axisoptionsM,
	axisoptions0,
	ymin=-1e-9, ymax=9e-4,
]
\addplot[plotoptions,plotoptionsM] table[x=r,y=M]{10.data};
\end{axis}

\begin{axis}[
	name=V,
	at=(M.below south west),
	anchor=above north west,
	axisoptions,
	axisoptionsV,
	axisoptions0,
	ymin=-2e-7, ymax=1e-12,
]
\addplot[plotoptions,plotoptionsV] table[x=r,y=V]{10.data};
\addplot[plotoptions,plotoptionsV,dashed] table[x=r,y=Va]{10.data};
\end{axis}

\begin{axis}[
	name=F,
	at=(V.below south west),
	anchor=above north west,
	axisoptions,
	axisoptionsF,
	axisoptions0,
	ymin=-6e-8, ymax=2e-7,
]
\addplot[plotoptions,plotoptionsF] table[x=r,y=F]{10.data};
\addplot[plotoptions,plotoptionsF,dashed] table[x=r,y=Fa]{10.data};
\end{axis}

\begin{axis}[
	name=v,
	at=(F.below south west),
	anchor=above north west,
	axisoptions,
	axisoptionsv,
	axisoptions0,
	ymin=-1e-8, ymax=2.6e-4,
]
\addplot[plotoptions,plotoptionsv] table[x=r,y=v]{10.data};
\end{axis}
\end{tikzpicture}

\caption{An excited state ($n=10$). See the caption to \cref{fig:0} for a fuller description.}
\label{fig:10}

\end{figure}
\pgfplotsset{
	axisoptions0/.style={
		xmin=0,xmax=73000,
		extra x ticks=67742,
	},
}

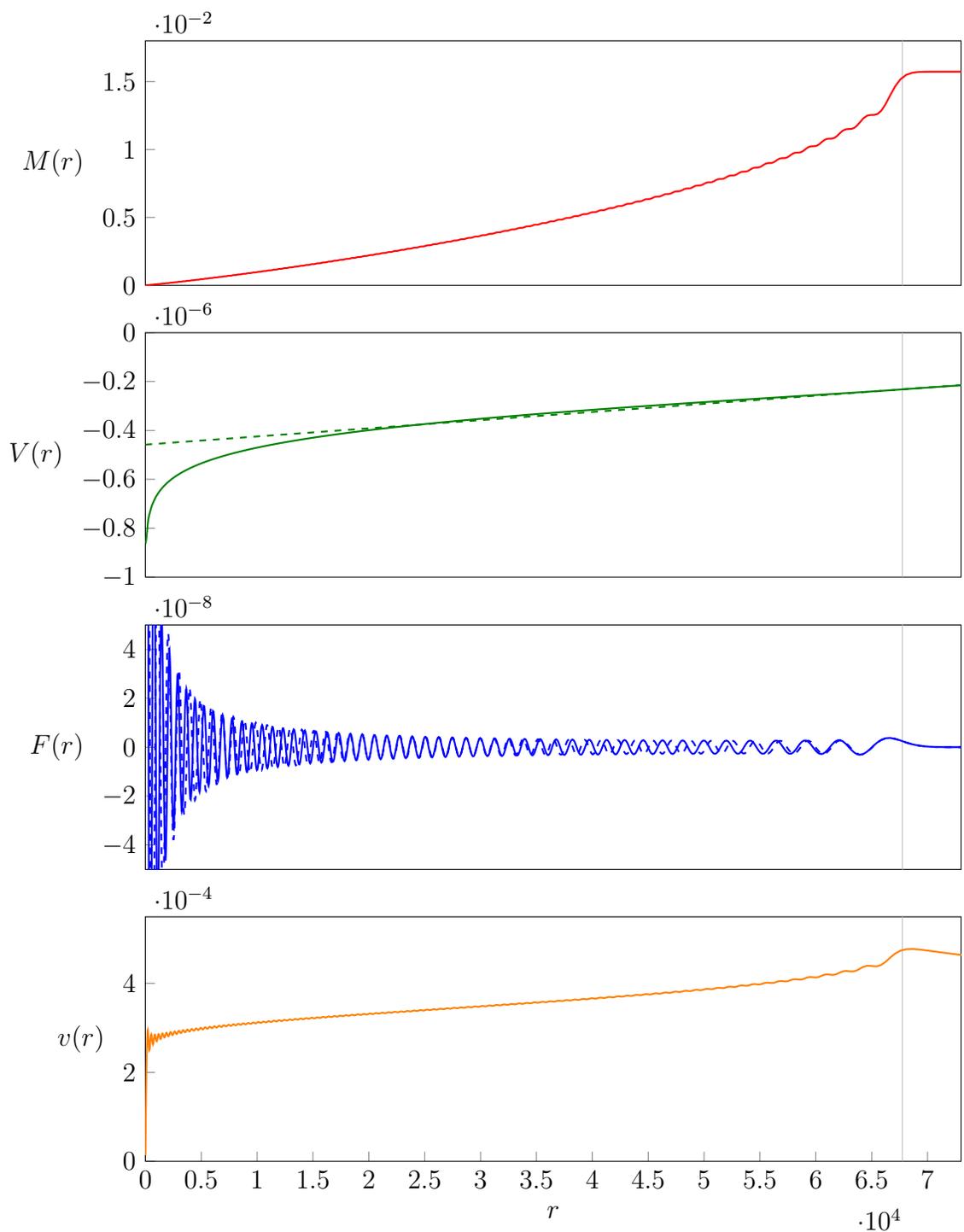
\begin{figure}[p]
\centering

\begin{tikzpicture}
\begin{axis}[
	name=M,
	axisoptions,
	axisoptionsM,
	axisoptions0,
	ymin=-1e-6, ymax=1.8e-2,
]
\addplot[plotoptions,plotoptionsM] table[x=r,y=M]{100.data};
\end{axis}

\begin{axis}[
	name=V,
	at=(M.below south west),
	anchor=above north west,
	axisoptions,
	axisoptionsV,
	axisoptions0,
	ymin=-1e-6, ymax=1e-11,
]
\addplot[plotoptions,plotoptionsV] table[x=r,y=V]{100.data};
\addplot[plotoptions,plotoptionsV,dashed] table[x=r,y=Va]{100.data};
\end{axis}

\begin{axis}[
	name=F,
	at=(V.below south west),
	anchor=above north west,
	axisoptions,
	axisoptionsF,
	axisoptions0,
	ymin=-5e-8, ymax=5e-8,
]
\addplot[plotoptions,plotoptionsF] table[x=r,y=F]{100.data};
\addplot[plotoptions,plotoptionsF,dashed] table[x=r,y=Fa]{100.data};
\end{axis}

\begin{axis}[
	name=v,
	at=(F.below south west),
	anchor=above north west,
	axisoptions,
	axisoptionsv,
	axisoptions0,
	ymin=-1e-9, ymax=5.5e-4,
]
\addplot[plotoptions,plotoptionsv] table[x=r,y=v]{100.data};
\end{axis}

\end{tikzpicture}

\caption{An excited state ($n=100$). See the caption to \cref{fig:0} for a fuller description. The top and bottom of the $F(r)$ function have been cropped out to show more detail.}
\label{fig:100}

\end{figure}

\chapter{Wave Dark Matter and The Baryonic Tully-Fisher Relation} \label{chap:tullyfisher}

In this chapter we describe some possible applications of the mathematical results from the previous chapter to theoretical astrophysics.

\section{Introduction} \label{sec:tfintro}
In \cref{sec:wdm} we introduced the theory of wave dark matter whose underlying equations were the Einstein-Klein-Gordon equations \eqref{eqr:ekg}. In this chapter for definiteness we fix a value for the fundamental constant $\Upsilon$. We use $\Upsilon=\SI{10}{\lightyear^{-1}}$, which is consistent with constraints from other studies \cite{hu00,bray13a,suarez14} of wave dark matter.

We are interested in studying dark matter in disk galaxies and, for reasons outlined in \cref{sec:dmbtfr}, possible connections to the baryonic Tully-Fisher relation. Although in disk galaxies the regular matter mostly lies in a disk, the dark matter is thought to lie in a ``halo'' which is at least approximately spherically symmetric. Sin proposed in \cite{sin94} that each galactic dark matter halo corresponds (in a first approximation) to one of the spherically symmetric static states described in \cref{chap:ekgps}. If we take this suggestion as a starting point, a natural concomitant idea is that the baryonic Tully-Fisher relation might arise because a corresponding ``Tully-Fisher-like'' relation holds for the wave dark matter halos. This Tully-Fisher-like relation would hold because of the nature of wave dark matter. The fundamental question we ask in this chapter is whether the equations of wave dark matter have the potential to give rise to such a Tully-Fisher-like relation (see \cref{q:tfscalingsimprecise}). We answer this question in the affirmative and exhibit two boundary conditions, which could be physical conditions imposed at the edge of each wave dark matter halo, that give rise to just such a Tully-Fisher-like relation. The boundary conditions are as follows:
\begin{description}[style=multiline,leftmargin=1.5cm]
\item[BC1:] Fixing a length scale at the outer edge of halos implies a Tully-Fisher-like relation with slope $4$.
\item[BC2:] Fixing the dark matter density at the outer edge of halos implies a Tully-Fisher-like relation with slope $3.4$.
\end{description}
In the rest of this chapter we work with the static states of the Poisson-Schr\"odinger system as models for dark halos. We can then freely use the scalings described in \cref{thm:psscalings} provided we are careful to stay in the low-field, non-relativistic limit. Our results will then be valid for the Einstein-Klein-Gordon system as well.

\section{Scaling Conditions}
We remind the reader of the discussion at the end of \cref{sec:scalings} regarding the scalings of the static states. With $\Upsilon$ fixed, one more condition is sufficient to ``fix a scaling'' and give a unique sequence $n=0,1,2,\ldots$ of static states. We ask:
\begin{question} \label{q:tfscalingsimprecise}
Are there are any scaling conditions for which the sequence of static states $n=0,1,2,\ldots$ obeys a Tully-Fisher-like relation
\begin{equation}
\frac{M}{v^x} = \text{constant}
\end{equation}
for some $3\leq x\leq 4$?
\end{question}
We need to make this question more precise, as it is unclear above what $M$ and $v$ are, exactly, given a particular static state. Astrophysicists attempting to calibrate the baryonic Tully-Fisher relation are faced with a similar problem. In that case, $M$ is an estimate for the total baryonic mass. For $v$, some take it to be the maximum observed velocity; some take it to be an average velocity computed in some well-defined way; some take it to be the velocity observed at a particular (arbitrarily chosen) radius. It is well-known that the rotation curves of many spiral galaxies are somewhat flat, and in this case any of these three choices will produce similar values for $v$.

We have chosen to take $M=M(\RDM)$ and $v=v(\RDM)$. This latter quantity is the circular velocity which would be observed for objects orbiting at $r=\RDM$, and is computed in the standard Newtonian way:
\begin{equation} \label{eq:rotationcurvecalc}
\text{accel.} = \frac{v(r)^2}{r} = \frac{M(r)}{r^2} \qquad\Longrightarrow\qquad v(r) = \sqrt{\frac{M(r)}{r}}.
\end{equation}
Perhaps a more natural choice for $M$ would be $M_\infty$, i.e. the total mass of the static state. But $M(\RDM)\approx M_\infty$ since $\RDM$ is by definition the ``outer edge'' of the halo (see \cref{def:rdm} and \cref{fig:RDM}), so it does not really matter which we choose. Similarly, another natural choice for $v$ would be $v_{\text{max}}$, the maximum value of the function $v(r)$. But again, by examining \cref{fig:0,fig:1,fig:2,fig:10,fig:100}, one can see that $v(\RDM)\approx v_{\text{max}}$. Thus, the following more precise version of \cref{q:tfscalingsimprecise} seems reasonable:
\begin{question} \label{q:tfscalingsprecise}
Are there are any scaling conditions for which the sequence of static states $n=0,1,2,\ldots$ obeys a Tully-Fisher-like relation
\begin{equation} \label{eq:tfscalingsprecise}
\frac{M(\RDM)}{v(\RDM)^x} = \text{constant}
\end{equation}
for some $3\leq x\leq 4$?
\end{question}

Now, in one sense the answer to \cref{q:tfscalingsprecise} is a trivial ``yes'', for we can just pick any $x$ we like and let \cref{eq:tfscalingsprecise} itself be the scaling condition. However, this is obviously cheating; we are interested in scaling conditions which might correspond to something physical. We will consider a scaling condition to be physical if it fixes some property of the dark matter profile function $F(r)$ either at a particular point (a \emph{local} condition) or in the large (a \emph{global} condition). Orthogonally to the local/global distinction, we will consider scaling conditions which fix a \emph{horizontal} property of $F$ or a \emph{vertical} property of $F$. The first type of scaling condition fixes some type of length, whereas the second type of scaling condition fixes $\abs{F}$ in some way. Looking at \cref{eq:psodeM}, we see that $2F^2$ gives mass density so that fixing a vertical property of $F$ corresponds to fixing the density of the dark matter in some way.

For local conditions, there are really only two natural places to impose a scaling: at $r=0$ or $r=\RDM$. We consider both, with horizontal and vertical scalings. In the category of global conditions, a horizontal scaling and a vertical scaling give two more scaling conditions for a total of six scaling conditions in all. We will discuss all these conditions in \cref{sec:bcs,sec:ccs,sec:gcs} after we make the definitions in \cref{sec:definitions}.

\section{Definitions} \label{sec:definitions}
\begin{definition} \label{def:halflength}
The $F(r)$ function for any static state has an exponentially decreasing ``tail'' lying to the right of $\RDM$. There is some $R>\RDM$ such that $F(R)=\frac12 F(\RDM)$; we refer to $R-\RDM$ as the \emph{halflength} of the state.
\end{definition}

\begin{definition}
For a Poisson-Schr\"odinger static state, the \emph{wavelength} of $F(r)$ at $0\leq r<\RDM$ is given by the expression
\begin{equation*}
\frac{2\pi}{2\Upsilon(\Upsilon-\omega+\Upsilon V(r))}.
\end{equation*}
This definition follows naturally from an examination of \cref{eq:psodeF}.
\end{definition}

\begin{definition} \label{def:omegatrue}
In the solution $f(t,r)=F(r)e^{i\omega t}$ (see \cref{eq:fstaticstate}), the constant $\omega$ is like a frequency but it is not a physical quantity because $t$ is just a coordinate. We define
\begin{equation} \label{eq:omegatrue}
\omegatrue(r)=\omega e^{-V(r)}.
\end{equation}
This \emph{is} a physical quantity---the true frequency of the dark matter that would be measured at a particular value of $r$. The factor $e^{-V(r)}$ comes from the metric \eqref{eq:metric_Phi}.
\end{definition}

\begin{definition}
The \emph{average wavelength} of a static state of order $n$ is defined to be
\begin{equation*}
\frac{\RDM}{n/2+1/4}.
\end{equation*}
The reader is invited to examine \cref{fig:0,fig:1,fig:2,fig:10,fig:100} to see why this is a sensible definition.
\end{definition}

\begin{definition} \label{def:averagedensity}
The \emph{average density} of a Poisson-Schr\"odinger static state is defined to be
\begin{equation}
\frac{M(\RDM)}{\frac43\pi\RDM^3} = \frac{1}{\frac43\pi\RDM^3}\int_0^{\RDM} 4\pi r^2\cdot 2F(r)^2\,dr.
\end{equation}
\end{definition}

\section{Boundary Conditions at \texorpdfstring{$r=\RDM$}{r=RDM}} \label{sec:bcs}
In this section we discuss imposing local scaling conditions at $r=\RDM$, both horizontal and vertical. For obvious reasons we refer to these two scaling conditions as boundary conditions, \textbf{BC1} and \textbf{BC2}. These were given in \cref{sec:tfintro} along with the Tully-Fisher-like relations they imply. We restate them here more precisely:
\begin{description}[style=multiline,leftmargin=1.5cm] \label{bcs}
\item[BC1:] Fixing a length scale at $\RDM$ implies a Tully-Fisher-like relation with slope~4.
\item[BC2:] Fixing $\abs{F(\RDM)}$ implies a Tully-Fisher-like relation with slope $3.4$.
\end{description}
Note that, as we stated above, fixing $\abs{F(\RDM)}$ is equivalent to fixing the dark matter density at $\RDM$.

\subsection{Boundary Condition 1: Fixing a Length Scale}
With regards to \textbf{BC1}, we should explain what ``fixing a length scale'' means. One easy way to understand this concept is in terms of the halflength of static states---see \cref{def:halflength}. Requiring the halflength of all states to be the same is one way to fix a length scale at $\RDM$. Alternatively, we can fix the true oscillation frequency of the dark matter, $\omegatrue(r)$ (see \cref{def:omegatrue}), at a fixed distance outside (or inside) $\RDM$. This condition is explored more extensively in \cite{bray14}. But most precisely, the best way to understand what fixing a length scale means is via the approximate solution \eqref{eq:Fapproxprop} from \cref{chap:ekgps}, which we repeat here. We discovered that for $r\approx\RDM$,
\begin{equation} \label{eqr:Fapproxprop}
F(r) \varpropto \frac{1}{r} \Ai\left[\left(\frac{2\Upsilon^2M(\RDM)}{\RDM^2}\right)^{1/3}(r-\RDM)\right].
\end{equation}

Up to a vertical scaling, the shape of $F(r)$ around $r=\RDM$ is given by the expression on the right side of \eqref{eqr:Fapproxprop}. Fixing a length scale around $r=\RDM$ evidently amounts to fixing the value of the constant
\begin{equation}
\frac{2\Upsilon^2M(\RDM)}{\RDM^2}.
\end{equation}
This explains why fixing a length scale is basically equivalent to fixing a value for the halflength or to any of a number of other conditions (such as the one involving $\omegatrue(r)$ stated above) which fix some length around $\RDM$. All such conditions are fixing a particular horizontal scaling of the expression in \cref{eqr:Fapproxprop}.

Now, the rotation curve calculation \eqref{eq:rotationcurvecalc} shows that
\begin{equation} \label{eq:tfconstantatRDM}
\frac{\RDM^2}{M(\RDM)} = \frac{M(\RDM)}{v(\RDM)^4}.
\end{equation}
Since $\Upsilon$ is a constant, we see that fixing a length scale is equivalent to fixing the constant in \cref{eq:tfconstantatRDM}. This explains why the scaling condition \textbf{BC1} implies a Tully-Fisher-like relation for the static states with slope $x=4$.

\subsection{Boundary Condition 2: Fixing the Density}
The truth of \textbf{BC2} follows from \cref{eq:Fscaling} from \cref{chap:ekgps}, which we repeat here:
\begin{equation} \label{eqr:Fscaling}
\abs{F(\RDM)} \approx 2^{-17/12}\Ai(0)\cdot\Upsilon^{1/6}M(\RDM)^{7/12}\RDM^{-17/12} \qquad\text{for large $n$.}
\end{equation}
From \cref{eqr:Fscaling} we see that fixing $\abs{F(\RDM)}$ amounts to fixing $M(\RDM)^7\RDM^{-17}$. By \cref{eq:rotationcurvecalc}, $\RDM = M(\RDM)v(\RDM)^{-2}$, so fixing $M(\RDM)^7\RDM^{-17}$ is equivalent to fixing $M(\RDM)^{-10}v(\RDM)^{34}=(M(\RDM)/v(\RDM)^{3.4})^{-10}$, i.e. equivalent to fixing
\begin{equation*}
\frac{M(\RDM)}{v(\RDM)^{3.4}}.
\end{equation*}
Thus \textbf{BC2} implies a Tully-Fisher-like relation with slope $x=3.4$.

\subsection{Numerical Evidence}

In addition to the theoretical arguments we have just adduced for the truth of \textbf{BC1} and \textbf{BC2}, we can also check numerically that \textbf{BC1} and \textbf{BC2} do give Tully-Fisher-like relations with the stated slopes. We used a computer to derive particular static state solutions for $n=0,1,\ldots,800$ and then scaled these solutions in accordance with \cref{eq:psscalings} so that, first, they all had the same halflength, and second, so they all had the same value for $\abs{F(\RDM)}$. (The precise values for the halflength and $\abs{F(\RDM)}$ were chosen so that the resulting static states would be of reasonable sizes to be representing galactic halos.) The results are shown in \cref{fig:rdmscalings}, in which we have plotted $M(\RDM)$ vs. $v(\RDM)$ in log-log space. For large $n$, the static states form a line with the stated slope. (For small $n$, the states deviate slightly from the correct slope because the approximation \cref{eq:Fapprox} is not as good for those $n$.)

\begin{figure}[hbt]
\centering

\begin{tikzpicture}[baseline]
\begin{loglogaxis}[
	title={\bfseries BC1:} Fixed Halflength,
	width=2in, height=3in,
	scale only axis,
	xlabel=$v(\RDM)$ (km/s), ylabel=$M(\RDM)$ ($M_\Sun$),
	xtick pos=left, ytick pos=left,
	xmin=10,xmax=1000,
	ymin=1e7, ymax=1e13,
]
\addplot[only marks,blue] table[x=v,y=M]{fix_halflength.data};
\addplot[black,dashed,domain=10:1000] {3e11*(x/3e5)^4/1.5607e-13};
\end{loglogaxis}
\end{tikzpicture}%
\hspace{0.5cm}%
\begin{tikzpicture}[baseline]
\begin{loglogaxis}[
	title={\bfseries BC2:} Fixed $\abs{F(\RDM)}$,
	width=2in, height=3in,
	scale only axis,
	xlabel=$v(\RDM)$ (km/s), ylabel=$M(\RDM)$ ($M_\Sun$),
	xtick pos=left, ytick pos=left,
	xmin=10,xmax=1000,
	ymin=1e7, ymax=1e13,
]
\addplot[only marks,blue] table[x=v,y=M]{fix_F_RDM.data};
\addplot[black,dashed,domain=10:1000] {1.9e22*(x/3e5)^3.4};
\end{loglogaxis}
\end{tikzpicture}

\caption{Numerical evidence for \textbf{BC1} and \textbf{BC2} (see \cpageref{bcs}). At left, the static states $n=0,1,\ldots,800$ with $\Upsilon=\SI{10}{\lightyear^{-1}}$ and a fixed exponential decay halflength of $\SI{884}{\lightyear}$. The dashed line has slope $4$. At right, the static states $n=0,1,\ldots,800$ with $\Upsilon=\SI{10}{\lightyear^{-1}}$ and a fixed value for $\abs{F(\RDM)}$ of $\num{2.473e-9}$. The dashed line has slope $3.4$. In these plots we have chosen not to adhere to our convention of using geometric units to facilitate comparison to \cref{fig:btfr}.}
\label{fig:rdmscalings}

\end{figure}

\section{Central Conditions at \texorpdfstring{$r=0$}{r=0}} \label{sec:ccs}

In this section we discuss imposing local scaling conditions at $r=0$, both horizontal and vertical. For obvious reasons we refer to these two scaling conditions as central conditions, \textbf{CC1} and \textbf{CC2}.
\begin{description}[style=multiline,leftmargin=1.5cm]
\item[CC1:] Fixing the wavelength of $F$ at $r=0$ (variant: fixing the location of the first node for $n\geq 1$) does not give a Tully-Fisher-like relation.
\item[CC2:] Fixing $F(0)$, i.e. fixing the density of the dark matter at the center of halos, does not give a Tully-Fisher-like relation.
\end{description}
We do not provide theoretical arguments here for the truth of \textbf{CC1} and \textbf{CC2}, only numerical evidence. See \cref{fig:r0scalings}.

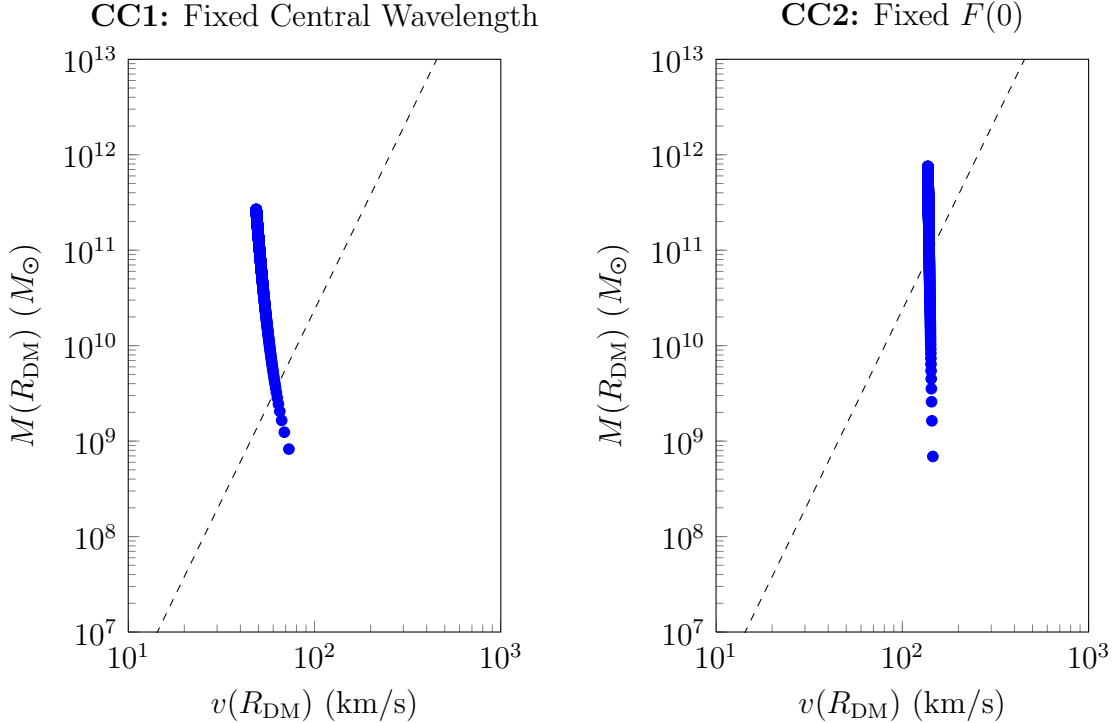
\begin{figure}[hbt]
\centering

\begin{tikzpicture}[baseline]
\begin{loglogaxis}[
	title={{\bfseries CC1:} Fixed Central Wavelength},
	width=1.95in, height=3in,
	scale only axis,
	xlabel=$v(\RDM)$ (km/s), ylabel=$M(\RDM)$ ($M_\Sun$),
	xtick pos=left, ytick pos=left,
	xmin=10,xmax=1000,
	ymin=1e7, ymax=1e13,
]
\addplot[only marks,blue] table[x=v,y=M]{fix_central_wavelength.data};
\addplot[black,dashed,domain=10:1000] {3e11*(x/3e5)^4/1.5607e-13};
\end{loglogaxis}
\end{tikzpicture}%
\hspace{0.5cm}%
\begin{tikzpicture}[baseline]
\begin{loglogaxis}[
	title={\bfseries CC2:} Fixed $F(0)$,
	width=1.95in, height=3in,
	scale only axis,
	xlabel=$v(\RDM)$ (km/s), ylabel=$M(\RDM)$ ($M_\Sun$),
	xtick pos=left, ytick pos=left,
	xmin=10,xmax=1000,
	ymin=1e7, ymax=1e13,
]
\addplot[only marks,blue] table[x=v,y=M]{fix_F0.data};
\addplot[black,dashed,domain=10:1000] {3e11*(x/3e5)^4/1.5607e-13};
\end{loglogaxis}
\end{tikzpicture}

\caption{Numerical demonstration that \textbf{CC1} and \textbf{CC2} do not give Tully-Fisher-like relations. At left, the static states $n=0,1,\ldots,800$ with $\Upsilon=\SI{10}{\lightyear^{-1}}$ and a fixed central wavelength of $\SI{1500}{\lightyear}$. (Alternatively, fixing the location of the first node for $n\geq 1$ gives a qualitatively similar plot.) At right, the static states $n=0,1,\ldots,800$ with $\Upsilon=\SI{10}{\lightyear^{-1}}$ and a fixed value for $F(0)$ of $F(0)=\num{8e-7}$. The dashed lines have slope $4$. In these plots we have chosen not to adhere to our convention of using geometric units to facilitate comparison to \cref{fig:btfr}.}
\label{fig:r0scalings}

\end{figure}

\section{Global Conditions} \label{sec:gcs}

In this section we discuss imposing the following two global scaling conditions, the first a horizontal condition, the second a vertical condition.
\begin{description}[style=multiline,leftmargin=1.5cm]
\item[GC1:] Fixing the average wavelength of $F$ does not give a Tully-Fisher-like relation.
\item[GC2:] Fixing the average density of halos gives a Tully-Fisher-like relation with slope $3$. This result is purely a Newtonian gravitational result and does not depend on the nature of wave dark matter.
\end{description}
We do not provide a theoretical argument here for the truth of \textbf{GC1}, only numerical evidence. See \cref{fig:globalscalings}. The truth of \textbf{GC2} follows from a simple Newtonian argument which does not use the specifics of the wave dark matter model or the static states. Fixing the average density (see \cref{def:averagedensity}) amounts to fixing $M(\RDM)/\RDM^3$. The standard rotation curve calculation \eqref{eq:rotationcurvecalc} gives $\RDM = M(\RDM)/v(\RDM)^2$. Substituting in, we see that fixing the average density amounts to fixing $v(\RDM)^6/M(\RDM)^2$, or equivalently, fixing
\begin{equation*}
\frac{M(\RDM)}{v(\RDM)^3}.
\end{equation*}
Thus we get a Tully-Fisher-like relation with slope $x=3$. The argument just given is basically the same as the one given to support a Tully-Fisher relation with slope 3 in the $\Lambda$CDM paradigm (see \cite{white97,gurovich10}), with $\RDM$ replaced by, for example, $r_{200}$, the radius enlosing a region with average density equal to 200 times the critical density of the universe.

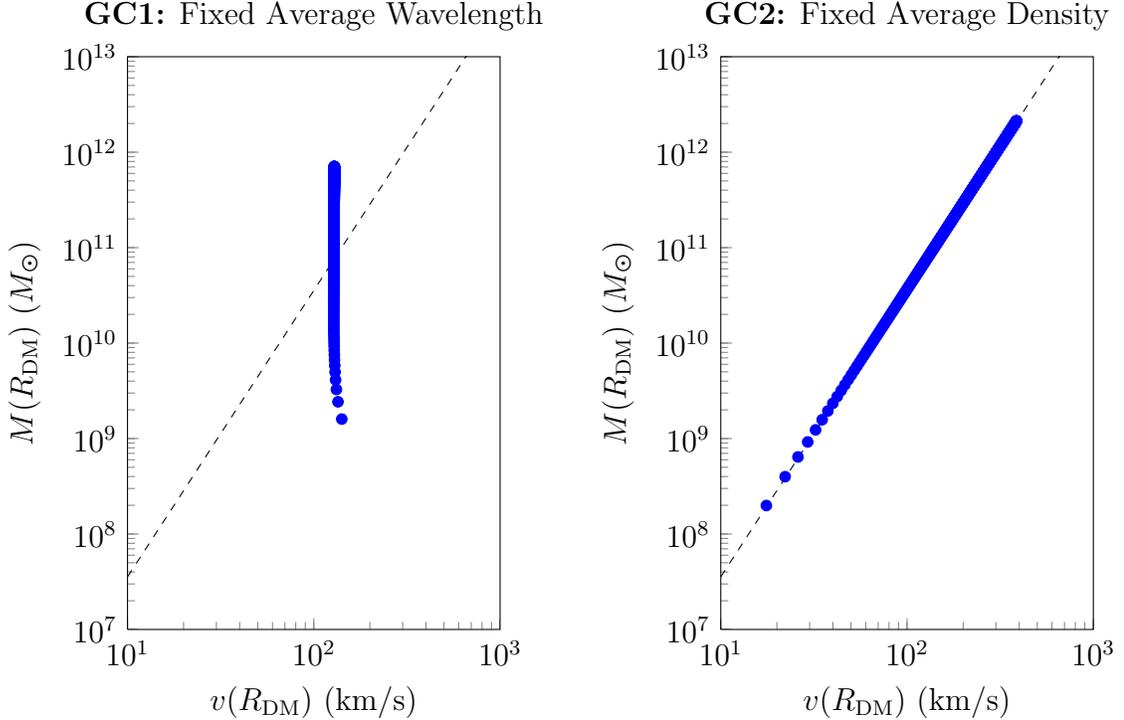
\begin{figure}[hbt]
\centering

\begin{tikzpicture}[baseline]
\begin{loglogaxis}[
	title={{\bfseries GC1:} Fixed Average Wavelength},
	width=1.95in, height=3in,
	scale only axis,
	xlabel=$v(\RDM)$ (km/s), ylabel=$M(\RDM)$ ($M_\Sun$),
	xtick pos=left, ytick pos=left,
	xmin=10,xmax=1000,
	ymin=1e7, ymax=1e13,
]
\addplot[only marks,blue] table[x=v,y=M]{fix_average_wavelength.data};
\addplot[black,dashed,domain=10:1000] {1.5e8*(x/3e5)^3/1.5607e-13};
\end{loglogaxis}
\end{tikzpicture}%
\hspace{0.5cm}%
\begin{tikzpicture}[baseline]
\begin{loglogaxis}[
	title={\bfseries GC2:} Fixed Average Density,
	width=1.95in, height=3in,
	scale only axis,
	xlabel=$v(\RDM)$ (km/s), ylabel=$M(\RDM)$ ($M_\Sun$),
	xtick pos=left, ytick pos=left,
	xmin=10,xmax=1000,
	ymin=1e7, ymax=1e13,
]
\addplot[only marks,blue] table[x=v,y=M]{fix_average_density.data};
\addplot[black,dashed,domain=10:1000] {1.5e8*(x/3e5)^3/1.5607e-13};
\end{loglogaxis}
\end{tikzpicture}

\caption{Numerical demonstration that \textbf{GC1} does not give a Tully-Fisher-like relation and \textbf{GC2} does. At left, the static states $n=0,1,\ldots,800$ with $\Upsilon=\SI{10}{\lightyear^{-1}}$ and a fixed average wavelength of $\SI{1500}{\lightyear}$. At right, the static states $n=0,1,\ldots,800$ with $\Upsilon=\SI{10}{\lightyear^{-1}}$ and a fixed average density of $\num{e-17}$. The dashed lines have slope $3$. In these plots we have chosen not to adhere to our convention of using geometric units to facilitate comparison to \cref{fig:btfr}.}
\label{fig:globalscalings}

\end{figure}

\section{Discussion and Conjectures}

In the previous section we discovered that to get a Tully-Fisher-like relation in the context of wave dark matter we need to impose a boundary condition (\textbf{BC1} or \textbf{BC2}). We could also impose the global condition \textbf{GC2}, but this does not use any of the characteristics of wave dark matter. Because \textbf{GC2} applies to every theory of dark matter, we will ignore it in the discussion which follows.

\subsection{The Slope of the Baryonic Tully-Fisher Relation}
In this section we comment briefly on the connection between a Tully-Fisher-like relation for wave dark matter halos and the observed baryonic Tully-Fisher relation. Define the \emph{baryon fraction} of a galaxy to be the fraction of the galaxy's mass which is baryonic. If the baryon fraction were the same for every galaxy, then the baryonic Tully-Fisher relation would have the same slope as a Tully-Fisher-like relation for dark halos. However, it is observed that the baryon fraction of galaxies decreases with size. Small galaxies contain mostly dark matter, and only very large galaxies contain baryons in sufficient quantities to approach the cosmic baryon fraction, which is around $1/6$. This is known as the \emph{missing baryon problem} (see \cite{mcgaugh10}).

In $M$ vs. $v$ log-log space, when switching from plotting dark mass to baryonic mass, all galaxies shift vertically downward. The missing baryon problem implies that large galaxies shift less than small galaxies. Therefore, if the baryonic Tully-Fisher relation is related to a Tully-Fisher-like relation for dark halos, the former relation should have a steeper slope than the latter.

\subsection{Conjectures}

In this section we describe a testable prediction of the wave dark matter model. Assuming a new idea called ``dark matter saturation'' described below and made precise in \cref{physconj:1}, given the distribution of the baryonic matter in a galaxy, we should be able to compute the distribution of the wave dark matter and hence the total mass distribution, as long as everything is approximately static and spherically symmetric. With the total mass distribution in hand it will be possible to compare to observations---for example, to compute rotation curves. These predictions will be possible once specific stability questions, described below, are answered.

If we assume that dark matter in many galaxies is approximately static and spherically symmetric, then it makes sense to look for static, spherically symmetric solutions to wave dark matter. Referring back to the scaling equations, we see that there is a two-parameter family of static spherically symmetric solutions. These two parameters are $(n,\alpha)$, where $n \geq 0$ is an integer referring to the excited state and $\alpha$ is the scaling factor. Recall that $n=0$ refers to the ground state, and that more generally $n$ is the number of zeros of $F(r)$.

Now suppose that we solve for wave dark matter solutions in the presence of regular matter (which, for our purposes here, we will also assume is spherically symmetric and static). While the regular matter, through gravity, will change each wave dark matter solution, we still expect to find a two parameter $(n,\alpha)$ family of solutions.

One great benefit of our discussion so far is that it removes one of the parameters, namely the continuous parameter $\alpha$. In this chapter, we showed that a boundary condition (\textbf{BC1} or \textbf{BC2}) is roughly what is needed to recover a Tully-Fisher-like relation. To be clear, we have not explained why such a boundary condition should be expected from the theory, just that something close to it seems necessary to be compatible with the observations which make up the baryonic Tully-Fisher relation. In any case, assuming one of these boundary conditions effectively determines $\alpha$, leaving only one parameter free, namely $n$.

What should the value of $n$ be? Given a regular matter distribution, for each $n$ we get a precise wave dark matter distribution which satisfies the above boundary condition. Some of these wave dark matter solutions will be stable and some will be unstable. Numerical results show that ground states of wave dark matter are stable while excited states, without any other matter around, are unstable \cite{lai07}. On the other hand, \cite{lai07} also shows that excited states may be stabilized by the presence of another matter field.  Our conjecture is that the regular, visible, baryonic matter stabilizes wave dark matter in galaxies. In fact, given a regular matter distribution, we conjecture that there exists a largest value of $n$, call it $N$, for which the corresponding wave dark matter solution is stable.  We conjecture that galaxies are described best by choosing $n = N$.

The total mass of the spherically symmetric static states described in this paper increases with $n$ and we expect the same to be true for the distributions of dark matter we are describing now. Thus, setting $n = N$ is consistent with the idea that galaxies are ``dark matter saturated'', meaning that they are holding as much dark matter as possible, subject to the boundary condition above. Since galaxies typically exist in clusters which are mostly made of dark matter, it seems likely that they are regularly bombarded by dark matter, so that it would be natural for them to reach this state of saturation $n=N$.

To make this discussion precise, we need a model for the regular matter.  In order to study stability questions, we need to know how the regular matter distribution changes as the wave dark matter distribution changes, and vice versa. 

For example, a relatively simple way to model regular matter is with another scalar field.  There are others ways to model regular matter which we do not discuss here.  We caution the reader that this second scalar field is only a practical device for approximately modeling the regular baryonic matter---namely the gas, dust, and stars in a galaxy. In no way are we suggesting a second scalar field should exist physically. Furthermore, the parameters of this second scalar field are chosen simply to fit the regular matter distribution of a galaxy as well as possible.

Let $f_1$ exactly model wave dark matter with its fundamental constant of nature $\Upsilon_1$.  Let $f_2$ be a convenient device for approximately modeling the regular baryonic matter consisting of the gas, dust, and stars of a galaxy, where $\Upsilon_2$, which is not a fundamental constant of nature, is chosen as desired to best fit the regular matter.  The action is then  
\begin{equation}
\mathcal{F}(g,f_1,f_2) = \int \left[R_g-2\Lambda - 16\pi\left(\abs{f_1}^2 + \frac{\abs{df_1}^2}{\Upsilon_1^2} + \abs{f_2}^2 + \frac{\abs{df_2}^2}{\Upsilon_2^2} \right)\right]\,dV_g,
\end{equation}
where $\Lambda$ is the cosmological constant and may as well be assumed to be zero for our discussion on the scale of galaxies.  The above action results in the following Euler-Lagrange equations:
\begin{subequations} \label{eq:EL}
\begin{align}
\begin{split}
G+\Lambda g &= 8\pi\Bigg[ \frac{df_1\tensor d\bar{f}_1+d\bar{f}_1\tensor df_1}{\Upsilon_1^2}-\left(\abs{f_1}^2+\frac{\abs{df_1}^2}{\Upsilon_1^2}\right)g \\
& \qquad\quad +\frac{df_2\tensor d\bar{f}_2+d\bar{f}_2\tensor df_2}{\Upsilon_2^2}-\left(\abs{f_2}^2+\frac{\abs{df_2}^2}{\Upsilon_2^2}\right)g \Bigg]
\end{split} \\
\Box f_1 &= \Upsilon_1^2 f_1 \\
\Box f_2 &= \Upsilon_2^2 f_2
\end{align}
\end{subequations}
We approximate the regular matter distribution with a ground state solution for $f_2$. We have two free parameters with which to approximate the given regular matter distribution, namely $\Upsilon_2$ and the ``scaling parameter'' for the ground state solution, which we could call $\alpha_2$. This should allow us to choose two physical characteristics of the regular matter. We choose to specify the total mass $M_b$ and the radius $R_b$ of the regular matter, perhaps defined as that radius within which some fixed percentage of the regular matter is contained.

As described already, we impose the boundary condition \textbf{BC1} or \textbf{BC2} for $f_1$. Then for each choice of $n \ge 0$, we get a solution to the system of equations \eqref{eq:EL} which reduces to a system of ODEs in a manner very similar as before. Some solutions will be stable and some will be unstable. Since we now have the dynamical equations \eqref{eq:EL}, these stability questions are now fairly well defined.  

Hence, for each $M_b$, $R_b$, and $n$, we get a static, spherically symmetric solution to \cref{eq:EL} satisfying the boundary condition \textbf{BC1} or \textbf{BC2}.  
\begin{mathconjecture}
In the low-field, nonrelativistic limit, for each choice of total regular mass $M_b$ and regular matter radius $R_b$, there exists an integer $N \ge 0$ such that static, spherically symmetric solutions to \cref{eq:EL} satisfying the boundary condition \textbf{BC1} or \textbf{BC2} with $n \le N$ are stable and those with $n > N$ are unstable.   
\end{mathconjecture}
If this math conjecture is true, or even if there is just a largest or most massive stable $n$, then there is a natural physics conjecture to make as well.
\begin{physicsconjecture}[``Dark Matter Saturation''] \label{physconj:1}
The dark matter and total matter distributions of most galaxies which are approximately static and spherically symmetric are approximately described by static, spherically symmetric solutions to \cref{eq:EL} satisfying the boundary condition \textbf{BC1} or \textbf{BC2} with $n = N$.
\end{physicsconjecture}
This last conjecture only leaves two parameters open, namely $\Upsilon$, the fundamental constant of nature in the wave dark matter theory, and the length scale (or halflength) from \textbf{BC1} or the boundary dark matter density from \textbf{BC2}. Hence, there are effectively only two parameters left open with which to fit the dark matter and total matter distributions of most of the galaxies in the universe. Therefore the physics conjecture stated above should be a good test of the wave dark matter theory.
\appendix
\chapter{The Einstein-Klein-Gordon-Maxwell Equations} \label{app:action}

In this appendix we derive the Einstein-Klein-Gordon-Maxwell equations from an action using a variational principle. On a spacetime with metric $g$, complex scalar field $f$, and one-form $A$, we define the action
\begin{equation} \label{eqr:ekgmaction}
\SF_U(g,f,A) = \int_U \bigg[R-2\Lambda -16\pi\bigg(\frac{\abs{df}^2}{\Upsilon^2}+\abs{f}^2+\frac14\abs{dA}^2\bigg)\bigg]\,dV.
\end{equation}
Here $U$ is any precompact open set with smooth boundary, so that the integral converges. We will require that $g$, $f$ and $A$ are critical points of the functional $\SF_U$ for all possible choices of $U$ and all variations compactly supported within $U$. In the three sections below we vary each of $g$, $f$, $A$ independently to derive the Einstein equation, the Klein-Gordon equation, and Maxwell's equations.

We will make use of the Hodge star operator $\hodge$ and the codifferential operator $\delta$ which is adjoint to $d$ with respect to the scalar product $\ip{\cdot}{\cdot}$. For brevity we will omit the domain of integration $U$. Because the variations we consider are compactly supported, all boundary terms are zero.

\section{Varying the Metric \texorpdfstring{$g$}{g}}
In the calculation below we will use the following well-known results concerning a variation $\dot{g}=h$ of the metric:
\begin{align*}
\dot{R} &= -\ip{h}{\Ric} + \divergence(\divergence h) - \Box(\trace h) \\
\dot{dV} &= \frac12 (\trace h)\,dV.
\end{align*}
We will also need to know the variations of $\abs{df}^2$ and $\abs{dA}^2$. First, since $g_{\lambda\mu}g^{\mu\nu}=\delta_{\lambda}^\nu$, differentiation gives $g_{\lambda\mu}\dot{g^{\mu\nu}}=-h_{\lambda\mu}g^{\mu\nu}$. Multiplying by $g^{\lambda\rho}$ gives $\dot{g^{\nu\rho}}=-h^{\nu\rho}$. To calculate the variation of $\abs{df}^2$, begin with
\begin{equation*}
\abs{df}^2 = g^{\lambda\mu}f_\lambda\cc{f}_\mu.
\end{equation*}
Differentiating, we get
\begin{equation*}
\dot{\abs{df}^2} = -h^{\lambda\mu}f_\lambda\cc{f}_\mu = -\ip{h}{df\tensor d\cc{f}} = -\ip{h}{d\cc{f}\tensor df}.
\end{equation*}
Therefore
\begin{equation*}
\dot{\abs{df}^2} = -\ip{h}{\frac12(df\tensor d\cc{f} + d\cc{f}\tensor df)}.
\end{equation*}
To calculate the variation of $\abs{dA}^2$, write $F=dA$ so that in coordinates,
\begin{equation*}
\abs{dA}^2 = g^{\lambda\mu}g^{\nu\rho}F_{\lambda\nu}F_{\mu\rho}.
\end{equation*}
Differentiating, we get
\begin{equation*}
\dot{\abs{dA}^2} = -h^{\lambda\mu}g^{\nu\rho}F_{\lambda\nu}F_{\mu\rho} - g^{\lambda\mu}h^{\nu\rho}F_{\lambda\nu}F_{\mu\rho}.
\end{equation*}
The two terms above are equal by the anti-symmetry of $F$. Thus
\begin{equation*}
\dot{\abs{dA}^2} = -2\ip{h}{\mathbf{C}_{13}(dA\tensor dA)}
\end{equation*}
where $\mathbf{C}_{13}(dA\tensor dA)$ denotes the tensor obtained by contracting on the first and third slots of $dA\tensor dA$.

With these results in hand, if we perform the variation in \cref{eqr:ekgmaction}, we get
\begin{equation*}
\begin{split}
0 &= \int \Bigg\{ -\ip{h}{\Ric} + 8\pi\ip*{h}{\frac{df\tensor d\cc{f} + d\cc{f}\tensor df}{\Upsilon^2} + \mathbf{C}_{13}(dA\tensor dA)} \\
&\qquad\qquad + (\trace h)\bigg[\frac{R}{2}-\Lambda -8\pi\bigg(\frac{\abs{df}^2}{\Upsilon^2}+\abs{f}^2+\frac14\abs{dA}^2\bigg)\bigg]\Bigg\}\,dV.
\end{split}
\end{equation*}
We immediately dropped the terms $\divergence(\divergence h)$ and $\Box(\trace h)$ because using the divergence theorem, they give boundary terms which are zero. If we note that $\trace h = \ip{h}{g}$, the equation above takes the form
\begin{equation*}
0 = \int -\ip{h}{Z}\,dV
\end{equation*}
where $Z$ is the $(0,2)$-tensor
\begin{equation*}
\begin{split}
Z &= \Ric-\frac12 Rg + \Lambda g - 8\pi\bigg[\frac{df\tensor d\cc{f} + d\cc{f}\tensor df}{\Upsilon^2} -\left(\frac{\abs{df}^2}{\Upsilon^2}+\abs{f}^2\right)g \\
&\qquad\qquad +\mathbf{C}_{13}(dA\tensor dA) - \frac14\abs{dA}^2g\bigg].
\end{split}
\end{equation*}
This equation must hold for all variations $h$, which implies that $Z=0$. Thus the variational principle has led us to the Einstein equation
\begin{equation} \label{eqr:einstein}
G+\Lambda g = 8\pi T
\end{equation}
(recall the definition $G=\Ric-(1/2)Rg$) where the energy-momentum tensor $T$ is given by
\begin{equation} \label{eq:kgmT}
T = \frac{df\tensor d\cc{f} + d\cc{f}\tensor df}{\Upsilon^2} -\left(\frac{\abs{df}^2}{\Upsilon^2}+\abs{f}^2\right)g +\mathbf{C}_{13}(dA\tensor dA) - \frac14\abs{dA}^2g.
\end{equation}

\section{Varying the Scalar Function \texorpdfstring{$f$}{f}}
It suffices to consider the variation of the integral
\begin{equation*}
\int \big[\abs{df}^2+\Upsilon^2\abs{f}^2\big]\,dV = \int \big[\ip{df}{d\cc{f}} + \Upsilon^2 f\cc{f}\big]\,dV.
\end{equation*}
Performing the variation, we get
\begin{equation*}
\begin{split}
0 &= \int \big[\ip{d\dot{f}}{d\cc{f}} + \ip{df}{d\cc{\dot{f}}} + \Upsilon^2\dot{f}\cc{f} + \Upsilon^2 f\cc{\dot{f}}\big]\,dV \\
&= \int \big[ \ip{\dot{f}}{\delta d\cc{f}} + \ip{\delta df}{\cc{\dot{f}}} + \Upsilon^2\dot{f}\cc{f} + \Upsilon^2 f\cc{\dot{f}}\big]\,dV \\
&= \int \big[ \dot{f}(\delta d+\Upsilon^2)\cc{f} + \cc{\dot{f}}(\delta d + \Upsilon^2)f\big]\,dV.
\end{split}
\end{equation*}
This equation must hold for all variations $\dot{f}$. Since we can vary the real and imaginary parts of $f$ independently, this implies that $(\delta d+\Upsilon^2)f=0$ and $(\delta d + \Upsilon^2)\cc{f}=0$. The second equation is the complex conjugate of the first and is superfluous. The operator $\delta d$ acting on functions is equal to $\delta d+d\delta$, which is the Laplace-de Rham operator, the negative of the Laplace-Beltrami operator (in this case the d'Alembertian) $\Box$. Therefore the variational principle has led us to the Klein-Gordon equation
\begin{equation} \label{eq:kg}
\Box f = \Upsilon^2 f.
\end{equation}

\section{Varying the One-Form \texorpdfstring{$A$}{A}}
It suffices to consider the variation of the integral
\begin{equation*}
\int\abs{dA}^2\,dV = \int dA\wedge\hodge dA.
\end{equation*}
Performing the variation, we get
\begin{equation} \label{eq:Avar1}
0 = \int d\dot{A}\wedge\hodge dA + dA\wedge\hodge d\dot{A}.
\end{equation}
The two summands above are equal because, by the definition of the Hodge star,
\begin{equation*}
d\dot{A}\wedge\hodge dA = dA\wedge\hodge d\dot{A} = \ip{dA}{d\dot{A}}\,dV.
\end{equation*}
Since
\begin{equation*}
d(\dot{A}\wedge\hodge dA) = d\dot{A}\wedge\hodge dA - \dot{A}\wedge d\hodge dA
\end{equation*}
\cref{eq:Avar1} becomes
\begin{equation*}
0 = \int \dot{A}\wedge d\hodge dA.
\end{equation*}
This equation must hold for all variations $\dot{A}$, which implies  that $d\hodge dA=0$, and this equation is equivalent to $\hodge d\hodge dA=0$. Recalling that, up to a sign, $\hodge d\hodge$ is equal to the codifferential operator $\delta$, the equation we derive from varying $A$ is
\begin{equation} \label{eq:maxwell}
\delta dA=0.
\end{equation}
This is half of what are usually referred to as Maxwell's equations. A more conventional way of writing them is to define the Faraday tensor $F=dA$. We automatically have $dF=0$; this equation paired with \cref{eq:maxwell} becomes
\begin{align*}
dF &= 0 \\
\delta F &= 0.
\end{align*}

\section{The Einstein-Klein-Gordon-Maxwell Equations}
Requiring the metric $g$, the complex scalar function $f$, and the one-form $A$ to be critical points of the functional \eqref{eqr:ekgmaction} has led us to the Einstein-Klein-Gordon-Maxwell equations collected in \cref{eqr:einstein,eq:kgmT,eq:kg,eq:maxwell}, which we repeat here:
\begin{subequations}
\begin{align}
G + \Lambda g &= 8\pi T \\
\Box f &= \Upsilon^2 f \\
\delta dA &= 0.
\end{align}
\end{subequations}
where
\begin{equation*}
T = \frac{df\tensor d\cc{f} + d\cc{f}\tensor df}{\Upsilon^2} -\left(\frac{\abs{df}^2}{\Upsilon^2}+\abs{f}^2\right)g +\mathbf{C}_{13}(dA\tensor dA) - \frac14\abs{dA}^2g.
\end{equation*}

\chapter{The Einstein-Klein-Gordon Equations in Spherical Symmetry} \label{app:odederivation}

In this appendix we derive the system of ordinary differential equations for the spherically symmetric static states of wave dark matter. The calculations we carry out here are a special case of calculations carried out in \cite{parrythesis}, but are much simpler because we assume from the beginning that the metric \eqref{eqr:metric} is static.

The spherically symmetric static states are solutions to the Einstein-Klein-Gordon equations
\begin{subequations} \label{eqrr:ekg}
\begin{align}
G &= 8\pi\left( \frac{df\tensor d\bar{f}+d\bar{f}\tensor df}{\Upsilon^2} - \left(\frac{\abs{df}^2}{\Upsilon^2}+\abs{f}^2\right)g\right) \label{eqrr:ekg1} \\
\Box f &= \Upsilon^2 f. \label{eqrr:ekg2}
\end{align}
\end{subequations}
We begin with the spherically symmetric static metric
\begin{equation} \label{eqr:metric}
g = -e^{2V(r)}\,dt^2 + \left(1-\frac{2M(r)}{r}\right)^{-1}\,dr^2 + r^2\,d\theta^2 + r^2\sin^2\theta\,d\phi^2.
\end{equation}
We define
\begin{equation} \label{eqr:Phi}
\Phi(r) = 1-\frac{2M(r)}{r}
\end{equation}
so that the metric \eqref{eqr:metric} can be written as
\begin{equation} \label{eqr:metric_Phi}
g = -e^{2V(r)}\,dt^2 + \Phi(r)^{-1}\,dr^2 + r^2\,d\theta^2 + r^2\sin^2\phi\,d\phi^2.
\end{equation}
\begin{lemma} \label{lem:christoffel}
The nonzero Christoffel symbols of the metric \eqref{eqr:metric_Phi} in the $(t,r,\theta,\phi)$ coordinates are as follows:
\begin{align*}
\Gamma_{tr}^t = \Gamma_{rt}^t &= V_r & \Gamma_{r\theta}^\theta = \Gamma_{\theta r}^\theta &= r^{-1} \\
\Gamma_{tt}^r &= V_re^{2V}\Phi & \Gamma_{\phi\phi}^\theta &= -\sin\theta\cos\theta \\
\Gamma_{rr}^r &= -(1/2)\Phi^{-1}\Phi_r & \Gamma_{r\phi}^\phi = \Gamma_{\phi r}^\phi &= r^{-1} \\
\Gamma_{\theta\theta}^r &= -r\Phi & \Gamma_{\theta\phi}^\phi = \Gamma_{\phi\theta}^\phi &= \cot\theta. \\
\Gamma_{\phi\phi}^r &= -r\sin^2\theta\,\Phi
\end{align*}
\end{lemma}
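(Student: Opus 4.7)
The plan is to apply the standard Christoffel symbol formula
\begin{equation*}
\Gamma_{ij}^k = \tfrac{1}{2}g^{kl}\bigl(\d_i g_{jl} + \d_j g_{il} - \d_l g_{ij}\bigr)
\end{equation*}
directly to the metric \eqref{eqr:metric_Phi}. Since the metric is diagonal, $g^{kl}$ is nonzero only for $k=l$, so the formula collapses to
\begin{equation*}
\Gamma_{ij}^k = \tfrac{1}{2}g^{kk}\bigl(\d_i g_{jk} + \d_j g_{ik} - \d_k g_{ij}\bigr),
\end{equation*}
which eliminates most index combinations at a stroke. First I would record the inverse metric components $g^{tt} = -e^{-2V}$, $g^{rr} = \Phi$, $g^{\theta\theta} = r^{-2}$, $g^{\phi\phi} = (r^2\sin^2\theta)^{-1}$, along with the only nonzero partial derivatives of the diagonal entries: $\d_r g_{tt} = -2V_r e^{2V}$, $\d_r g_{rr} = -\Phi^{-2}\Phi_r$, $\d_r g_{\theta\theta} = 2r$, $\d_r g_{\phi\phi} = 2r\sin^2\theta$, and $\d_\theta g_{\phi\phi} = 2r^2\sin\theta\cos\theta$.

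Next I would proceed upper-index by upper-index. For $k=t$, the only nonvanishing contribution comes from $\d_r g_{tt}$, yielding $\Gamma^t_{tr} = \Gamma^t_{rt} = V_r$. For $k=r$, I would get four contributions: one from $\d_r g_{tt}$ (giving $\Gamma^r_{tt} = V_r e^{2V}\Phi$, with the sign flip coming from $-\d_r g_{tt}$ and the minus in $g_{tt}$), one from $\d_r g_{rr}$ (giving $\Gamma^r_{rr} = -\tfrac{1}{2}\Phi^{-1}\Phi_r$), and two from $-\d_r g_{\theta\theta}$ and $-\d_r g_{\phi\phi}$ (giving $\Gamma^r_{\theta\theta} = -r\Phi$ and $\Gamma^r_{\phi\phi} = -r\sin^2\theta\,\Phi$). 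For $k=\theta$, I would get $\Gamma^\theta_{r\theta} = r^{-1}$ from $\d_r g_{\theta\theta}$ and $\Gamma^\theta_{\phi\phi} = -\sin\theta\cos\theta$ from $-\d_\theta g_{\phi\phi}$. For $k=\phi$, the contributions from $\d_r g_{\phi\phi}$ and $\d_\theta g_{\phi\phi}$ produce $\Gamma^\phi_{r\phi} = r^{-1}$ and $\Gamma^\phi_{\theta\phi} = \cot\theta$. The symmetry $\Gamma^k_{ij} = \Gamma^k_{ji}$ then supplies the remaining entries in the table.

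The computation has no real obstacle—the metric is diagonal and all components depend on at most two coordinates, so the entire argument amounts to bookkeeping. The only mild subtlety is tracking the minus sign from $g_{tt} < 0$ when computing $\Gamma^r_{tt}$, and remembering that $g^{rr} = \Phi$ (not $\Phi^{-1}$), which is where the factor of $\Phi$ in $\Gamma^r_{tt}$ and $\Gamma^r_{\theta\theta}$ originates. A sanity check would be to verify that my list agrees with the well-known Schwarzschild Christoffel symbols in the limit $e^{2V} = \Phi$, which it does.
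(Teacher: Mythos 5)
Your proposal is correct and follows the same route as the paper, which simply cites the standard formula $\Gamma_{\lambda\mu}^\nu = \tfrac12 g^{\nu\rho}(g_{\lambda\rho,\mu}+g_{\mu\rho,\lambda}-g_{\lambda\mu,\rho})$ and notes the results follow by short computation; your exploitation of the diagonal metric and your recorded inverse components and derivatives all check out against the stated symbols.
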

\begin{proof}
These results follow immediately from short computations using the usual formula for the Christoffel symbols:
\begin{equation*}
\Gamma_{\lambda\mu}^\nu = \frac12 g^{\nu\rho}(g_{\lambda\rho,\mu}+g_{\mu\rho,\lambda}-g_{\lambda\mu,\rho}).\qedhere
\end{equation*}
\end{proof}

\begin{lemma} \label{lem:riccicurvature}
The nonzero components of the Ricci curvature tensor associated with the metric \eqref{eqr:metric_Phi} in the $(t,r,\theta,\phi)$ coordinates are as follows:
\begin{align*}
\Ric_{tt} &= (V_{rr}+V_r^2+2r^{-1}V_r)e^{2V}\Phi + (1/2)V_re^{2V}\Phi_r \\ 
\Ric_{rr} &= -(V_{rr}+V_r^2) - (1/2)V_r\Phi^{-1}\Phi_r - r^{-1}\Phi^{-1}\Phi_r \\
\Ric_{\theta\theta} &= 1-\Phi - (1/2)r\Phi_r - rV_r\Phi \\
\Ric_{\phi\phi} &= \sin^2\theta\,(1-\Phi - (1/2)r\Phi_r - rV_r\Phi).
\end{align*}
\end{lemma}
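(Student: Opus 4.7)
My plan is to compute the Ricci tensor directly from the Christoffel symbols provided by \cref{lem:christoffel} via the standard coordinate formula
\begin{equation*}
\Ric_{\mu\nu} = \partial_\lambda \Gamma^\lambda_{\mu\nu} - \partial_\nu \Gamma^\lambda_{\lambda\mu} + \Gamma^\lambda_{\lambda\rho}\Gamma^\rho_{\mu\nu} - \Gamma^\lambda_{\mu\rho}\Gamma^\rho_{\lambda\nu}.
\end{equation*}
The first step is to argue that only the four diagonal components $\Ric_{tt}$, $\Ric_{rr}$, $\Ric_{\theta\theta}$, $\Ric_{\phi\phi}$ are nonzero. This follows from the isometries of the metric \eqref{eqr:metric_Phi}: the Killing vector $\partial_t$ together with the rotational $SO(3)$-action force $\Ric$ to share the same block-diagonal structure in $(t,r,\theta,\phi)$ that $g$ has. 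Equivalently, one can check term-by-term from the formula using the table of Christoffels in \cref{lem:christoffel} that every putative off-diagonal component collapses to a sum of products that vanish.

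Next I would tabulate the trace $\Gamma^\lambda_{\lambda\mu}$ once and for all, since it appears in every diagonal component. Using \cref{lem:christoffel} one obtains
\begin{equation*}
\Gamma^\lambda_{\lambda t} = 0, \qquad \Gamma^\lambda_{\lambda r} = V_r - \tfrac{1}{2}\Phi^{-1}\Phi_r + \tfrac{2}{r}, \qquad \Gamma^\lambda_{\lambda \theta} = \cot\theta, \qquad \Gamma^\lambda_{\lambda \phi} = 0,
\end{equation*}
so that only $r$- and $\theta$-derivatives of these traces contribute, and the traces only appear paired with Christoffels carrying the appropriate index.

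With this bookkeeping in place, the remaining task is to carry out four scalar calculations. For $\Ric_{tt}$ only $\Gamma^r_{tt} = V_r e^{2V}\Phi$ and $\Gamma^t_{rt} = V_r$ contribute: differentiate $\Gamma^r_{tt}$ with respect to $r$, subtract nothing (the $\partial_t$-terms all vanish by staticity), add the trace-times-$\Gamma^r_{tt}$ term, and subtract $\Gamma^\lambda_{t\rho}\Gamma^\rho_{\lambda t}$, which reduces to $2(\Gamma^t_{rt})(\Gamma^r_{tt})\cdot(\text{sign})$ after accounting for the symmetric appearance. Collecting and simplifying yields the stated expression. The computations for $\Ric_{rr}$, $\Ric_{\theta\theta}$, and $\Ric_{\phi\phi}$ proceed in the same pattern; in each case one $r$-derivative (or $\theta$-derivative in the $\phi\phi$ case) is generated, together with a quadratic polynomial in the Christoffels that collapses after using $\sin^2+\cos^2=1$ and the definition of $\Phi$.

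The main obstacle is simply clerical: there are many terms to track, signs are easy to flip, and the $1/r$ and $\Phi^{-1}\Phi_r$ contributions to $\Gamma^\lambda_{\lambda r}$ combine with the bare Christoffels in ways that require care. A useful sanity check, which I would run at the end, is to verify that the Bianchi-implied identity $\nabla^\mu(\Ric_{\mu\nu} - \tfrac{1}{2}Rg_{\mu\nu}) = 0$ holds for the four components listed; equivalently, one may re-derive $\Ric_{tt}$ and $\Ric_{\theta\theta}$ via Cartan's structure equations using the orthonormal frame $e^{V}dt,\ \Phi^{-1/2}dr,\ r\,d\theta,\ r\sin\theta\,d\phi$ as an independent cross-check that no term has been dropped.
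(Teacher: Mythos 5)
Your proposal is correct and takes essentially the same route as the paper: a direct computation of each Ricci component from the Christoffel symbols of \cref{lem:christoffel} using the standard coordinate formula, with the quoted trace values $\Gamma^\lambda_{\lambda r}=V_r-\tfrac12\Phi^{-1}\Phi_r+\tfrac2r$ and $\Gamma^\lambda_{\lambda\theta}=\cot\theta$ reproducing exactly the combinations the paper writes out, and your $\Ric_{tt}$ sketch collapsing to the stated expression. The only cosmetic difference is that the paper verifies the vanishing of the off-diagonal components by explicit term-by-term computation rather than by the staticity/$SO(3)$ symmetry argument you invoke, and it does not include your optional orthonormal-frame or Bianchi cross-checks.
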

\begin{proof}
The formula for the components of the Ricci curvature tensor in terms of the metric and the Christoffel symbols is
\begin{equation} \label{eq:riccitensor}
\Ric_{\lambda\mu} = \Gamma_{\lambda\mu,\nu}^\nu - \Gamma_{\lambda\nu,\mu}^\nu + \Gamma_{\lambda\mu}^\nu\Gamma_{\nu\rho}^\rho - \Gamma_{\lambda\nu}^\rho\Gamma_{\mu\rho}^\nu.
\end{equation}
In the calculations below we enclose each of the four terms in \cref{eq:riccitensor} in square braces, omitting all the Christoffel symbols which are zero. Then we make whatever cancellations are possible, substitute in for the remaining symbols, and simplify. We have
\begin{align*}
\Ric_{tt} &= [\Gamma_{tt,r}^r] - [] + [\Gamma_{tt}^r(\Gamma_{rt}^t+\Gamma_{rr}^r+\Gamma_{r\theta}^\theta+\Gamma_{r\phi}^\phi)] - [\Gamma_{tt}^r\Gamma_{tr}^t + \Gamma_{tr}^t\Gamma_{tt}^r] \\
&= \Gamma_{tt,r}^r + \Gamma_{tt}^r\Gamma_{rr}^r + \Gamma_{tt}^r\Gamma_{r\theta}^\theta + \Gamma_{tt}^r\Gamma_{r\phi}^\phi - \Gamma_{tr}^t\Gamma_{tt}^r \\
&= (V_re^{2V}\Phi)_r - (1/2)V_re^{2V}\Phi_r + 2r^{-1}V_re^{2V}\Phi - V_r^2e^{2V}\Phi \\
&= V_{rr}e^{2V}\Phi + 2V_r^2e^{2V}\Phi + V_re^{2V}\Phi_r - (1/2)V_re^{2V}\Phi_r + 2r^{-1}V_re^{2V}\Phi - V_r^2e^{2V}\Phi \\
&= (V_{rr}+V_r^2+2r^{-1}V_r)e^{2V}\Phi + (1/2)V_re^{2V}\Phi_r
\intertext{and}
\Ric_{tr} &= [\Gamma_{tr,t}^t] - [] + [] - [] = 0 \\
\intertext{and}
\Ric_{t\theta} &= [] - [] + [] - [] = 0
\intertext{and}
\Ric_{t\phi} &= [] - [] + [] - [] = 0
\intertext{and}
\Ric_{rr} &= [\Gamma_{rr,r}^r] - [\Gamma_{rt,r}^t + \Gamma_{rr,r}^r + \Gamma_{r\theta,r}^\theta + \Gamma_{r\phi,r}^\phi] + [\Gamma_{rr}^r(\Gamma_{rt}^t+\Gamma_{rr}^r+\Gamma_{r\theta}^\theta+\Gamma_{r\phi}^\phi)] \\
&\qquad - [\Gamma_{rt}^t\Gamma_{rt}^t+\Gamma_{rr}^r\Gamma_{rr}^r+\Gamma_{r\theta}^\theta\Gamma_{r\theta}^\theta+\Gamma_{r\phi}^\phi\Gamma_{r\phi}^\phi] \\
&= -\Gamma_{rt,r}^t - \Gamma_{r\theta,r}^\theta - \Gamma_{r\phi,r}^\phi + \Gamma_{rr}^r\Gamma_{rt}^t + \Gamma_{rr}^r\Gamma_{r\theta}^\theta + \Gamma_{rr}^r\Gamma_{r\phi}^\phi - \Gamma_{rt}^t\Gamma_{rt}^t - \Gamma_{r\theta}^\theta\Gamma_{r\theta}^\theta - \Gamma_{r\phi}^\phi\Gamma_{r\phi}^\phi \\
&= -V_{rr} + 2r^{-2} + (-(1/2)\Phi^{-1}\Phi_r)V_r + (-(1/2)\Phi^{-1}\Phi_r)(2r^{-1}) - V_r^2 - 2r^{-2} \\
&= -(V_{rr}+V_r^2) - (1/2)V_r\Phi^{-1}\Phi_r - r^{-1}\Phi^{-1}\Phi_r
\intertext{and}
\Ric_{r\theta} &= [\Gamma_{r\theta,\theta}^\theta] - [\Gamma_{rt,\theta}^t + \Gamma_{rr,\theta}^r + \Gamma_{r\theta,\theta}^\theta + \Gamma_{r\phi,\theta}^\phi ] + [\Gamma_{r\theta}^\theta\Gamma_{\theta\phi}^\phi] - [\Gamma_{r\phi}^\phi\Gamma_{\theta\phi}^\phi] \\
&= \Gamma_{\theta\phi}^\phi(\Gamma_{r\theta}^\theta-\Gamma_{r\phi}^\phi) \\
&= \cot\theta(r^{-1}-r^{-1}) \\
&= 0
\intertext{and}
\Ric_{r\phi} &= [\Gamma_{r\phi,\phi}^\phi] - [\Gamma_{rt,\phi}^t + \Gamma_{rr,\phi}^r + \Gamma_{r\theta,\phi}^\theta + \Gamma_{r\phi,\phi}^\phi] + [] - [] = 0
\intertext{and}
\Ric_{\theta\theta} &= [\Gamma_{\theta\theta,r}^r] - [\Gamma_{\theta\phi,\theta}^\phi] + [\Gamma_{\theta\theta}^r(\Gamma_{rt}^t+\Gamma_{rr}^r+\Gamma_{r\theta}^\theta+\Gamma_{r\phi}^\phi)] - [\Gamma_{\theta r}^\theta\Gamma_{\theta\theta}^r + \Gamma_{\theta\theta}^r\Gamma_{\theta r}^\theta + \Gamma_{\theta\phi}^\phi\Gamma_{\theta\phi}^\phi] \\
&= \Gamma_{\theta\theta,r}^r - \Gamma_{\theta\phi,\theta}^\phi + \Gamma_{\theta\theta}^r\Gamma_{rt}^t + \Gamma_{\theta\theta}^r\Gamma_{rr}^r + \Gamma_{\theta\theta}^r\Gamma_{r\phi}^\phi - \Gamma_{\theta\theta}^r\Gamma_{\theta r}^\theta - \Gamma_{\theta\phi}^\phi\Gamma_{\theta\phi}^\phi \\
&= (-r\Phi)_r - (\cot\theta)_\theta + (-r\Phi)V_r + (-r\Phi)(-(1/2)\Phi^{-1}\Phi_r) \\
&\qquad + (-r\Phi)r^{-1} - (-r\Phi)r^{-1} - \cot^2\theta \\
&= -\Phi - r\Phi_r + \csc^2\theta - rV_r\Phi + (1/2)r\Phi_r - \cot^2\theta \\
&= 1-\Phi - (1/2)r\Phi_r - rV_r\Phi
\intertext{and}
\Ric_{\theta\phi} &= [\Gamma_{\theta\phi,\phi}^\phi] - [\Gamma_{\theta\phi,\phi}^\phi] + [] - [] = 0
\intertext{and}
\Ric_{\phi\phi} &= [\Gamma_{\phi\phi,r}^r + \Gamma_{\phi\phi,\theta}^\theta] - [] + [\Gamma_{\phi\phi}^r(\Gamma_{rt}^t+\Gamma_{rr}^r+\Gamma_{r\theta}^\theta+\Gamma_{r\phi}^\phi) + \Gamma_{\phi\phi}^\theta\Gamma_{\theta\phi}^\phi] \\
&\qquad - [\Gamma_{\phi r}^\phi\Gamma_{\phi\phi}^r + \Gamma_{\phi\theta}^\phi\Gamma_{\phi\phi}^\theta + \Gamma_{\phi\phi}^r\Gamma_{\phi r}^\phi + \Gamma_{\phi\phi}^\theta\Gamma_{\phi\theta}^\phi ] \\
&= \Gamma_{\phi\phi,r}^r + \Gamma_{\phi\phi,\theta}^\theta + \Gamma_{\phi\phi}^r\Gamma_{rt}^t + \Gamma_{\phi\phi}^r\Gamma_{rr}^r + \Gamma_{\phi\phi}^r\Gamma_{r\theta}^\theta - \Gamma_{\phi\phi}^r\Gamma_{\phi r}^\phi - \Gamma_{\phi\phi}^\theta\Gamma_{\phi\theta}^\phi \\
&= (-r\sin^2\theta\,\Phi)_r + (-\sin\theta\cos\theta)_\theta + (-r\sin^2\theta\,\Phi)V_r \\
&\qquad + (-r\sin^2\theta\,\Phi)(-(1/2)\Phi^{-1}\Phi_r) + (-r\sin^2\theta\,\Phi)r^{-1} - (-r\sin^2\theta\,\Phi)r^{-1} \\
&\qquad - (-\sin\theta\cos\theta)\cot\theta \\
&= -\sin^2\theta\,\Phi - r\sin^2\theta\,\Phi_r - \cos^2\theta + \sin^2\theta - rV_r\sin^2\theta\,\Phi \\
&\qquad + (1/2)r\sin^2\theta\,\Phi_r + \cos^2\theta \\
&= \sin^2\theta\,(1-\Phi - (1/2)r\Phi_r - rV_r\Phi).
\end{align*}
These calculations suffice because the Ricci curvature tensor is symmetric.
\end{proof}

\begin{lemma} \label{lem:scalarcurvature}
The scalar curvature associated with the metric \eqref{eqr:metric_Phi} in the $(t,r,\theta,\phi)$ coordinates is
\begin{align*}
R &= -2(V_{rr}+V_r^2+2r^{-1}V_r)\Phi - V_r\Phi_r + 2r^{-2}(1-\Phi-r\Phi_r).
\end{align*}
\end{lemma}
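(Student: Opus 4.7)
The plan is to compute the scalar curvature as the trace of the Ricci tensor, $R = g^{\mu\nu}\Ric_{\mu\nu}$, using the components of $\Ric$ computed in the preceding lemma. Since the metric \eqref{eqr:metric_Phi} is diagonal in the $(t,r,\theta,\phi)$ coordinates, the inverse metric is also diagonal with entries
\[
g^{tt} = -e^{-2V},\qquad g^{rr} = \Phi,\qquad g^{\theta\theta} = r^{-2},\qquad g^{\phi\phi} = r^{-2}\sin^{-2}\theta,
\]
so the contraction reduces to the four-term sum $R = g^{tt}\Ric_{tt} + g^{rr}\Ric_{rr} + g^{\theta\theta}\Ric_{\theta\theta} + g^{\phi\phi}\Ric_{\phi\phi}$.

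First I would multiply through each Ricci component from \cref{lem:riccicurvature} by the corresponding inverse metric entry. The $e^{-2V}$ factor in $g^{tt}$ cancels cleanly against the $e^{2V}$ in $\Ric_{tt}$, yielding $-(V_{rr}+V_r^2+2r^{-1}V_r)\Phi - \tfrac12 V_r\Phi_r$. The $\Phi$ factor in $g^{rr}$ absorbs the $\Phi^{-1}$ in two terms of $\Ric_{rr}$, producing $-(V_{rr}+V_r^2)\Phi - \tfrac12 V_r\Phi_r - r^{-1}\Phi_r$. The angular pieces $g^{\theta\theta}\Ric_{\theta\theta}$ and $g^{\phi\phi}\Ric_{\phi\phi}$ contribute identical expressions, each equal to $r^{-2}(1-\Phi) - \tfrac12 r^{-1}\Phi_r - r^{-1}V_r\Phi$, thanks to the $\sin^2\theta$ factor in $\Ric_{\phi\phi}$ canceling against the $\sin^{-2}\theta$ in $g^{\phi\phi}$.

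Then I would collect like terms. The $(V_{rr}+V_r^2)\Phi$ contributions combine to $-2(V_{rr}+V_r^2)\Phi$, and the $r^{-1}V_r\Phi$ contributions (two from the $tt$ term, one each from the angular terms) combine to $-4r^{-1}V_r\Phi$, so all curvature-of-$V$ pieces bundle into $-2(V_{rr}+V_r^2+2r^{-1}V_r)\Phi$. The two $\tfrac12 V_r\Phi_r$ terms combine into $-V_r\Phi_r$; the $r^{-1}\Phi_r$ terms sum to $-2r^{-1}\Phi_r$; and the purely angular pieces contribute $2r^{-2}(1-\Phi)$. Regrouping $-2r^{-1}\Phi_r + 2r^{-2}(1-\Phi) = 2r^{-2}(1-\Phi - r\Phi_r)$ produces the stated formula.

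There is no real obstacle here: the calculation is entirely mechanical once \cref{lem:riccicurvature} is in hand, and the main thing to watch is bookkeeping of signs and of the $e^{\pm 2V}$, $\Phi^{\pm 1}$, and $\sin^{\pm 2}\theta$ cancellations. The only mildly non-obvious step is noticing that the $r^{-1}\Phi_r$ term arising from $g^{rr}\Ric_{rr}$ combines with the two half-contributions from the angular part to yield the neat $2r^{-2}(1-\Phi - r\Phi_r)$ grouping displayed in the statement.
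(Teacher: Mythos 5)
Your proposal is correct and follows exactly the paper's argument: trace the Ricci components of \cref{lem:riccicurvature} against the diagonal inverse metric $g^{tt}=-e^{-2V}$, $g^{rr}=\Phi$, $g^{\theta\theta}=r^{-2}$, $g^{\phi\phi}=r^{-2}\sin^{-2}\theta$ and collect terms. The intermediate expressions and the final regrouping into $-2(V_{rr}+V_r^2+2r^{-1}V_r)\Phi - V_r\Phi_r + 2r^{-2}(1-\Phi-r\Phi_r)$ match the paper's computation.
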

\begin{proof}
The scalar curvature is the trace of the Ricci curvature tensor. Using \cref{lem:riccicurvature} and the fact that the metric \eqref{eqr:metric_Phi} is diagonal,
\begin{align*}
R &= g^{tt}\Ric_{tt} + g^{rr}\Ric_{rr} + g^{\theta\theta}\Ric_{\theta\theta} + g^{\phi\phi}\Ric_{\phi\phi} \\
&= (-e^{-2V})[(V_{rr}+V_r^2+2r^{-1}V_r)e^{2V}\Phi + (1/2)V_re^{2V}\Phi_r] \\
&\qquad + \Phi[-(V_{rr}+V_r^2) - (1/2)V_r\Phi^{-1}\Phi_r - r^{-1}\Phi^{-1}\Phi_r] \\
&\qquad + r^{-2}[1-\Phi - (1/2)r\Phi_r - rV_r\Phi] \\
&\qquad + r^{-2}\sin^{-2}\theta\,[\sin^2\theta\,(1-\Phi - (1/2)r\Phi_r - rV_r\Phi)] \\
&= -(V_{rr}+V_r^2+2r^{-1}V_r)\Phi - (1/2)V_r\Phi_r - (V_{rr}+V_r^2)\Phi - (1/2)V_r\Phi_r - r^{-1}\Phi_r \\
&\qquad + 2r^{-2} - 2r^{-2}\Phi - r^{-1}\Phi_r - 2r^{-1}V_r\Phi \\
&= -2(V_{rr}+V_r^2+2r^{-1}V_r)\Phi - V_r\Phi_r + 2r^{-2}(1-\Phi-r\Phi_r).\qedhere
\end{align*}
\end{proof}

\begin{lemma} \label{lem:einsteincurvature}
The nonzero components of the Einstein curvature tensor $G$ associated with the metric \eqref{eqr:metric_Phi} in the $(t,r,\theta,\phi)$ coordinates are as follows:
\begin{align*}
G_{tt} &= r^{-2}e^{2V}(1-\Phi-r\Phi_r) \\
G_{rr} &= -r^{-2}\Phi^{-1}(1-\Phi-2rV_r\Phi) \\
G_{\theta\theta} &= r^2[(V_{rr}+V_r^2+r^{-1}V_r)\Phi + (1/2)r^{-1}\Phi_r +  (1/2)V_r\Phi_r] \\
G_{\phi\phi} &= r^2\sin^2\theta\,[(V_{rr}+V_r^2+r^{-1}V_r)\Phi + (1/2)r^{-1}\Phi_r +  (1/2)V_r\Phi_r].
\end{align*}
\end{lemma}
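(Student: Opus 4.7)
The plan is to invoke the definition $G_{\lambda\mu} = \text{Ric}_{\lambda\mu} - \tfrac12 R\, g_{\lambda\mu}$ and substitute the expressions for $\Ric_{\lambda\mu}$ and $R$ from \cref{lem:riccicurvature,lem:scalarcurvature}, together with the explicit metric components from \eqref{eqr:metric_Phi}. The first observation, which kills most components for free, is that the metric $g$ is diagonal in the $(t,r,\theta,\phi)$ coordinates and \cref{lem:riccicurvature} shows that $\Ric$ is diagonal as well, so $G$ is also diagonal; in particular $G_{tr}=G_{t\theta}=G_{t\phi}=G_{r\theta}=G_{r\phi}=G_{\theta\phi}=0$ with no further work required.

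For $G_{tt}$, I would use $g_{tt}=-e^{2V}$, so that $-\tfrac12 R g_{tt} = \tfrac12 R e^{2V}$. Substituting the expression for $R$ from \cref{lem:scalarcurvature} and adding $\Ric_{tt}$ from \cref{lem:riccicurvature}, the $(V_{rr}+V_r^2+2r^{-1}V_r)e^{2V}\Phi$ terms cancel against the matching terms from $R$, and the $(1/2)V_r e^{2V}\Phi_r$ terms cancel similarly, leaving exactly $r^{-2}e^{2V}(1-\Phi-r\Phi_r)$. For $G_{rr}$, I would use $g_{rr}=\Phi^{-1}$; the contribution $-\tfrac12 R\Phi^{-1}$ cancels the $-(V_{rr}+V_r^2)$ and $-(1/2)V_r\Phi^{-1}\Phi_r$ pieces of $\Ric_{rr}$, and the leftover terms combine as $2r^{-1}V_r - r^{-2}\Phi^{-1}(1-\Phi)$, which factors to the stated $-r^{-2}\Phi^{-1}(1-\Phi-2rV_r\Phi)$.

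For the angular components, since $g_{\phi\phi} = \sin^2\theta\, g_{\theta\theta}$ and \cref{lem:riccicurvature} shows $\Ric_{\phi\phi} = \sin^2\theta\,\Ric_{\theta\theta}$, we immediately get $G_{\phi\phi} = \sin^2\theta\, G_{\theta\theta}$, so it is enough to compute $G_{\theta\theta}$. Using $g_{\theta\theta}=r^2$ and substituting the formula for $R$, the constant terms $\pm 2r^{-2}(1-\Phi-r\Phi_r)\cdot r^2/2$ cancel against $1-\Phi-(1/2)r\Phi_r-rV_r\Phi$ appropriately, and what remains assembles into $r^2[(V_{rr}+V_r^2+r^{-1}V_r)\Phi + (1/2)r^{-1}\Phi_r + (1/2)V_r\Phi_r]$.

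The main obstacle is purely bookkeeping: each diagonal entry generates six or seven terms involving various products of $V_r,V_{rr},\Phi,\Phi_r,r^{-1},r^{-2}$, and one must chase the signs and the cancellations carefully, especially in $G_{\theta\theta}$ where three distinct cancellations occur between the $R g_{\theta\theta}$ piece and $\Ric_{\theta\theta}$. No new technique is needed beyond the definition of $G$, the diagonality observation, and patient algebra.
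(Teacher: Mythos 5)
Your proposal is correct and follows essentially the same route as the paper: the paper's proof also computes each diagonal component directly from $G_{\lambda\mu}=\Ric_{\lambda\mu}-\tfrac12 R\,g_{\lambda\mu}$ using \cref{lem:riccicurvature,lem:scalarcurvature}, and dismisses the off-diagonal components by the diagonality of $\Ric$ and $g$. The cancellations you describe (including the factorization yielding $-r^{-2}\Phi^{-1}(1-\Phi-2rV_r\Phi)$ and the relation $G_{\phi\phi}=\sin^2\theta\,G_{\theta\theta}$) check out.
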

\begin{proof}
The Einstein curvature tensor is defined by \cref{eq:einsteincurvature}. (Remember that in this dissertation, $\Lambda=0$.) Using \cref{lem:riccicurvature,lem:scalarcurvature}, we have
\begin{align*}
G_{tt} &= \Ric_{tt} - (1/2)Rg_{tt} \\
&= [(V_{rr}+V_r^2+2r^{-1}V_r)e^{2V}\Phi + (1/2)V_re^{2V}\Phi_r] \\
&\qquad - (1/2)[-2(V_{rr}+V_r^2+2r^{-1}V_r)\Phi - V_r\Phi_r + 2r^{-2}(1-\Phi-r\Phi_r)](-e^{2V}) \\
&= r^{-2}e^{2V}(1-\Phi-r\Phi_r)
\intertext{and}
G_{rr} &= \Ric_{rr} - (1/2)Rg_{rr} \\
&= [-(V_{rr}+V_r^2) - (1/2)V_r\Phi^{-1}\Phi_r - r^{-1}\Phi^{-1}\Phi_r] \\
&\qquad - (1/2)[-2(V_{rr}+V_r^2+2r^{-1}V_r)\Phi - V_r\Phi_r + 2r^{-2}(1-\Phi-r\Phi_r)]\Phi^{-1} \\
&= -r^{-2}\Phi^{-1}(1-\Phi-2rV_r\Phi)
\intertext{and}
G_{\theta\theta} &= \Ric_{\theta\theta} - (1/2)Rg_{\theta\theta} \\
&= [1-\Phi - (1/2)r\Phi_r - rV_r\Phi] \\
&\qquad - (1/2)[-2(V_{rr}+V_r^2+2r^{-1}V_r)\Phi - V_r\Phi_r + 2r^{-2}(1-\Phi-r\Phi_r)]r^2 \\
&= r^2[(V_{rr}+V_r^2+r^{-1}V_r)\Phi + (1/2)r^{-1}\Phi_r +  (1/2)V_r\Phi_r]
\intertext{and}
G_{\phi\phi} &= \Ric_{\phi\phi} - (1/2)Rg_{\phi\phi} \\
&= [\sin^2\theta\,(1-\Phi - (1/2)r\Phi_r - rV_r\Phi)] \\
&\qquad - (1/2)[-2(V_{rr}+V_r^2+2r^{-1}V_r)\Phi - V_r\Phi_r + 2r^{-2}(1-\Phi-r\Phi_r)]r^2\sin^2\theta \\
&= r^2\sin^2\theta\,[(V_{rr}+V_r^2+r^{-1}V_r)\Phi + (1/2)r^{-1}\Phi_r +  (1/2)V_r\Phi_r].
\end{align*}
All other components are zero by \cref{lem:riccicurvature} and the the fact that the metric \cref{eqr:metric_Phi} is diagonal.
\end{proof}

Einstein's equation is $G=8\pi T$. For wave dark matter the energy-momentum tensor $T$ is (refer to \cref{eqrr:ekg})
\begin{equation} \label{eq:energymomentumtensor}
T = \frac{df\tensor d\bar{f}+d\bar{f}\tensor df}{\Upsilon^2} - \left(\frac{\abs{df}^2}{\Upsilon^2}+\abs{f}^2\right)g.
\end{equation}
To solve the Einstein equation we need to equate the components of the Einstein curvature tensor with the components of the energy-momentum tensor, e.g. $G_{tt}=8\pi T_{tt}$, $G_{rr}=8\pi T_{rr}$, etc. To do this we need to write down the components of $T$. We are solving the Einstein-Klein-Gordon equations in spherical symmetry, so $f$ is a function of $t$ and $r$ only.
\begin{lemma} \label{lem:energymomentumtensor}
With $f=f(t,r)$, the nonzero components of the energy-momentum tensor \eqref{eq:energymomentumtensor} in the $(t,r,\theta,\phi)$ coordinates are as follows:
\begin{align*}
T_{tt} &= e^{2V}\abs{f}^2 + \Upsilon^{-2}\abs{f_t}^2 + \Upsilon^{-2}e^{2V}\Phi\abs{f_r}^2 \\
T_{tr} = T_{rt} &= \Upsilon^{-2}(f_t\bar{f}_r + \bar{f}_tf_r) \\
T_{rr} &= -\Phi^{-1}\abs{f}^2 + \Upsilon^{-2}e^{-2V}\Phi^{-1}\abs{f_t}^2 + \Upsilon^{-2}\abs{f_r}^2 \\
T_{\theta\theta} &= \Upsilon^{-2}r^2(-\Upsilon^2\abs{f}^2 + e^{-2V}\abs{f_t}^2 - \Phi\abs{f_r}^2) \\
T_{\phi\phi} &= \Upsilon^{-2}r^2\sin^2\theta\,(-\Upsilon^2\abs{f}^2 + e^{-2V}\abs{f_t}^2 - \Phi\abs{f_r}^2).
\end{align*}
\end{lemma}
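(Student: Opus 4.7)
The proof is a direct coordinate computation, so the plan is simply to unpack the definition of $T$ in \cref{eq:energymomentumtensor} and evaluate each component. Since $f=f(t,r)$, we have $df = f_t\,dt + f_r\,dr$ and $d\bar f = \bar f_t\,dt + \bar f_r\,dr$. Consequently the symmetric tensor $df\tensor d\bar f + d\bar f\tensor df$ has nonzero components only in the $(t,r)$-block, namely $2\abs{f_t}^2$ at $(t,t)$, $2\abs{f_r}^2$ at $(r,r)$, and $f_t\bar f_r + \bar f_t f_r$ at $(t,r)$ and $(r,t)$. All components involving $\theta$ or $\phi$ vanish.

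Next I would compute the scalar $\abs{df}^2 = g^{\mu\nu} f_\mu \bar f_\nu$. The metric \eqref{eqr:metric_Phi} is diagonal with inverse $g^{tt}=-e^{-2V}$, $g^{rr}=\Phi$, $g^{\theta\theta}=r^{-2}$, $g^{\phi\phi}=r^{-2}\sin^{-2}\theta$, so
\begin{equation*}
\abs{df}^2 = -e^{-2V}\abs{f_t}^2 + \Phi\abs{f_r}^2.
\end{equation*}
Writing $C := \Upsilon^{-2}\abs{df}^2 + \abs{f}^2$, the formula becomes $T_{\mu\nu} = \Upsilon^{-2}(f_\mu\bar f_\nu + \bar f_\mu f_\nu) - C\,g_{\mu\nu}$, and the remaining work is to substitute $(\mu,\nu)$ in turn.

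For $T_{tt}$, the first term contributes $2\Upsilon^{-2}\abs{f_t}^2$ and the second contributes $-C\cdot(-e^{2V})$; expanding and combining collapses the $\abs{f_t}^2$ coefficients to give exactly $e^{2V}\abs{f}^2 + \Upsilon^{-2}\abs{f_t}^2 + \Upsilon^{-2}e^{2V}\Phi\abs{f_r}^2$, as claimed. For $T_{rr}$ the same calculation with $g_{rr}=\Phi^{-1}$ yields the stated formula after the $\abs{f_r}^2$ terms combine. For $T_{tr}=T_{rt}$ the metric contribution vanishes since $g_{tr}=0$, leaving just $\Upsilon^{-2}(f_t\bar f_r + \bar f_t f_r)$. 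Finally, the $\theta\theta$ and $\phi\phi$ components receive no contribution from the $df\tensor d\bar f + d\bar f\tensor df$ piece (since $f$ has no angular dependence), so $T_{\theta\theta} = -C\cdot r^2$ and $T_{\phi\phi}=-C\cdot r^2\sin^2\theta$, which after substituting $C$ give precisely the stated expressions. All off-diagonal components not already handled vanish because $g_{\mu\nu}$ and $f_\mu\bar f_\nu + \bar f_\mu f_\nu$ both vanish there.

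There is no real obstacle to this lemma; the entire argument is a mechanical substitution. The only mild bookkeeping step is keeping track of the sign from $g^{tt}=-e^{-2V}$ when combining the two terms in $T_{tt}$ and $T_{rr}$, which is where the $\abs{f_t}^2$ coefficients partially cancel and reveal the final coefficient of $\Upsilon^{-2}$ (rather than $2\Upsilon^{-2}$) on the kinetic terms. Once that sign is tracked carefully, reading off each component is immediate.
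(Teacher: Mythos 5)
Your proposal is correct and matches the paper's own proof: both compute $\abs{df}^2 = -e^{-2V}\abs{f_t}^2 + \Phi\abs{f_r}^2$ from the diagonal inverse metric and then read off each component of $T$ by direct substitution, with the off-diagonal and angular components vanishing because $f$ depends only on $t$ and $r$ and the metric is diagonal. Nothing is missing; this is exactly the mechanical computation the paper carries out.
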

\begin{proof}
First, we have
\begin{equation*}
\abs{df}^2 = g^{\lambda\mu}f_\lambda\bar{f}_{\mu} = -e^{-2V}\abs{f_t}^2 + \Phi\abs{f_r}^2.
\end{equation*}
Thus
\begin{align*}
\Upsilon^2T_{tt} &= f_t\bar{f}_t + \bar{f}_tf_t - (-e^{-2V}\abs{f_t}^2 + \Phi\abs{f_r}^2 + \Upsilon^2\abs{f}^2)(-e^{2V}) \\
&= \Upsilon^2e^{2V}\abs{f}^2 + \abs{f_t}^2 + e^{2V}\Phi\abs{f_r}^2
\intertext{and}
\Upsilon^2 T_{tr} = \Upsilon^2T_{rt} &= f_t\bar{f}_r + \bar{f}_tf_r
\intertext{and}
\Upsilon^2 T_{rr} &= f_r\bar{f}_r + \bar{f}_rf_r - (-e^{-2V}\abs{f_t}^2 + \Phi\abs{f_r}^2 + \Upsilon^2\abs{f}^2)\Phi^{-1} \\
&= -\Upsilon^2\Phi^{-1}\abs{f}^2 + e^{-2V}\Phi^{-1}\abs{f_t}^2 + \abs{f_r}^2
\intertext{and}
\Upsilon^2 T_{\theta\theta} &= -(-e^{-2V}\abs{f_t}^2 + \Phi\abs{f_r}^2 + \Upsilon^2\abs{f}^2)r^2 \\
&= r^2(-\Upsilon^2\abs{f}^2 + e^{-2V}\abs{f_t}^2 - \Phi\abs{f_r}^2)
\intertext{and}
\Upsilon^2 T_{\phi\phi} &= -(-e^{-2V}\abs{f_t}^2 + \Phi\abs{f_r}^2 + \Upsilon^2\abs{f}^2)r^2\sin^2\theta \\
&= r^2\sin^2\theta\,(-\Upsilon^2\abs{f}^2 + e^{-2V}\abs{f_t}^2 - \Phi\abs{f_r}^2).
\end{align*}
The other components of $T$ are zero because $f$ depends only on $t$ and $r$ and the metric \eqref{eqr:metric_Phi} is diagonal.
\end{proof}

\begin{theorem} \label{lem:einsteinequation}
In the spacetime with the static metric \eqref{eqr:metric_Phi} and energy-momentum tensor \eqref{eq:energymomentumtensor}, where $f=f(t,r)$ is complex-valued, Einstein's equation \eqref{eqrr:ekg1} reduces to the following four PDEs:
\begin{align}
1-\Phi-r\Phi_r &= 8\pi r^2[\abs{f}^2 + \Upsilon^{-2}e^{-2V}\abs{f_t}^2 + \Upsilon^{-2}\Phi\abs{f_r}^2] \label{eq:einsteinpde_tt}\\
0 &= f_t\bar{f}_r + \bar{f}_tf_r \label{eq:einsteinpde_tr}\\
1-\Phi-2rV_r\Phi &= 8\pi r^2[\abs{f}^2 - \Upsilon^{-2}e^{-2V}\abs{f_t}^2 - \Upsilon^{-2}\Phi\abs{f_r}^2] \label{eq:einsteinpde_rr}\\
\begin{split}
\Upsilon^2r^2[(V_{rr}+V_r^2+r^{-1}V_r)\Phi \\
+ (1/2)r^{-1}\Phi_r +  (1/2)V_r\Phi_r] &= 8\pi r^2(-\Upsilon^2\abs{f}^2 + e^{-2V}\abs{f_t}^2 - \Phi\abs{f_r}^2).
\end{split} \label{eq:einsteinpde_thetatheta}
\end{align}
\end{theorem}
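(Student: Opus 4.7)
The proof is a routine bookkeeping exercise given the two preceding lemmas; the plan is simply to write out Einstein's equation $G_{\lambda\mu} = 8\pi T_{\lambda\mu}$ componentwise, using Lemma \ref{lem:einsteincurvature} on the left-hand side and Lemma \ref{lem:energymomentumtensor} on the right-hand side, and then clear the common factors in each component.

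First I would note that the metric \eqref{eqr:metric_Phi} is diagonal and that the off-diagonal component $G_{tr}$ does not appear in the list of nonzero components of $G$ in Lemma \ref{lem:einsteincurvature}, so $G_{tr} = 0$. Lemma \ref{lem:energymomentumtensor} gives $T_{tr} = \Upsilon^{-2}(f_t\bar{f}_r + \bar{f}_t f_r)$; equating these immediately yields \eqref{eq:einsteinpde_tr}. Next I would handle the three diagonal equations in turn. For the $tt$ component I would multiply both $G_{tt}$ and $8\pi T_{tt}$ by $r^2 e^{-2V}$ to strip off the common factor $r^{-2}e^{2V}$ appearing in $G_{tt}$, which produces \eqref{eq:einsteinpde_tt}. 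For the $rr$ component I would multiply through by $-r^2\Phi$ to eliminate the $-r^{-2}\Phi^{-1}$ in $G_{rr}$ and simultaneously cancel the $\Phi^{-1}$ sitting in two of the three terms of $T_{rr}$, arriving at \eqref{eq:einsteinpde_rr}. For the $\theta\theta$ component, multiplying through by $\Upsilon^2$ puts everything in the stated form \eqref{eq:einsteinpde_thetatheta}.

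Finally I would observe that the $\phi\phi$ equation is redundant: both $G_{\phi\phi}$ and $8\pi T_{\phi\phi}$ differ from their $\theta\theta$ counterparts only by the overall positive factor $\sin^2\theta$, so dividing both sides of $G_{\phi\phi} = 8\pi T_{\phi\phi}$ by $\sin^2\theta$ recovers the $\theta\theta$ equation. All other components of $G$ and $T$ vanish by the diagonal structure of the metric and the fact that $f$ depends only on $t$ and $r$, so no further equations arise. There is no real obstacle here beyond careful accounting of the factors of $e^{2V}$, $\Phi$, $r$, and $\Upsilon$; once the two preceding lemmas are in hand, the theorem reduces to four one-line algebraic manipulations.
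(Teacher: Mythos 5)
Your proposal is correct and matches the paper's proof: the paper likewise just sets $G = 8\pi T$ componentwise using \cref{lem:einsteincurvature,lem:energymomentumtensor}, clears the obvious factors, and notes that the $\theta\theta$ and $\phi\phi$ components give identical equations (the $tr$ equation arising, as you say, because $G_{tr}=0$ while $T_{tr}\neq 0$ in general). Your factor bookkeeping in each of the four components checks out, so there is nothing to add.
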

\begin{proof}
Using the results of \cref{lem:einsteincurvature,lem:energymomentumtensor}, set $G=8\pi T$ and write down the resulting PDEs. Note that equations coming from equating the $\theta\theta$ and $\phi\phi$ components are identical.
\end{proof}

In the Einstein-Klein-Gordon system, the Einstein equation \eqref{eqrr:ekg1} is coupled to the Klein-Gordon equation \eqref{eqrr:ekg2}. The following lemma is the partner to \cref{lem:einsteinequation}.
\begin{theorem} \label{lem:kleingordonequation}
In the spacetime with the static metric \eqref{eqr:metric_Phi}, where $f=f(t,r)$ is complex-valued, the Klein-Gordon equation \eqref{eqrr:ekg2} reduces to the following PDE:
\begin{equation} \label{eq:kleingordonpde}
-e^{-2V}f_{tt} + V_r\Phi f_r + (1/2)\Phi_r f_r + 2r^{-1}\Phi f_r + \Phi f_{rr} = \Upsilon^2f.
\end{equation}
\end{theorem}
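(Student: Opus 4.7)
The plan is to compute the d'Alembertian $\Box f$ directly in coordinates and then set $\Box f = \Upsilon^2 f$. For a scalar function on a semi-Riemannian manifold, the most convenient formula is
\begin{equation*}
\Box f = \frac{1}{\sqrt{\abs{g}}}\,\d_\mu\!\left(\sqrt{\abs{g}}\,g^{\mu\nu}\d_\nu f\right),
\end{equation*}
where $\abs{g}$ denotes the absolute value of the determinant of the metric.

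First I would read off the determinant from the diagonal form of the metric \eqref{eqr:metric_Phi}: $\abs{g} = e^{2V}\Phi^{-1}r^4\sin^2\theta$, so $\sqrt{\abs{g}} = e^V\Phi^{-1/2}r^2\sin\theta$. The inverse metric is $g^{tt}=-e^{-2V}$, $g^{rr}=\Phi$, $g^{\theta\theta}=r^{-2}$, $g^{\phi\phi}=r^{-2}\sin^{-2}\theta$. Since $f=f(t,r)$, only the $\mu=t$ and $\mu=r$ terms in the sum contribute.

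Next I would compute the two nonzero terms separately. For the time term, every factor in $\sqrt{\abs{g}}\,g^{tt}=-e^{-V}\Phi^{-1/2}r^2\sin\theta$ is independent of $t$, so $\d_t$ pulls only through $f_t$, yielding $-e^{-V}\Phi^{-1/2}r^2\sin\theta\,f_{tt}$; dividing by $\sqrt{\abs{g}}$ gives $-e^{-2V}f_{tt}$. For the radial term, $\sqrt{\abs{g}}\,g^{rr}f_r = e^V\Phi^{1/2}r^2\sin\theta\,f_r$, and the product rule on the $r$-derivative produces four contributions, from differentiating $e^V$, $\Phi^{1/2}$, $r^2$, and $f_r$ respectively; after dividing by $\sqrt{\abs{g}}=e^V\Phi^{-1/2}r^2\sin\theta$ these collapse neatly to $V_r\Phi f_r$, $\tfrac12\Phi_r f_r$, $2r^{-1}\Phi f_r$, and $\Phi f_{rr}$. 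Summing the time and radial contributions and equating with $\Upsilon^2 f$ gives \eqref{eq:kleingordonpde}.

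There is no real obstacle here beyond organized bookkeeping: the only thing that could go wrong is a sign error in $g^{tt}$ or a mishandled factor of $\Phi^{\pm 1/2}$ when applying the product rule. As a consistency check, one could instead use the Christoffel symbols from \cref{lem:christoffel} and the formula $\Box f = g^{\mu\nu}(f_{,\mu\nu}-\Gamma_{\mu\nu}^\rho f_{,\rho})$; only $\mu\nu\in\{tt,rr,\theta\theta,\phi\phi\}$ survive because the metric is diagonal and $f_{,\theta}=f_{,\phi}=0$, and the same four radial terms should reappear from the $\Gamma^r$'s in $g^{rr}\Gamma^r_{rr}$, $g^{\theta\theta}\Gamma^r_{\theta\theta}$, $g^{\phi\phi}\Gamma^r_{\phi\phi}$, and $g^{tt}\Gamma^t_{tr}$ combined with the second derivatives from $g^{tt}f_{,tt}$ and $g^{rr}f_{,rr}$. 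Either route is routine.
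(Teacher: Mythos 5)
Your proposal is correct and follows essentially the same route as the paper: the paper's proof also uses the coordinate formula $\Box f = \abs{g}^{-1/2}\d_\lambda(\abs{g}^{1/2}g^{\lambda\mu}\d_\mu f)$ with $\abs{g}=e^{2V}\Phi^{-1}r^4\sin^2\theta$ and carries out exactly the same bookkeeping on the $t$ and $r$ terms. The Christoffel-symbol consistency check is a nice extra, but the core argument matches the paper's.
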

\begin{proof}
We use the well-known coordinate expression for the d'Alembertian
\begin{equation*}
\Box f = \abs{g}^{-1/2}\d_\lambda(\abs{g}^{1/2}g^{\lambda\mu}\d_\nu f)
\end{equation*}
where $\abs{g}$ denotes the absolute value of the determinant of the matrix representing $g$. For the metric \eqref{eqr:metric_Phi}, $\abs{g}=e^{2V}\Phi^{-1}r^4\sin^2\theta$. Thus
\begin{align*}
\Box f &= (e^{-V}\Phi^{1/2}r^{-2}\sin^{-1}\theta)[(e^V\Phi^{-1/2}r^2\sin\theta\cdot -e^{-2V}f_t)_t + (e^V\Phi^{-1/2}r^2\sin\theta\cdot \Phi f_r)_r] \\
&= (e^{-V}\Phi^{1/2}r^{-2}\sin^{-1}\theta)[-e^{-V}\Phi^{-1/2}r^2\sin\theta\,f_{tt} + V_re^V\Phi^{1/2}r^2\sin\theta\,f_r \\
&\qquad + (1/2)e^V\Phi^{-1/2}\Phi_r r^2\sin\theta\,f_r + 2e^V\Phi^{1/2}r\sin\theta\,f_r + e^V\Phi^{1/2}r^2\sin\theta\,f_{rr}] \\
&= -e^{-2V}f_{tt} + V_r\Phi f_r + (1/2)\Phi_r f_r + 2r^{-1}\Phi f_r + \Phi f_{rr}.
\end{align*}
The result follows.
\end{proof}

We have shown that to solve the Einstein-Klein-Gordon system in the metric \eqref{eqr:metric_Phi} with $f=f(t,r)$, it suffices to solve \cref{eq:einsteinpde_tt,eq:einsteinpde_tr,eq:einsteinpde_rr,eq:einsteinpde_thetatheta,eq:kleingordonpde}. This system of equations is overdetermined. We have the following theorem:
\begin{theorem} \label{thm:ekgpdes}
In the spacetime with the static metric \eqref{eqr:metric_Phi}, where $f=f(t,r)$ is complex-valued, to solve the Einstein-Klein-Gordon system \eqref{eqrr:ekg} it suffices to solve \cref{eq:einsteinpde_tt,eq:einsteinpde_rr,eq:kleingordonpde}, which we repeat here:
\begin{gather}
1-\Phi-r\Phi_r = 8\pi r^2[\abs{f}^2 + \Upsilon^{-2}e^{-2V}\abs{f_t}^2 + \Upsilon^{-2}\Phi\abs{f_r}^2] \tag{\ref{eq:einsteinpde_tt}}\\
1-\Phi-2rV_r\Phi = 8\pi r^2[\abs{f}^2 - \Upsilon^{-2}e^{-2V}\abs{f_t}^2 - \Upsilon^{-2}\Phi\abs{f_r}^2] \tag{\ref{eq:einsteinpde_rr}}\\
-e^{-2V}f_{tt} + V_r\Phi f_r + (1/2)\Phi_r f_r + 2r^{-1}\Phi f_r + \Phi f_{rr} = \Upsilon^2f. \tag{\ref{eq:kleingordonpde}}
\end{gather}
\end{theorem}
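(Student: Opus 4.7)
The plan is to deduce the two missing Einstein components \eqref{eq:einsteinpde_tr} and \eqref{eq:einsteinpde_thetatheta} from the three hypothesized PDEs via the contracted Bianchi identity. I would define $E_{\mu\nu} := G_{\mu\nu} - 8\pi T_{\mu\nu}$, so that \eqref{eq:einsteinpde_tt} and \eqref{eq:einsteinpde_rr} translate into the statements $E_{tt} = 0$ and $E_{rr} = 0$. By spherical symmetry of the metric and the $(t,r)$-dependence of $f$, the only remaining possibly nonzero components of $E$ are $E_{tr}=E_{rt}$, $E_{\theta\theta}$, and $E_{\phi\phi}=\sin^2\theta\,E_{\theta\theta}$, so the task reduces to showing $E_{tr} = 0$ and $E_{\theta\theta} = 0$. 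The Bianchi identity gives $\nabla^\mu G_{\mu\nu}=0$ automatically, and a standard calculation shows the Klein-Gordon equation \eqref{eq:kleingordonpde} (together with its complex conjugate) implies $\nabla^\mu T_{\mu\nu} = 0$ for the energy-momentum tensor \eqref{eq:energymomentumtensor}. This calculation uses $\nabla_\mu\nabla_\nu f = \nabla_\nu\nabla_\mu f$ for the scalar $f$ and the cancellation of the $\abs{f}^2 g_{\mu\nu}$ contribution against the $\Box f = \Upsilon^2 f$ terms. Combining both facts yields $\nabla^\mu E_{\mu\nu}=0$.

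Next, I would expand the identities $\nabla^\mu E_{\mu t} = 0$ and $\nabla^\mu E_{\mu r}=0$ in the coordinates of \eqref{eqr:metric_Phi}, using the Christoffel symbols tabulated in \cref{lem:christoffel}. With so many components of $E$ already zero (or related to one another by spherical symmetry), these two identities should collapse dramatically. Specifically, I expect $\nabla^\mu E_{\mu t} = 0$ to reduce, after substituting $E_{tt} = 0$, to a first-order linear ODE in $r$ for $E_{rt}$ of the schematic form
\begin{equation*}
\Phi\,\d_r E_{rt} + \bigl[V_r\Phi + \tfrac{1}{2}\Phi_r + 2r^{-1}\Phi\bigr]E_{rt} = 0,
\end{equation*}
with $t$ entering only parametrically, and $\nabla^\mu E_{\mu r} = 0$ to reduce, after substituting $E_{rr} = 0$, to an algebraic-style relation of the form $e^{-2V}\d_t E_{rt} + 2r^{-3}E_{\theta\theta} = 0$. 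The identities for $\nu = \theta, \phi$ should give no new information due to spherical symmetry.

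These two equations close the argument. The first is homogeneous and linear in $E_{rt}$; the $2/r$ term dominates near $r = 0$, forcing any finite solution to behave like $\text{const}/r^2$ near the origin, so smoothness of $E$ at $r=0$ (a consequence of the $\R^4$ topology together with the regularity of $g$ and $f$ imposed by \cref{eq:Minit,eq:Finit,eq:Frinit}) forces the constant to vanish and hence $E_{rt}\equiv 0$. Feeding this into the second relation immediately gives $E_{\theta\theta}\equiv 0$, and then $E_{\phi\phi} = \sin^2\theta\,E_{\theta\theta} = 0$ as well. This establishes \eqref{eq:einsteinpde_tr} and \eqref{eq:einsteinpde_thetatheta}.

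The main obstacle, I expect, is the bookkeeping: verifying $\nabla^\mu T_{\mu\nu}=0$ directly from the Klein-Gordon equation is routine but requires careful tracking of the complex conjugate terms and the trace piece, and the explicit expansion of the two divergence identities—while not conceptually hard given \cref{lem:christoffel}—involves many Christoffel contractions that must all land in the right places for the collapse to $E_{rt}$ and $E_{\theta\theta}$ to happen cleanly. The regularity argument at $r=0$ also deserves a brief justification appealing to the ambient smoothness of the solution on $\R^4$.
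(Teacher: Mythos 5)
Your proposal is correct, and it reaches the result by a genuinely different organization than the paper, whose proof is a direct computation: the paper adds and subtracts \eqref{eq:einsteinpde_tt} and \eqref{eq:einsteinpde_rr}, differentiates in $t$, and uses \eqref{eq:kleingordonpde} to show that $r^2e^V\Phi^{1/2}(f_t\cc{f_r}+\cc{f_t}f_r)$ is independent of $r$, hence zero by letting $r\to 0$, which gives \eqref{eq:einsteinpde_tr}; it then differentiates \eqref{eq:einsteinpde_rr} in $r$ and combines with the Klein-Gordon equation and \eqref{eq:einsteinpde_tr} to obtain \eqref{eq:einsteinpde_thetatheta}. You instead package everything as the contracted Bianchi identity plus the standard fact that \eqref{eq:kleingordonpde} forces $\nabla^\mu T_{\mu\nu}=0$ for the tensor \eqref{eq:energymomentumtensor}, and then expand $\nabla^\mu E_{\mu t}=0$ and $\nabla^\mu E_{\mu r}=0$. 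Your schematic reductions check out against \cref{lem:christoffel}: the $t$-identity is exactly $\Phi\,\d_r E_{rt}+[V_r\Phi+\tfrac12\Phi_r+2r^{-1}\Phi]E_{rt}=0$, which integrates to say $r^2e^V\Phi^{1/2}E_{rt}$ is $r$-independent (the same conserved quantity as the paper's, since $G_{tr}=0$ makes $E_{tr}$ proportional to $f_t\cc{f_r}+\cc{f_t}f_r$), the $r$-identity is $e^{-2V}\d_t E_{tr}+2r^{-3}E_{\theta\theta}=0$, and the $\theta,\phi$ identities are vacuous. So in substance the two arguments manipulate the same identities; yours makes explicit that the paper's calculation is a hand verification of energy-momentum conservation in spherical symmetry, which buys conceptual clarity, less ansatz-specific bookkeeping, and immediate transfer to other conserved matter models, while the paper's route stays self-contained at the level of elementary manipulations and never needs $\nabla^\mu T_{\mu\nu}=0$ as a separate lemma. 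One small correction: for the vanishing of the integration constant you should not cite \cref{eq:Minit,eq:Finit,eq:Frinit}, which are initial conditions for the later static-state ODEs; what is actually needed (and what the paper implicitly uses when "taking the limit as $r\to 0$") is just boundedness of $E_{tr}$ at a regular center together with $\Phi\to 1$ and $V$ bounded there, so that $r^2e^V\Phi^{1/2}\to 0$ while any nonzero solution of your ODE blows up like $r^{-2}$.
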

\begin{proof}
Assuming \cref{eq:einsteinpde_tt,eq:einsteinpde_rr,eq:kleingordonpde} hold, we show \cref{eq:einsteinpde_tr} holds. By adding and subtracting \cref{eq:einsteinpde_tt,eq:einsteinpde_rr}, we get
\begin{align}
1-\Phi - (1/2)r\Phi_r - rV_r\Phi &= 8\pi r^2\abs{f}^2 \label{eq:add} \\
-(1/2)r\Phi_r + rV_r\Phi &= 8\pi r^2\Upsilon^{-2}[e^{-2V}\abs{f_t}^2+\Phi\abs{f_r}^2]. \label{eq:subtract}
\end{align}
Differentiating \cref{eq:add,eq:subtract} with respect to $t$, we get
\begin{align}
0 &= (\abs{f}^2)_t = f_t\cc{f} + \cc{f_t}f \label{eq:tderiv1}\\
0 &= [e^{-2V}\abs{f_t}^2+\Phi\abs{f_r}^2]_t = e^{-2V}(f_{tt}\cc{f_t}+\cc{f_{tt}}f_t) + \Phi(f_{tr}\cc{f_r}+\cc{f_{tr}}f_r). \label{eq:tderiv2}
\end{align}
Using \cref{eq:kleingordonpde,eq:tderiv1,eq:tderiv2}, we have
\begin{align*}
\Phi(f_t\cc{f_r}+\cc{f_t}f_r)_r &= \Phi(f_{tr}\cc{f_r}+\cc{f_{tr}}f_r) + \Phi(f_t\cc{f_{rr}}+\cc{f_t}f_{rr}) \\
&= -e^{-2V}(f_{tt}\cc{f_t}+\cc{f_{tt}}f_t) + \Phi(f_t\cc{f_{rr}}+\cc{f_t}f_{rr}) \\
&= (-e^{-2V}f_{tt}+\Phi f_{rr})\cc{f_t} + (\cc{-e^{-2V}f_{tt}+\Phi f_{rr}})f_t \\
&= \Upsilon^2(f\cc{f_t}+\cc{f}f_t) - (V_r\Phi+(1/2)\Phi_r+2r^{-1}\Phi)(f_r\cc{f_t}+\cc{f_r}f_t) \\
&= -(V_r\Phi+(1/2)\Phi_r+2r^{-1}\Phi)(f_r\cc{f_t}+\cc{f_r}f_t).
\end{align*}
Thus
\begin{equation*}
(f_t\cc{f_r}+\cc{f_t}f_r)_r = -(V_r+2r^{-1}+(1/2)\Phi^{-1}\Phi_r)(f_r\cc{f_t}+\cc{f_r}f_t).
\end{equation*}
Using the equation just obtained, we then have
\begin{align*}
[r^2e^V\Phi^{1/2}(f_t\cc{f_r}+\cc{f_t}f_r)]_r &= 2re^V\Phi^{1/2}(f_t\cc{f_r}+\cc{f_t}f_r) + r^2V_re^V\Phi^{1/2}(f_t\cc{f_r}+\cc{f_t}f_r) \\
&\qquad + (1/2)r^2e^V\Phi^{-1/2}\Phi_r(f_t\cc{f_r}+\cc{f_t}f_r) \\
&\qquad - r^2e^V\Phi^{1/2}(V_r+2r^{-1}+(1/2)\Phi^{-1}\Phi_r)(f_r\cc{f_t}+\cc{f_r}f_t) \\
&= 0.
\end{align*}
Therefore for each fixed $t$ the function $r^2e^V\Phi^{1/2}(f_t\cc{f_r}+\cc{f_t}f_r)$ is a constant. Taking the limit as $r\to 0$ shows this constant must be zero. Since $r^2e^V\Phi^{1/2}$ is nonzero for all $r>0$, we conclude that $f_t\cc{f_r}+\cc{f_t}f_r=0$, which is \cref{eq:einsteinpde_tr}.

Now assuming \cref{eq:einsteinpde_tt,eq:einsteinpde_rr,eq:kleingordonpde,eq:einsteinpde_tr} hold, we show \cref{eq:einsteinpde_thetatheta} holds. Differentiating \cref{eq:einsteinpde_rr} with respect to $r$, we obtain
\begin{multline*}
-\Phi_r - 2V_r\Phi - 2rV_{rr}\Phi - 2rV_r\Phi_r = \\
16\pi r[\abs{f}^2-\Upsilon^{-2}e^{-2V}\abs{f_t}^2-\Upsilon^{-2}\Phi\abs{f_r}^2] + 8\pi r^2[f_r\cc{f}+\cc{f_r}f+2\Upsilon^{-2}V_re^{-2V}\abs{f_t}^2 \\
\qquad -\Upsilon^{-2}e^{-2V}(f_{tr}\cc{f_t}+\cc{f_{tr}}f_t)-\Upsilon^{-2}\Phi_r\abs{f_r}^2-\Upsilon^{-2}\Phi(f_{rr}\cc{f_r}+\cc{f_{rr}}f_r)].
\end{multline*}
Multiplying through by $-(1/2)\Upsilon^2r$ gives
\begin{multline} \label{eq:temp1}
\Upsilon^2r^2[V_{rr}\Phi + r^{-1}V_r\Phi + V_r\Phi_r + (1/2)r^{-1}\Phi_r] = \\
8\pi r^2[-\Upsilon^2\abs{f}^2+e^{-2V}\abs{f_t}^2+\Phi\abs{f_r}^2] - 4\pi r^3[\Upsilon^2(f_r\cc{f}+\cc{f_r}f)+2V_re^{-2V}\abs{f_t}^2 \\
\qquad -e^{-2V}(f_{tr}\cc{f_t}+\cc{f_{tr}}f_t)-\Phi_r\abs{f_r}^2-\Phi(f_{rr}\cc{f_r}+\cc{f_{rr}}f_r)].
\end{multline}
Multiplying \cref{eq:subtract} by $\Upsilon^2rV_r$ gives
\begin{equation} \label{eq:temp2}
\Upsilon^2r^2(V_r^2\Phi-(1/2)V_r\Phi_r) = 4\pi r^3[2V_re^{-2V}\abs{f_t}^2 + 2V_r\Phi\abs{f_r}^2]
\end{equation}
Adding \cref{eq:temp1,eq:temp2}, we get
\begin{multline*}
\Upsilon^2r^2[(V_{rr}+V_r^2+r^{-1}V_r)\Phi + (1/2)r^{-1}\Phi_r + (1/2)V_r\Phi_r] = \\
8\pi r^2[-\Upsilon^2\abs{f}^2+e^{-2V}\abs{f_t}^2-\Phi\abs{f_r}^2] - 4\pi r^3[\Upsilon^2(f_r\cc{f}+\cc{f_r}f)-e^{-2V}(f_{tr}\cc{f_t}+\cc{f_{tr}}f_t) \\
- \Phi_r\abs{f_r}^2 - 2V_r\Phi\abs{f_r}^2 - \Phi(f_{rr}\cc{f_r}+\cc{f_{rr}}f_r) - 4r^{-1}\Phi\abs{f_r}^2].
\end{multline*}
Comparing with \cref{eq:einsteinpde_thetatheta}, we see that it now suffices to show that the last expression in square brackets above is zero. We can rewrite this expression as
\begin{multline}
(\Upsilon^2 f - (1/2)\Phi_rf_r - V_r\Phi f_r - \Phi f_{rr} - 2r^{-1}\Phi f_r)\cc{f_r} \\
+ (\cc{\Upsilon^2 f - (1/2)\Phi_rf_r - V_r\Phi f_r - \Phi f_{rr} - 2r^{-1}\Phi f_r})f_r \\
- e^{-2V}(f_{tr}\cc{f_t}+\cc{f_{tr}}f_t).
\end{multline}
Using \cref{eq:kleingordonpde}, this expression becomes
\begin{equation*}
-e^{-2V}(f_{tt}\cc{f_r}+\cc{f_{tt}}f_r) - e^{-2V}(f_{tr}\cc{f_t}+\cc{f_{tr}}f_t) = -e^{-2V}(f_t\cc{f_r}+\cc{f_t}f_r)_t.
\end{equation*}
This last quantity is zero as desired by \cref{eq:einsteinpde_tr}. This completes the proof.
\end{proof}

The goal of this appendix is the following theorem:
\begin{theorem} \label{thm:ekgodes}
In the spacetime with the static metric \eqref{eqr:metric}, where $f(t,r)=F(r)e^{i\omega t}$ with $F$ real-valued and $\omega$ real, to solve the Einstein-Klein-Gordon system \eqref{eqrr:ekg} it suffices to solve the following three coupled ODEs:
\begin{gather}
M_r = 4\pi r^2\cdot\frac{1}{\Upsilon^2}\left[ \left(\Upsilon^2+\omega^2e^{-2V}\right)F^2 + \Phi F_r^2\right] \\
\Phi V_r = \frac{M}{r^2} - 4\pi r\cdot\frac{1}{\Upsilon^2}\left[ \left(\Upsilon^2-\omega^2 e^{-2V}\right)F^2 - \Phi F_r^2 \right] \\
F_{rr} + \frac{2}{r} F_r + V_rF_r + \frac12\frac{\Phi_r}{\Phi}F_r = \Phi^{-1}\left(\Upsilon^2-\omega^2e^{-2V}\right)F.
\end{gather}
\end{theorem}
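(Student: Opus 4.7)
The plan is to invoke the prior reduction and then specialize. By \cref{thm:ekgpdes}, it suffices to solve the three PDEs \eqref{eq:einsteinpde_tt}, \eqref{eq:einsteinpde_rr}, and \eqref{eq:kleingordonpde}. My first step is to substitute the static-state ansatz $f(t,r)=F(r)e^{i\omega t}$ into each of these. Because $F$ is real, the modulus-squared quantities collapse cleanly: $\abs{f}^2=F^2$, $\abs{f_t}^2=\omega^2 F^2$, $\abs{f_r}^2=F_r^2$, and $f_{tt}=-\omega^2 f$. In particular, the $t$-dependence either drops out identically (for the two Einstein components, which are purely in terms of moduli) or appears as a global factor $e^{i\omega t}$ in the Klein-Gordon equation that can be divided away. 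This simultaneously verifies \emph{a posteriori} that the ansatz is self-consistent.

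The second step is to re-express the metric-geometric quantities in terms of $M$ using $\Phi = 1 - 2M/r$. A short computation gives $\Phi_r = -2M_r/r + 2M/r^2$, so that
\begin{equation*}
1-\Phi-r\Phi_r = 2M_r, \qquad 1-\Phi-2rV_r\Phi = \frac{2M}{r} - 2rV_r\Phi.
\end{equation*}
Substituting these identities and the ansatz into \eqref{eq:einsteinpde_tt} and dividing by $2$ produces the first ODE in the statement (for $M_r$); doing the same with \eqref{eq:einsteinpde_rr} and rearranging for $\Phi V_r$ produces the second.

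The third step handles the Klein-Gordon equation \eqref{eq:kleingordonpde}. Substituting $f_{tt}=-\omega^2 Fe^{i\omega t}$, $f_r=F_r e^{i\omega t}$, and $f_{rr}=F_{rr}e^{i\omega t}$, and then dividing through by $\Phi e^{i\omega t}$, converts the $-e^{-2V}f_{tt}$ term into $+\omega^2 e^{-2V}F$; moving this to the right-hand side together with $\Upsilon^2 F$ yields the claimed factor $\Phi^{-1}(\Upsilon^2 - \omega^2 e^{-2V})F$, while the remaining terms on the left become $F_{rr} + (2/r)F_r + V_r F_r + \tfrac12 (\Phi_r/\Phi)F_r$.

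There is no real obstacle: this is substitution and algebraic rearrangement, resting entirely on the nontrivial reduction already accomplished in \cref{thm:ekgpdes}. The only point worth flagging is that the assumption that $F$ is real (rather than merely complex-valued with a prescribed $t$-dependence) is what allows the moduli to simplify without cross terms and the $e^{i\omega t}$ factors to cancel globally; without this assumption one would pick up additional oscillatory contributions in \eqref{eq:einsteinpde_tt}--\eqref{eq:einsteinpde_rr} and the reduction to a purely $r$-dependent ODE system would fail.
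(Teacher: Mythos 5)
Your proposal is correct and takes essentially the same route as the paper, which likewise just cites \cref{thm:ekgpdes}, substitutes $f=F(r)e^{i\omega t}$, and uses \cref{eqr:Phi} to rewrite things in terms of $M$; your substitutions and rearrangements all check out. (One minor quibble: if $F$ were merely complex, the modulus terms $\abs{f}^2,\abs{f_t}^2,\abs{f_r}^2$ in \eqref{eq:einsteinpde_tt}--\eqref{eq:einsteinpde_rr} would still be $t$-independent, so no oscillatory contributions would appear there; the reality of $F$ mainly ensures the resulting ODE for $F$ is real and, e.g., that \eqref{eq:einsteinpde_tr} holds trivially.)
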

\begin{proof}
These three equations are rearranged versions of \cref{eq:einsteinpde_tt,eq:einsteinpde_rr,eq:kleingordonpde} from \cref{thm:ekgpdes}, with the particular form for $f$ substituted in and with \cref{eqr:Phi} used in places so that the function $M(r)$ appears.
\end{proof}

\cleardoublepage
\normalbaselines 
\addcontentsline{toc}{chapter}{Bibliography} 
\printbibliography

\biography

Andrew Stewart Goetz was born in South Bend, Indiana on September 3, 1987. From 2005--2009 he majored in mathematics at Princeton University and graduated with an A.B. degree. From 2009--2015 he studied mathematics at Duke University and graduated with a Ph.D.

\end{document}